\newcolumntype{A}{ >{$} r <{$} @{} >{${}} l <{$} } %
\newcolumntype{x}[1]{>{\centering\arraybackslash\hspace{0pt}}p{#1}}
\newcommand{\oplabel}[1]{\refstepcounter{equation}(\theequation\ltx@label{#1})}
\newcounter{dummypcb} %
\let\savedpcb\pcb
\RenewDocumentCommand{\pcb}{omm}{
    \IfNoValueTF{#1}
        {\savedpcb{#2}{\pcb@addlabel{#2}#3}}
        {\savedpcb[#1]{#2}{\pcb@addlabel{#2}#3}}
}
\newcommand{\pcb@addlabel}[1]{
    \def\thedummypcb{\unexpanded{\unexpanded{#1}}}
    \refstepcounter{dummypcb}
}
\newcommand{\UB}{\operatorname{UB}}
\newcommand{\for}{\operatorname{ for }}
\newcommand{\free}{\operatorname{free}}
\newcommand{\OPT}{\operatorname{OPT}}
\newcommand{\supp}{\operatorname{supp}}
\title{New Support Size Bounds for Integer Programming,\\ Applied to Makespan Minimization on Uniformly Related Machines}
\date\relax
\author{
  Sebastian Berndt\thanks{University of L{\"u}beck, Institute for Theoretical Computer Science, L{\"u}beck, Germany. \texttt{s.berndt@uni-luebeck.de}} \and 
  Hauke Brinkop\thanks{Kiel University, Kiel, Germany. \texttt{hab@informatik.uni-kiel.de}. Partially supported by the DFG Project \href{https://gepris.dfg.de/gepris/projekt/453769249}{\enquote{\foreignlanguage{ngerman}{Fein-granulare Komplexit{\"a}t und Algorithmen f{\"u}r Scheduling und Packungen}}}, \texttt{JA 612 /25-1} } \and 
  Klaus Jansen\thanks{Kiel University, Kiel, Germany. \texttt{kj@informatik.uni-kiel.de}. Partially supported by the DFG Project \href{https://gepris.dfg.de/gepris/projekt/453769249}{\enquote{\foreignlanguage{ngerman}{Fein-granulare Komplexit{\"a}t und Algorithmen f{\"u}r Scheduling und Packungen}}}, \texttt{JA 612 /25-1}} \and
  Matthias Mnich\thanks{Hamburg University of Technology, Institute for Algorithms and Complexity, Hamburg, Germany. \texttt{matthias.mnich@tuhh.de}} \and 
  Tobias Stamm\thanks{Hamburg University of Technology, Institute for Algorithms and Complexity, Hamburg, Germany. \texttt{tobias.stamm@tuhh.de}}
}
\let\eps=\varepsilon
\begin{document}

\maketitle

\begin{abstract}
  Mixed-integer linear programming (MILP) is at the core of many advanced algorithms for solving fundamental problems in combinatorial optimization.
  The complexity of solving MILPs directly correlates with their support size, which is the minimum number of non-zero integer variables in an optimal solution.
  A hallmark result by Eisenbrand and Shmonin (\emph{Oper.\@ Res.\@ Lett.\@}, 2006) shows that any feasible integer linear program (ILP) has a solution with support size $s\leq 2m\cdot\log(4m\Delta)$, where~$m$ is the number of constraints, and~$\Delta$ is the largest coefficient in any constraint.
  
  Our main combinatorial result are improved support size bounds for ILPs.

  To improve granularity, we analyze for the largest $1$-norm $A_{\max}$ of any column of the constraint matrix, instead of $\Delta$.
  We show a support size upper bound of $s\leq m\cdot(\log(3A_{\max})+\sqrt{\log(A_{\max})})$, by deriving a new bound on the -1 branch of the Lambert $\mathcal{W}$ function.
  Additionally, we provide a lower bound of $m\log(A_{\max})$, proving our result asymptotically optimal.
  Furthermore, we give support bounds of the form $s\leq 2m\cdot\log(1.46A_{\max})$.
  These improve upon the previously best constants by Aliev. et. al. (\emph{SIAM J. Optim.} 2018), because all our upper bounds hold equally with $A_{\max}$ replaced by $\sqrt{m}\Delta$.

  Our main algorithmic result are the fastest known approximation schemes for fundamental scheduling problems, which use the improved support bounds as one ingredient.
  
  First, we design an efficient approximation scheme (EPTAS) for makespan minimization on uniformly related machines ($Q||C_{\max}$).
  Our EPTAS yields a $(1+\eps)$-approximation for $Q||C_{\max}$ on~$N$ jobs in time $2^{\mathcal{O}(1/\eps\log^3(1/\eps)\log(\log(1/\eps)))}+\mathcal{O}(N)$, which improves over the previously fastest algorithm by Jansen, Klein and Verschae (\emph{Math. Oper. Res.}, 2020) with run time $2^{\mathcal{O}(1/\eps\log^4(1/\eps))}+N^{\mathcal{O}(1)}$.
  Arguably, our approximation scheme is also simpler than all previous EPTASes for $Q||C_{\max}$, as we reduce the problem to a novel MILP formulation which greatly benefits from the small support.

  Second, we consider $Q|\mathit{HM}|C_{\max}$, which is the problem $Q||C_{\max}$ with high-multiplicity encoding where \emph{both} jobs and machines are given as few distinct types according to their processing times and speeds.
  Until now, not even a constant-factor approximation was known for this problem.
  We give the first EPTAS for $Q|\mathit{HM}|C_{\max}$,  with run time ${2^{\mathcal{O}(1/\eps\log^3(1/\eps)\log(\log(1/\eps)))}+\langle\mathcal{I}\rangle^{\mathcal{O}(1)}}$. %
  
  Third, we consider the problem $R_KQ||C_{\max}$, where each job can have up to $K$ distinct processing times, depending on the type of machine they get processed on, and machine speeds are uniformly related.
  We give an EPTAS for $R_KQ||C_{\max}$ with run time ${2^{\mathcal{O}(K\log(K)1/\eps\log^3(1/\eps)\log(\log(1/\eps)))}+\mathcal{O}(K N)}$, thereby reducing the previously best exponent $\tilde{\mathcal{O}}(K\cdot 1/\eps^3)$ by Jansen and Maack (\emph{Algorithmica}, 2019) down to $\tilde{\mathcal{O}}(K\cdot 1/\eps)$.
\end{abstract}

\section{Introduction}
\label{sec:introduction}
The {\sc Integer Linear Programming} (ILP) problem is to find an optimal integral solution $\bm{x}^\star\in\mathbb Z^n_{\geq0}$, which minimizes a linear objective $\bm{c}^\intercal \bm{x}$ subject to a system $A\bm{x}=\bm{b}$ of $m$ linear constraints.
In the {\sc Mixed-Integer Linear Programming} (MILP) problem, the solution sought is a pair $(\bm{x}^\star,\bm{y}^\star)\in \mathbb Z^n_{\geq0}\times\mathbb Q^r_{\geq0}$, which minimizes $\bm{c}^\intercal(\bm{x}~\bm{y})$ subject to $A\bm{x} + B\bm{y}=\bm{b}$.
Solving (M)ILPs is at the core of many advanced algorithms for fundamental combinatorial optimization problems.
Very often, the run time of these algorithms scales with the \emph{support size} of the (M)ILPs, which is the smallest number of non-zero entries in an optimal solution~$\bm{x}^\star$ resp.\@ $(\bm{x}^\star,\bm{y}^\star)$.
Thus, support size bounds have found applications all over computer science, for example in scheduling~\cite{BansalOVZ2016,JansenKV2016}, logic~\cite{KuncakR2007,Pratt-Hartmann2008}, and even complexity theory~\cite{HaaseZ2019}.
Therefore, the smaller the support size, the better these results become.
Hence, an important research direction is to prove strong upper bounds on the support size of (M)ILPs.
The original result on support size bounds, which is already finding its way into the standard curriculum of integer programming courses~\cite[Lemma 6.1]{Rothvoss2016} is:
\begin{proposition}[{Eisenbrand, Shmonin~\cite[Thm. 1(ii)]{EisenbrandS2006}}]
\label{thm:eisenbrandshmonin}
  Any feasible and bounded ILP with $m$ constraints has a solution with support size $s\leq 2m\log(4m\Delta)$, where $\Delta$ is the largest absolute value of any entry in the constraint matrix~$A$.
\end{proposition}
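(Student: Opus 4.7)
I would prove the statement by a support-reduction argument: fix an optimal solution $\bm{x}^\star$ of the ILP whose support $s := |\supp(\bm{x}^\star)|$ is smallest among all optima, and show that $s$ cannot exceed $2m\log(4m\Delta)$. The main step is to produce, under the assumption that $s$ is too large, a non-zero integer vector $\bm{d}$ supported on $\supp(\bm{x}^\star)$ with $A\bm{d}=\bm{0}$ and both positive and negative coordinates. Given such a $\bm{d}$, let $\alpha := \min\{x^\star_i : d_i > 0\}$, which is a positive integer because every index $i$ in the support satisfies $x^\star_i \geq 1$. Then $\bm{x}^\star - \alpha\bm{d} \geq \bm{0}$ is feasible with strictly smaller support, and since the ILP is bounded, the feasibility of both $\bm{x}^\star \pm \alpha\bm{d}$ forces $\bm{c}^\intercal\bm{d} = 0$, so $\bm{x}^\star - \alpha\bm{d}$ is still optimal, contradicting minimality of $s$.

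To produce such a $\bm{d}$, I would apply a pigeonhole argument to the $2^s$ subset sums $\bm{\sigma}(S) := \sum_{i\in S}\bm{a}_i$ for $S\subseteq \supp(\bm{x}^\star)$. Each $\bm{\sigma}(S)$ lies in the integer box $\{-s\Delta,\dots,s\Delta\}^m$ of cardinality $(2s\Delta+1)^m$, so whenever $2^s > (2s\Delta+1)^m$ some two distinct subsets $S_1\neq S_2$ satisfy $\bm{\sigma}(S_1) = \bm{\sigma}(S_2)$; passing to symmetric differences yields $\bm{d} := \mathbf{1}_{S_1\setminus S_2} - \mathbf{1}_{S_2\setminus S_1}$, a non-zero $\{-1,0,+1\}$-vector supported on $\supp(\bm{x}^\star)$ with $A\bm{d} = \bm{0}$ and both signs present (assuming $S_1, S_2$ are neither nested nor equal, which I can always arrange by removing a common subset). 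This gives the implicit bound $s \leq m\log_2(2s\Delta+1)$, which I would then convert into the explicit closed form $s \leq 2m\log(4m\Delta)$ by a short case distinction: either $s \leq 8m^2\Delta$, in which case $2s\Delta+1 \leq (4m\Delta)^2$ and the implicit bound gives $s \leq 2m\log_2(4m\Delta)$; or $s > 8m^2\Delta$, which already violates the implicit bound since $\log_2(2s\Delta+1) < s/m$ in that regime.

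The main obstacle I foresee is the bookkeeping needed to extract the clean constants $2m$ and $4m\Delta$ from the raw pigeonhole bound — the naive inequality only gives $s = \mathcal{O}(m\log(m\Delta))$, and tightening this into the precise form in the statement is what the two-case algebraic post-processing above is designed to handle. A second, more conceptual point is that the optimality-preserving step uses boundedness of the ILP essentially: without boundedness, the exchange argument only yields a small-support \emph{feasible} solution, not a small-support \emph{optimum}, because one cannot rule out $\bm{c}^\intercal \bm{d} \neq 0$ via an unbounded improving ray. With these pieces in place, the contradiction to minimality of $s$ closes the proof.
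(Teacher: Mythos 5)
Your proposal is correct and follows essentially the same route the paper itself sketches (in its appendix machinery) for this bound: a pigeonhole argument on the $2^s$ subset sums over the support produces a non-zero $\{-1,0,1\}$ kernel vector whenever $2^s>(2s\Delta+1)^m$, such a vector is ruled out for a minimum-support optimum (the paper uses the equivalent vertex-solution formulation where you use an exchange argument), and the resulting implicit inequality $s\le m\log(2s\Delta+1)$ is resolved to $s\le 2m\log(4m\Delta)$. Two cosmetic slips that do not affect the conclusion: one cannot always arrange that $\bm{d}$ has both signs (if $S_2\subseteq S_1$ the difference is one-signed, but the exchange still works after negating so that $\bm{d}$ has a positive entry and subtracting $\alpha\bm{d}$), and the direction whose feasibility forces $\bm{c}^\intercal\bm{d}=0$ is $\bm{x}^\star+\beta\bm{d}$ with $\beta=\min\{x^\star_i\mid d_i<0\}$ (or an unbounded improving ray when $\bm{d}\ge\bm{0}$), not $\bm{x}^\star\pm\alpha\bm{d}$ for the same $\alpha$.
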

Aliev et al.~\cite{AlievDEOW2018} showed the best general Eisenbrand-Shmonin-style bound $2m\log(2\sqrt{m}\Delta)$ on the support size for any ILP.
For the special cases of positively space spanning matrices~\cite{AlievADO2022}, and in the average case over all right-hand sides~\cite{OertelPW2019}, results on the order of $\mathcal{O}(m)$ have recently been obtained.
One important application of the Eisenbrand-Shmonin bound are efficient polynomial-time approximation schemes (EPTAS) for scheduling problems~\cite{Jansen2010}.
An EPTAS computes, for any problem instance~$\mathcal I$ and any $\eps > 0$, a solution whose value is within $(1+\eps)$ of the optimal solution value in time $f(1/\eps)\langle \mathcal I\rangle^{\mathcal{O}(1)}$, where~$\langle \mathcal I\rangle$ denotes the encoding size of $\mathcal I$.
Several EPTAS were devised for the classical scheduling problem of makespan minimization on uniformly related machines.
This problem is denoted as $Q||C_{\max}$ in Graham's 3-field notation~\cite{GrahamLLK1979}.
The input to $Q||C_{\max}$ is a set~$\mathcal J$ of~$N$ jobs, each of which is characterized by an integer processing time $p_j$, and a set $\mathcal M$ of $M$ machines, each of which is characterized by an integer speed~$s_i$.
The goal is to find an assignment (or schedule) $\sigma:\mathcal{J}\rightarrow\mathcal{M}$ of jobs to machines, which minimizes the maximum completion time $C_{\max}\coloneqq  \max_{i\in\mathcal M}\sum_{j\in\sigma^{-1}(i)}p_j/s_i$ of any machine.
Problem $Q||C_{\max}$ is well-known to be $\mathsf{NP}$-hard, even in the special case of unit speeds $s_1 = \ldots = s_m$, but still approximable to an arbitrary precision, in contrast to the setting of unrelated machines.
The previously fastest EPTAS for $Q||C_{\max}$ is due to Jansen, Klein and Verschae~\cite{JansenKV2016,JansenKV2020}, and runs in time $2^{\mathcal{O}\left(1/\eps\log^4\left(1/\eps\right)\right)}+N^{\mathcal{O}(1)}$.
Since their result first appeared in 2016, it has been an open question whether the exponential dependency on~$1/\eps$ can be improved.
In particular, there is still a significant gap to the best-known lower bound, which shows that a run time of $2^{\mathcal O\left((1/\eps)^{1-\delta}\right)} + N^{\mathcal O(1)}$ is not possible for any $\delta > 0$, assuming the Exponential-Time Hypothesis~\cite{ChenJZ2018}.
It is unknown, whether this gap can be improved to $\delta=0$, even under stronger assumptions such as Gap-ETH~\cite{ManurangsiR2017}.
Further, a gap remains to the best-known run time $2^{\mathcal{O}\left(1/\eps\log\left(1/\eps\right)\right)\log\left(\log\left(1/\eps\right)\right)}+N^{\mathcal{O}(1)}$ for the case of unit speeds $s_{1} = \ldots = s_{m}$~\cite{BerndtDJR2022}, which correlates with the interesting question about the gap in complexity between solving ILPs and MILPs with few constraints.

\subsection{Our Results}
\label{sec:ourresult}

\subsubsection{Combinatorial results}
Our main combinatorial result is an improvement on the fundamental Eisenbrand-Shmonin bound on the support size of any ILP with regard to a new parameter.
Namely, we replace the dependence on the largest absolute value $\Delta$ of any entry in the constraint matrix $A$ by the parameter $A_{\max}\coloneqq\max_i\|A_i\|_1$, the largest $1$-norm of any column $A_i$ in $A$:
\begin{restatable}{theorem}{NewSupportBound}
\label{thm:newsupportbound}
  Any feasible bounded ILP with $m$-row constraint matrix $A$ and largest column $1$-norm $A_{\max}$ has an optimal solution with support size $s\leq m\cdot(\log(3A_{\max})+\sqrt{\log(A_{\max})})$.
\end{restatable}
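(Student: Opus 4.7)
The plan is to follow the classical Eisenbrand–Shmonin contradiction argument, but to replace their pigeonhole count over $\Delta$-bounded entries by a tighter box count involving the column $1$-norms, and then to resolve the resulting implicit inequality via the $-1$ branch of the Lambert $\mathcal W$ function.

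Let $\bm x^\star$ be an optimal solution with $J = \supp(\bm x^\star)$ and $s = |J|$. The first step is the standard support-reduction reduction: if one can exhibit a nonzero $\bm z \in \{-1,0,1\}^n$ with $\supp(\bm z) \subseteq J$ and $A\bm z = \bm 0$, then both $\bm x^\star \pm \alpha\bm z$ remain non-negative and feasible for every integer $\alpha$ with $0 \le \alpha \le \min_{i:\,z_i\neq 0} x^\star_i$, so optimality forces $\bm c^{\intercal}\bm z = 0$, and choosing $\alpha$ maximal in the appropriate sign annihilates at least one coordinate of $\bm x^\star$. Iterating this contradicts the minimality of $s$, so it suffices to produce such a $\bm z$ whenever $s$ exceeds the target bound.

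To exhibit $\bm z$, I would count distinct subset sums $\bm v(S) \coloneqq \sum_{i\in S}A_i$ for $S \subseteq J$. Coordinate $j$ of $\bm v(S)$ ranges over an integer interval of length at most $\sum_{i\in J}|A_{ji}|$, so the number of distinct sums is bounded by $\prod_{j=1}^m\bigl(\sum_{i\in J}|A_{ji}|+1\bigr)$. Combining AM–GM with $\sum_j\sum_{i\in J}|A_{ji}| = \sum_{i\in J}\|A_i\|_1 \le s A_{\max}$ bounds this product by $\bigl(1 + s A_{\max}/m\bigr)^m$. As soon as $2^s$ strictly exceeds this, pigeonhole produces two distinct subsets $S_1 \neq S_2$ with the same sum, whence $\bm z \coloneqq \chi_{S_1} - \chi_{S_2}$ is the required $\{-1,0,1\}$-kernel vector.

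Combining the two steps, no such $\bm z$ can exist at the optimal support only when $s \le m\log\bigl(1 + s A_{\max}/m\bigr)$. Substituting $t = s/m$ turns this into the implicit inequality $(2^t - 1)/t \le A_{\max}$, which is of Lambert $\mathcal W$-type on its $-1$ branch. Here lies the main technical difficulty: naive fixed-point iteration $t \mapsto \log(1 + tA_{\max})$ only yields an implicit $\log\log A_{\max}$ correction that is awkward to transform into a monotone, fully explicit upper bound valid uniformly for all $A_{\max}\ge 1$. The main obstacle is therefore to derive a clean, uniform upper bound on $\mathcal W_{-1}$; I would prove this bound separately and then verify by direct substitution that the value $t = \log(3A_{\max}) + \sqrt{\log A_{\max}}$ already satisfies $2^t \ge 1 + tA_{\max}$, exploiting that $2^{\sqrt L}$ dominates every polylogarithmic function of $L = \log A_{\max}$. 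Multiplying through by $m$ then yields the claimed bound $s \le m\bigl(\log(3A_{\max}) + \sqrt{\log A_{\max}}\bigr)$.
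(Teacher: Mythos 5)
Your proposal is correct, but it reaches the bound by a genuinely different route than the paper. The paper derives its key intermediate inequality $s/m\leq 1+\log(A_{\max})+\log(s/m)/2$ (\autoref{cor:SuppIneq}) from the determinant-based support bound of Aliev et al.\ (\autoref{prop:Supp}, which rests on Siegel's Lemma) combined with a strengthened Hadamard inequality and a convexity argument (\autoref{lem:PDHada}), and then resolves the implicit inequality through an explicit parametric upper bound on $\mathcal{W}_{-1}$ (\autoref{bound:LambertW}). You instead stay entirely within the elementary Eisenbrand--Shmonin pigeonhole framework; your per-coordinate product count $\prod_j(\sum_{i\in J}|A_{ji}|+1)\leq(1+sA_{\max}/m)^m$ via AM--GM is in fact sharper than the paper's own elementary variant (\autoref{lem:supp:newbound:easy}), which loses an additive $1+\log(e)$ through the stars-and-bars estimate, and this sharpening is exactly what makes the elementary route strong enough to match the final bound that the paper only obtains via the deeper determinant machinery. (For $s/m>4$ your inequality $s/m\leq\log(1+(s/m)A_{\max})$ is weaker than \autoref{cor:SuppIneq}, but it still suffices.) Your last step is only sketched, yet the plan is sound and the Lambert-$\mathcal{W}$ machinery can indeed be bypassed: since $t\mapsto\log(1+tA_{\max})-t$ is concave and vanishes at $t=0$, it suffices to verify $2^{t^\ast}>1+t^\ast A_{\max}$ at $t^\ast=\log(3A_{\max})+\sqrt{L}$ with $L=\log(A_{\max})$, which reduces (using $A_{\max}\geq1$) to $3\cdot 2^{\sqrt{L}}>1+\log(3)+L+\sqrt{L}$; writing $u=\sqrt{L}$, the function $3\cdot 2^{u}-u^{2}-u-1-\log(3)$ is positive at $u=0$ and has positive derivative for all $u\geq0$, so the check goes through uniformly. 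A full write-up should include this one-line verification explicitly, but there is no gap in the argument.
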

Additionally, we derive support size bounds of the form $s\leq 2m\cdot(1.46A_{\max})$.
Because all our upper bounds can equally be derived with $A_{\max}$ replaced by $\sqrt{m}\Delta$, they also improve on the constants of Aliev et. al.~\cite{AlievDEOW2018}.

For the parameter $A_{\max}$, we can show an asymptotically matching lower bound on the support size of an optimal solution:
\begin{restatable}{theorem}{NewLowerBound}
\label{thm:newlowerbound}
  For any $m\in\mathbb{Z}_{\geq0}$ and any $A_{\max}\in\mathbb{Z}_{\geq 1}$, there is an ILP with $m$ constraints, $n\coloneqq m\cdot(\lfloor\log(A_{\max})\rfloor+1)\geq m\cdot\log(A_{\max})$ variables, and maximum $1$-norm $A_{\max}$ of column vectors in the constraint matrix, whose unique optimal solution is the $\bm{1}$-vector.
\end{restatable}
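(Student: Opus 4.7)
\textbf{Plan for \autoref{thm:newlowerbound}.} The construction is a block-diagonal ILP with $m$ independent rows, each encoding a coin-change style constraint whose unique minimum-weight nonnegative integer representation is the all-ones vector. Put $L\coloneqq\lfloor\log(A_{\max})\rfloor$, so that $2^L\leq A_{\max}<2^{L+1}$. For every row $i\in\{1,\ldots,m\}$, introduce $L+1$ variables $x_{i,0},\ldots,x_{i,L}$ whose columns have exactly one nonzero entry, namely $2^k$ in row $i$. The $i$-th equation reads $\sum_{k=0}^{L}2^k x_{i,k}=2^{L+1}-1$, and the objective is $\sum_{i,k} x_{i,k}$. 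Every column has $1$-norm $2^k\leq 2^L\leq A_{\max}$, the total number of variables is $n=m(L+1)=m(\lfloor\log(A_{\max})\rfloor+1)$, and $\bm{1}$ is feasible. If one insists that the maximum column $1$-norm equal $A_{\max}$ exactly, the value $2^L$ in the last column of each row can be replaced by $A_{\max}$ with right-hand side shifted to $(2^L-1)+A_{\max}$; the argument below adapts verbatim.

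\textbf{Key step: uniqueness via a coin-change argument.} Since the constraint matrix is block-diagonal with one block per row, the ILP decomposes into $m$ independent subproblems of the form ``minimize $\sum_{k=0}^L x_k$ subject to $\sum_{k=0}^L 2^k x_k = 2^{L+1}-1$ with $x_k\in\mathbb{Z}_{\geq 0}$''. I prove by induction on $L$ that each such subproblem admits the \emph{unique} optimum $(1,\ldots,1)$. The base case $L=0$ is trivial. For the inductive step, parity of the right-hand side forces $x_0$ to be odd, hence $x_0\geq 1$; if $x_0\geq 3$, the swap $(x_0,x_1)\mapsto(x_0-2,x_1+1)$ strictly decreases the objective while preserving feasibility, contradicting optimality. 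Therefore $x_0=1$, and dividing the residual equation $\sum_{k=1}^L 2^k x_k=2^{L+1}-2$ by two yields an identical problem with $L$ replaced by $L-1$; the induction hypothesis forces $x_k=1$ for all $k\geq 1$.

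\textbf{Conclusion.} Combining the $m$ independent blocks, the unique optimum of the constructed ILP is $\bm{1}\in\mathbb{Z}^n$, whose support size equals $n=m(\lfloor\log(A_{\max})\rfloor+1)\geq m\log(A_{\max})$, yielding the claimed lower bound. The only non-routine ingredient is the uniqueness of the binary expansion as a minimum-weight coin-change solution; everything else is bookkeeping. I expect no real obstacle, since both the construction and the parity-plus-swap induction are standard; the one subtlety worth highlighting in the write-up is making the maximum column $1$-norm land exactly on $A_{\max}$ when $A_{\max}$ is not a power of two, which is handled by the variant construction noted above.
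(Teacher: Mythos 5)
Your construction is the same as the paper's (block-diagonal matrix with columns $2^0,\ldots,2^{\lfloor\log A_{\max}\rfloor}$ per row and right-hand side $2^{\lfloor\log A_{\max}\rfloor+1}-1$), and your parity-plus-swap induction is a correct variant of the paper's exchange argument; the only difference is that the paper maximizes $\sum_k 3^k x_{i,k}$ so that the swap $x_i\mapsto x_i-2,\ x_{i+1}\mapsto x_{i+1}+1$ strictly increases the objective, whereas you minimize the unweighted sum. Note that the paper's own construction also only achieves maximum column $1$-norm $2^{\lfloor\log A_{\max}\rfloor}\leq A_{\max}$, so your side remark about hitting $A_{\max}$ exactly is not needed (and your claim that the induction adapts ``verbatim'' to the modified last column would actually require a separate check).
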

To obtain this lower bound, we adapting the previously best lower bound of $m\log(\Delta)$ on the support size of an optimal solution by Berndt, Jansen and Klein~\cite{BerndtJK2021}.

\subsubsection{Algorithmic results}
We use our upper bounds on the support sizes of optimal solutions to ILPs from \autoref{thm:newsupportbound} as \emph{one} ingredient to obtain new algorithmic results.
Namely, we design a new EPTAS for $Q||C_{\max}$, which is asymptotically faster than all previous EPTAS for $Q||C_{\max}$; see \autoref{RTtable} for a survey of prior work on approximation algorithms for $Q||C_{\max}$.

\begin{restatable}{theorem}{NewQCmaxEPTAS}
\label{thm:newqcmaxeptas}
  There is an algorithm for $Q||C_{\max}$ that, for any $\eps>0$, computes a $(1+\eps)$-approx\-imate schedule for any set of $N$ jobs in time \mbox{$2^{\mathcal{O}\left(1/\eps\log^3(1/\eps)\log(\log(1/\eps))\right)}+\mathcal{O}(N)$}.
\end{restatable}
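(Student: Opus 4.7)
The plan is to follow the classical EPTAS scaffold for $Q||C_{\max}$, but to replace the inner ILP/MILP block with a formulation tailored to exploit the improved support bound of \autoref{thm:newsupportbound}. The three phases are preprocessing, MILP solving, and small-job rounding; the savings over Jansen--Klein--Verschae come entirely from the smaller support and a more compact MILP. First, I would apply dual approximation (binary-search over $T$) to pin down the optimal makespan within a factor of $(1+\eps)$. Geometric rounding then reduces machine speeds and big-job sizes to $K_s = \mathcal{O}(\log(1/\eps))$ speed classes and $K_p = \mathcal{O}((1/\eps)\log(1/\eps))$ job-size classes, while any job with $p_j\leq\eps T s_i$ is designated \emph{small} on machine $i$ and deferred to a final rounding step. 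The core combinatorial problem is thus to assign big jobs (described by multiplicities of $K_p$ classes) to machines (grouped into $K_s$ speed classes) such that each machine's load is at most $(1+\eps)T\cdot s_i$.

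Second, I would introduce a configuration-style MILP with integer variables $x_{t,C}$ counting how many speed-class-$t$ machines run the big-job configuration $C$ (a multiset of big-job classes that fits within $(1+\eps)Ts_t$), together with fractional variables tracking small-job mass routed to each speed class. Since a configuration contains at most $1/\eps$ big jobs, every column of the integer part has $1$-norm at most $\mathcal{O}(1/\eps)$, so $A_{\max}=\mathcal{O}(1/\eps)$; the covering/assignment constraints number $m = \mathcal{O}(K_p+K_s) = \mathcal{O}((1/\eps)\log(1/\eps))$. Applying \autoref{thm:newsupportbound} to the integer part of this MILP yields an optimal solution whose integer support has size $s = \mathcal{O}\!\bigl(m\cdot(\log(A_{\max})+\sqrt{\log A_{\max}})\bigr) = \mathcal{O}\!\bigl((1/\eps)\log^2(1/\eps)\bigr)$.

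Third, I would combine the support bound with a Kannan-style enumeration to solve the MILP within the claimed time: because the number of candidate configurations per speed class is bounded by a multiset count of the form $2^{\mathcal{O}((1/\eps)\log\log(1/\eps))}$ (at most $1/\eps$ items drawn from $K_p$ classes, collapsed by symmetry), we can guess the active support (a collection of $s$ configurations) and, for each guess, solve a residual system with $s$ integer variables and $m$ rows via Lenstra/Kannan. Careful bookkeeping of the guessing factor $2^{\mathcal{O}(s\log\log(1/\eps))}$ against the Kannan factor $2^{\mathcal{O}(m\log m)}$ yields the target exponent $\mathcal{O}((1/\eps)\log^3(1/\eps)\log\log(1/\eps))$. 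The fractional small-job assignment is rounded to an integral one through a standard flow/matching argument, losing at most an additional $\eps T$ per machine; this postprocessing together with the initial input scan and job classification is absorbed into the additive $\mathcal{O}(N)$ term.

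The hardest step, I expect, is designing the MILP so that both $m$ and $A_{\max}$ stay small enough for \autoref{thm:newsupportbound} to bite, while its integral solutions still encode valid schedules --- in particular, making sure that coupling between configurations and speed classes does not inflate either $m$ (by forcing per-configuration constraints) or $A_{\max}$ (by letting columns span several speed classes). A secondary delicate point is that the naive enumeration over supports is too expensive, so the Kannan step must be interleaved with the guess in a way that keeps only one $\log$-factor in excess of $\log^2(1/\eps)$; this is where the replacement of one $\log(1/\eps)$ by $\log\log(1/\eps)$ (relative to the Jansen--Klein--Verschae exponent) ultimately comes from.
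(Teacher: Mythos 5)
There is a genuine gap, and it sits exactly where you flag the difficulty: the run-time accounting of the support-guessing step does not work with your configuration set. With configurations holding up to $1/\eps$ big jobs drawn from $K_p=\Theta((1/\eps)\log(1/\eps))$ size classes, the number of candidate integer variables is $n=2^{\Theta((1/\eps)\log\log(1/\eps))}$, as you compute. Guessing which $s$ of them form the support costs $\binom{n}{s}\approx 2^{s\log n}$; with your $s=\Theta((1/\eps)\log^2(1/\eps))$ and $\log n=\Theta((1/\eps)\log\log(1/\eps))$ this is $2^{\Theta((1/\eps)^2\log^2(1/\eps)\log\log(1/\eps))}$ --- quadratic in $1/\eps$ in the exponent, far above the target. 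Your claimed guessing factor $2^{\mathcal{O}(s\log\log(1/\eps))}$ would require only $\log^{\mathcal{O}(1)}(1/\eps)$ many candidate configurations, which your construction does not provide, and no concrete mechanism for ``interleaving the Kannan step with the guess'' is given (none is known that repairs a $2^{\Theta((1/\eps)^2)}$ enumeration). Moreover, with $A_{\max}=\mathcal{O}(1/\eps)$ the support bound of \autoref{thm:newsupportbound} only yields $s=\mathcal{O}((1/\eps)\log^2(1/\eps))$, i.e.\ $\log(A_{\max})=\Theta(\log(1/\eps))$; the $\log\log(1/\eps)$ factor in the theorem's exponent cannot appear this way.

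The paper's fix is structural rather than an enumeration trick: it replaces standard configurations by \emph{recursive} configurations over a combined exponential-plus-linear grid in which two jobs of the same slice and parity merge exactly into one job of the next coarser slice (\autoref{eq:conf:pairs}), and shorter jobs are delegated to \emph{virtual machines} that are themselves simulated as jobs on faster machines. A configuration then contains only $\mathcal{O}(\log(1/\eps))$ jobs, which simultaneously gives $A_{\max}=\mathcal{O}(\log(1/\eps))$ (hence support $s=\mathcal{O}((1/\eps)\log(1/\eps)\log\log(1/\eps))$, where the $\log\log$ originates) and only $2^{\mathcal{O}(\log^2(1/\eps))}$ candidate configurations, so $\log n=\mathcal{O}(\log^2(1/\eps))$ and $2^{s\log n}$ lands exactly on the claimed exponent. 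Two further points you gloss over: the count $K_s=\mathcal{O}(\log(1/\eps))$ of speed classes is unjustified (after preprocessing the speed ratio is $\Theta(N/\eps)$, giving $\Theta((1/\eps)\log(N/\eps))$ classes; the paper copes by demanding integrality only for the fastest $L=\mathcal{O}((1/\eps)\log(1/\eps))$ speed groups and keeping the rest fractional), and the additive $\mathcal{O}(N)$ term requires a dedicated linear-time schedule construction (the paper's hybrid MILP plus greedy packing), not a generic flow/matching rounding.
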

Compared to previous works, we devise a novel MILP formulation for $Q||C_{\max}$ whose constraint matrix has small column norm.
Solving this new formulation not only yields the fastest known EPTAS for $Q||C_{\max}$, but also an EPTAS which is conceptually much simpler than all previous ones.
In particular, we introduce the following simplifications in our algorithm compared to the previous EPTAS results:
\begin{itemize}
  \item In contrast to all other previous algorithms, we do not need a case distinction that splits the algorithm and its analysis into three or four different cases, depending on the processing time ratios and numbers of machines.
    Our algorithm handles all these cases simultaneously, only incurring a small increase in the constants hidden in the $\mathcal{O}$ terms.
  \item We set up an MILP where both jobs and machines with similar size and speed are combined into few distinct job and machine classes.
    All but the longest jobs and machines are then scheduled fractionally.
    Rounding the fractional allocations for relatively tiny jobs in a machine class was usually done via a separate algorithm by Lenstra, Shmoys, and Tardos~\cite{LenstraST1990}.
    Instead, we pack these jobs with a simple greedy strategy.
  \item We assign relatively huge jobs of each rounded machine speed using basic linear program properties.
    This creates a linear instead of logarithmic overhead, which we can accommodate by increasing the number of integer variables by a constant factor.
    Previous EPTAS mostly used a complex algorithm by Jansen~\cite{Jansen2012} for a particular bin packing problem.
\end{itemize}
The only remaining algorithm used as a black-box is the well-known algorithm by Lenstra~\cite{Lenstra1983} for solving MILPs with constant dimension (resp. its improvement due to Kannan~\cite{Kannan1987}).
Due to these simplifications we can easily extend our approach to further scheduling problems.

\begin{table}[htpb]\centering
    \begin{tabular*}{\textwidth}{l@{\extracolsep{\fill}} cc A}
        \toprule
            authors&year&approximation &\text{run time}&\\
        \midrule
         Gonzales, Ibarra, Sahni~\cite{GonzalezIS1977}& 1977 & $2-\frac{2}{M+1}$ & \mathcal{O}(N\log(N))&\\
         Cho, Sahni~\cite{Cho1980} & 1980 & $1+\sqrt{\frac{M-1}{2}}$ & \mathcal{O}(N\log(N))&\\
         Woeginger~\cite{Woeginger1999} & 1999 & $2-\frac{1}{M}$ & \mathcal{O}(N\log(N))&\\
         Hochbaum, Shmoys~\cite{HochbaumS1988}~ & 1988 & $1+\eps$ & N^{\mathcal{O}(1/\eps^2\log(1/\eps))}&\\
         Azar, Epstein~\cite{AzarE1998} & 1998 & $1+\eps$ & N^{\mathcal{O}(1/\eps^2)}&\\
         Jansen~\cite{Jansen2010} & 2010 & $1+\eps$ & 2^{\mathcal{O}(1/\eps^2\log^3(1/\eps))}&+N^{\mathcal{O}(1)}\\
         Jansen, Robenek~\cite{JansenR2012} & 2012 & $1+\eps$ & 2^{\mathcal{O}(1/\eps\log(1/\eps)\cdot C_{\eps})}&+N^{\mathcal{O}(1)}, \text{note\footnotemark}\\
            Jansen, Klein, Verschae~\cite{JansenKV2016,JansenKV2020}~ & 2016 & $1+\eps$ & 2^{\mathcal{O}\left(1/\eps\log^4\left(1/\eps\right)\right)}&+N^{\mathcal{O}(1)}\\
            \emph{This paper} & 2023 & $1+\eps$ & 2^{\mathcal{O}\left(1/\eps\log^3(1/\eps)\log(\log(1/\eps))\right)}&+\mathcal{O}(N)\\
        \bottomrule
    \end{tabular*}
    \caption{History of complexity results for approximating $Q||C_{\max}$.}
    \label{RTtable}
\end{table}
\footnotetext{Here, $C_{\eps}\in\mathcal{O}(1/\eps \log^{2}(1/\eps))$ denotes the proximity of the configuration ILP.}

To show the versatility of our approach, we also applied it to three different applications.

\medskip\noindent\textbf{High-Multiplicity Scheduling.}
We study the problem $Q|\mathit{HM}|C_{\max}$, where both jobs \emph{and} machines are given in the succinct high-multiplicity encoding.
That is, we are given a list $p_1,\ldots,p_d$ of job processing times and for each $p_j$ the number $n_j$ of jobs with processing time $p_j$; and a list $s_1,\ldots,s_{\tau}$ of machine speeds and for each $s_t$ the number $m_t$ of machines with speed $s_t$.
All numbers are encoded in binary, thus the input size can be exponentially smaller than in the standard encoding, which lists jobs one by one.

We are not aware of any constant-factor approximation for $Q|HM|C_{\max}$, only for the more restricted setting where only the jobs are given in a high-multiplicity encoding~\cite{FilippiRJ2009}.
We give the first approximation algorithm for this general high-multiplicity setting.

\begin{restatable}{theorem}{GreedyHMQCmax}\label{thm:hmQp}
  There is an algorithm for $Q|\mathit{HM}|C_{\max}$ that, given any instance $\mathcal{I}$ with preemptively feasible makespan~$T$, computes a schedule with makespan at most $T+p_{\max}$ in time~$\mathcal{O}(\langle\mathcal{I}\rangle)$.
\end{restatable}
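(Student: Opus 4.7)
The plan is to use a natural greedy strategy and then argue that it can be executed in time linear in the input encoding by exploiting the high-multiplicity structure. First, I would sort the $\tau$ machine types by non-increasing speed and the $d$ job types by non-increasing processing time. Then I would process machines one at a time in this order: for the current machine $i$ of speed $s_i$, append jobs (in the sorted order) to its load $L_i$ until $L_i$ first exceeds $T\cdot s_i$; then close machine $i$ and proceed to machine $i+1$. This is essentially longest-job-first fed onto fastest-machine-first, allowing exactly one job of overshoot per machine.

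For the makespan bound, every machine $i$ that is closed by overshoot satisfies $L_i \le T\cdot s_i+p_{\max}$, because the overshooting job has processing time at most $p_{\max}$ and the load was at most $T\cdot s_i$ just before it was added; the completion time of that machine is thus at most $T+p_{\max}/s_i\le T+p_{\max}$, using integrality $s_i\ge 1$. To show that \emph{all} jobs are placed, I would argue by contradiction: if some job remained after the last machine had been processed, then every machine was closed by overshoot, so $L_i \ge T\cdot s_i$ for every $i$. Summing and using that $T$ is preemptively feasible (which in particular implies the total-work inequality $\sum_j n_j p_j \le T\cdot\sum_t m_t s_t$) yields $\sum_i L_i \ge T\cdot\sum_i s_i \ge \sum_j n_j p_j$, contradicting $\sum_i L_i<\sum_j n_j p_j$.

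The main obstacle is implementing this scheme in time $\mathcal{O}(\langle\mathcal{I}\rangle)$ when both $N=\sum_j n_j$ and $M=\sum_t m_t$ can be exponential in $\langle\mathcal{I}\rangle$, so the per-job, per-machine loop has to be executed symbolically. I would maintain cursors into the sorted lists of machine types and job types together with residual multiplicities. At any point, the current machine has a partial load and the leading remaining jobs are all of a single type; from the current speed $s_t$, processing time $p_j$, and residual counts, one can compute in $\mathcal{O}(1)$ arithmetic operations how many jobs of the current type close the current machine, and then how many subsequent machines of the same type can be filled entirely with jobs of that same type before either the machine- or the job-cursor must advance. Each such batch step advances at least one cursor, giving $\mathcal{O}(d+\tau)$ batches overall. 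The output is then written in high-multiplicity form by recording, for each machine type, the constant job-pattern that is repeated on the bulk-filled machines together with the (at most two) boundary machines whose compositions straddle a type change, which keeps the total output size within $\mathcal{O}(d+\tau)=\mathcal{O}(\langle\mathcal{I}\rangle)$.
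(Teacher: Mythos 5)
Your proposal matches the paper's own proof: the paper's \textsf{HM-GreedyApprox} algorithm is exactly this batched greedy that overfills each machine by at most one job, argues that no job is left over because every closed machine has received at least its proportional share of the total work (which preemptive feasibility of $T$ bounds by $T\cdot\sum_i s_i\mu_i$), and bounds the number of batch steps by $\mathcal{O}(n+m)$ because each step exhausts a job class, a machine class, or fills the current machine. The only cosmetic difference is that you sort the types first, whereas the paper assumes sorted input --- and indeed neither the $T+p_{\max}$ bound nor the counting argument actually needs the longest-first/fastest-first order.
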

Here, a preemptively feasible makespan is a feasible makespan for the preemptive relaxation,  where we allow interruption and resumption of jobs on different machines.
This directly implies a linear-time $2$-approximation for $P|\mathit{HM}|C_{\max}$ (where all machines have speed $1$).

Next, we obtain a similar result for the case of uniformly related machines:
\begin{corollary}
\label{cor:hmQ2}
  There is an algorithm for $Q|\mathit{HM}|C_{\max}$ that, for any $\eps>0$, computes a $(2+\eps)$-approximate schedule for any instance $\mathcal{I}$ in time $\mathcal{O}(\log(\eps)\cdot\langle\mathcal{I}\rangle)$.
\end{corollary}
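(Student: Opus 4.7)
The plan is to reduce $Q|\mathit{HM}|C_{\max}$ to a collection of $P|\mathit{HM}|C_{\max}$ sub-instances, one per machine speed class, and invoke \autoref{thm:hmQp} on each; a binary search over the target makespan $T$ will produce the $\log(1/\eps)$ factor.

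First, I would obtain easy bounds $L\leq\OPT\leq U$ with $U/L=\mathcal{O}(1)$, readable in time $\mathcal{O}(\langle\mathcal{I}\rangle)$ from the compact encoding (for example, a lower bound from total work divided by total speed together with $\max_j p_j/s_{\max}$, and an upper bound from a trivial list-scheduling routine). Then I would binary-search $T\in[L,U]$ in multiplicative steps of $1+\eps/4$, yielding $\mathcal{O}(\log(1/\eps))$ candidates.

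For each candidate $T$, I would decide feasibility of the following relaxed class assignment: distribute job types across machine speed classes $t$ such that (i) every job $j$ assigned to class $t$ satisfies $p_j\leq s_t T$, and (ii) the total processing time of jobs placed in class $t$ is at most $m_t s_t T$. On the high-multiplicity encoding, this reduces to a water-filling style sweep that fits in $\mathcal{O}(\langle\mathcal{I}\rangle)$ (sort job types by $p_j$, then fill speed classes from fastest to slowest). If the sweep succeeds, each per-class sub-instance is a $P|\mathit{HM}|C_{\max}$ instance whose raw $p_{\max}$ is at most $s_t T$ by (i), and whose preemptive optimum $\max(p_{\max,t},\text{totalload}_t/m_t)$ is at most $s_t T$ by (i) and (ii). Applying \autoref{thm:hmQp} to each class thus produces a non-preemptive schedule of load at most $s_t T+p_{\max,t}\leq 2s_t T$, i.e., makespan at most $2T$ on that class. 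Taking the union of these per-class schedules yields a global schedule of makespan at most $2T$. Returning the schedule corresponding to the smallest feasible $T^\star$, which overshoots $\OPT$ by a factor of at most $1+\eps/4$, gives makespan at most $(2+\eps)\OPT$, and the total run time is $\mathcal{O}(\log(1/\eps)\cdot\langle\mathcal{I}\rangle)$.

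The main obstacle is producing the per-class assignment in linear time under the high-multiplicity encoding, because jobs of a single type may need to be split between two consecutive speed classes to balance load; handling this without inflating the encoding size requires a careful, but standard, integer splitting argument where at most one job type per class is split and its multiplicity is recorded in $\mathcal{O}(1)$ space. A second, smaller, point is verifying that \autoref{thm:hmQp}'s run time on each sub-instance sums to $\mathcal{O}(\langle\mathcal{I}\rangle)$ overall, which follows because each job and machine type participates in exactly one sub-instance.
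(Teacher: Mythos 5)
Your per-speed-class decomposition is a plausible alternative to the paper's route (the paper instead runs a single global greedy, \autoref{hm:algo}, which fills fastest machines with longest jobs and overpacks each machine by at most one job), but there is a genuine gap in how you initialize the binary search. You need a window $[L,U]$ with $U/L=\mathcal{O}(1)$ to get away with $\mathcal{O}(\log(1/\eps))$ feasibility tests, and neither of your proposed endpoints delivers this. The lower bound $\max\bigl(\sum_j p_j/\sum_i s_i,\; p_{\max}/s_{\max}\bigr)$ can be smaller than $\OPT$ by an unbounded factor on uniform machines: take $k$ unit jobs, one machine of speed $1$ and $k^2$ machines of speed $1/k$; then both terms of the lower bound are $\mathcal{O}(1)$ while $\OPT=\Theta(k)$. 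The correct lower bound must include the prefix terms $\max_k \sum_{j=1}^k p_j/\sum_{i=1}^k s_i$ (longest jobs on fastest machines), which is exactly the quantity $T_p$ the paper uses together with Woeginger's observation $\OPT\leq 2\,\OPT_{\mathrm{pmtn}}$ to get $T_p\leq\OPT\leq 2T_p$. Your upper bound is equally problematic: a \enquote{trivial list-scheduling routine} on uniform machines is not a constant-factor approximation (arbitrary list order degrades with $M$; LPT is a $2$-approximation but is itself essentially the algorithm being constructed), and in the high-multiplicity encoding even touching each machine once costs $\Omega(M)$, which can be exponential in $\langle\mathcal{I}\rangle$. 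The paper's computation of $T_p$ in time $\mathcal{O}(\langle\mathcal{I}\rangle)$ is itself a nontrivial step, resting on the monotonicity of $(a+xp)/(b+xs)$ in $x$ so that only $\mathcal{O}(n+m)$ prefix values need to be examined.

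The rest of your argument is essentially sound but leans on two points you should make explicit. First, you must prove that the largest-jobs-to-fastest-classes sweep succeeds whenever $T\geq\OPT$ (an exchange argument from the optimal schedule's class partition), since otherwise the binary search has no monotone feasibility predicate. Second, your invocation of \autoref{thm:hmQp} per class is fine --- conditions (i) and (ii) do bound the per-class preemptive optimum by $T$, and $p_{\max,t}\leq s_tT$ gives makespan $2T$ per class --- but note that this per-class use is strictly weaker than what the paper's global greedy achieves directly, so your detour through identical-machine sub-instances buys nothing over \autoref{hm:algo} applied once with guess $T$. With the window repaired via $T_p$, your proof goes through; without it, the claimed run time does not follow.
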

Eventually, we show how to adapt \autoref{thm:newqcmaxeptas} to the high-multiplicity setting:
\begin{corollary}
\label{cor:hmQe}
  There is an algorithm for $Q|\mathit{HM}|C_{\max}$ that, for any $\eps>0$, computes a $(1+\eps)$-approximate schedule for any instance $\mathcal{I}$ in time \mbox{$2^{\mathcal{O}(1/\eps\log^3(1/\eps)\log(\log(1/\eps)))}+\langle\mathcal{I}\rangle^{\mathcal{O}(1)}$}.
\end{corollary}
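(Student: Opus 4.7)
The plan is to re-run the EPTAS of \autoref{thm:newqcmaxeptas} essentially verbatim, exploiting the fact that its MILP formulation already aggregates jobs and machines into $\mathcal{O}(1/\eps\log(1/\eps))$ classes of similar processing time and speed, and therefore barely uses the standard encoding beyond a bounded number of scans. The task reduces to replacing every preprocessing step whose run time is linear in $N$ or $M$ by an analogue that runs in time polynomial in the high-multiplicity encoding size $\langle\mathcal{I}\rangle$.

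First I would invoke \autoref{cor:hmQ2} to obtain, in time $\mathcal{O}(\log(\eps)\cdot\langle\mathcal{I}\rangle)$, a $(2+\eps)$-approximate makespan~$\tilde T$, and hence constant-factor lower and upper bounds on $\OPT$. Searching the interval $[\tilde T/(2+\eps),\tilde T]$ at geometric precision $(1+\eps)$ produces $\mathcal{O}(\log(1/\eps))$ candidate guesses $T$ for $\OPT$. For each candidate, I would perform the same geometric rounding of processing times and speeds as in \autoref{thm:newqcmaxeptas}; since jobs and machines are already grouped by exact processing time and speed in the input, aggregating them into the rounded classes only requires summing multiplicities over logarithmically sized buckets, which takes $\langle\mathcal{I}\rangle^{\mathcal{O}(1)}$ time.

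The resulting MILP is identical in size and structure to the one solved in \autoref{thm:newqcmaxeptas}, its number of integer variables and constraints depending only on $1/\eps$. Applying the algorithm of Kannan gives the first term $2^{\mathcal{O}(1/\eps\log^3(1/\eps)\log(\log(1/\eps)))}$ in the run time, and the support-size guarantee of \autoref{thm:newsupportbound} keeps the number of non-zero entries of the returned solution bounded by a function of $1/\eps$ alone. The schedule induced by such a solution is naturally described class-wise: per machine class and per job class, we obtain an integer multiplicity (for huge jobs) or an aggregate fractional load (for smaller jobs). This immediately yields a high-multiplicity output of size $\langle\mathcal{I}\rangle^{\mathcal{O}(1)}$, without ever enumerating individual jobs or machines.

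The main obstacle, I expect, is to verify that the two rounding steps of \autoref{thm:newqcmaxeptas} --- the greedy packing of small jobs and the LP-based assignment of huge jobs --- can be executed without touching individual jobs or machines. For the greedy packing this is immediate: the rule depends only on aggregate loads per machine class, so a whole block of small jobs in one class can be assigned to a block of machines in one class in a single arithmetic step. For the huge-jobs LP, one uses that any basic feasible solution has support bounded by the number of constraints, which is again $\mathcal{O}(1/\eps\log(1/\eps))$; hence each huge-job class is routed to only a constant (in $1/\eps$) number of distinct machine classes, and the corresponding per-class multiplicities can be written down in $\langle\mathcal{I}\rangle^{\mathcal{O}(1)}$ time. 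Combining all three contributions yields the claimed run time.
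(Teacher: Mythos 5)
Your proposal follows essentially the same route as the paper: observe that the preprocessing, rounding and MILP-solving phases of \autoref{thm:newqcmaxeptas} already operate on aggregated job/machine classes and hence run in time polynomial in $\langle\mathcal{I}\rangle$, and that the only step needing genuine adaptation is the schedule construction, which you (like the paper, via its block-wise greedy Algorithm~\ref{hm:algo}) replace by assigning whole blocks of identical jobs to whole blocks of identical machines in single arithmetic steps. Your only deviation is bootstrapping the makespan search from \autoref{cor:hmQ2} to get $\mathcal{O}(\log(1/\eps))$ guesses rather than reusing the paper's binary search (whose $\mathcal{O}(1/\eps\log N)$ iterations are anyway polynomial in $\langle\mathcal{I}\rangle$); this is a harmless and valid shortcut.
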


\medskip\noindent\textbf{Few Different Machine Speeds.}
In the special case of $Q||C_{\max}$ with only $k$ distinct machine speeds, we are able to significantly improve the run time of our algorithm. 

\begin{restatable}{theorem}{SpecialCase}\label{thm:special}
  There is an algorithm for $Q||C_{\max}$ that, for any $\varepsilon > 0$, computes a $(1+\eps)$-approx\-imate schedule for any set of $N$ jobs and $k$ distinct machine speeds in time \mbox{$2^{\mathcal{O}\left(k \cdot 1/\eps \log (1/\eps) \log(\log(1/\eps)))\right)}+\mathcal{O}(N)$}.
\end{restatable}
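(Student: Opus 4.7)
My plan is to adapt the EPTAS underlying \autoref{thm:newqcmaxeptas} to the $k$-speed setting by replacing its geometric speed rounding with the $k$ given speed classes and re-tuning the job discretisation accordingly. After a binary search over the target makespan~$T$, I classify each job as \emph{big} or \emph{small} on each speed~$s_t$ (big meaning $p_j/s_t\ge\eps T$) and geometrically round big-job sizes to powers of $1+\eps$. Since the input already has only $k$ distinct speeds, no preliminary speed rounding is needed and the number of speed classes drops from $\Theta(\log(1/\eps))$ in the general algorithm to exactly~$k$, which is the source of the promised running-time improvement.

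I would then set up the configuration MILP of \autoref{thm:newqcmaxeptas} tailored to these $k$ classes: for each speed $s_t$ and each admissible big-job configuration~$C$ an integer variable counts how many machines of speed~$s_t$ run~$C$, while fractional variables distribute the small jobs within the makespan budget and are afterwards rounded by the greedy step of \autoref{thm:newqcmaxeptas}. With big-job types \emph{shared} across the $k$ classes (rather than duplicated per class), the number of rows is linear in~$k$ plus $\mathcal{O}(1/\eps\log(1/\eps))$, and the largest column $1$-norm remains $A_{\max}=\mathcal{O}(1/\eps)$, independent of~$k$.

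The critical step is to invoke \autoref{thm:newsupportbound} on this MILP, which bounds the support size $s$ of an optimal integer solution linearly in $m\log A_{\max}$. Restricting the MILP to the resulting support yields a low-dimensional MILP to which Lenstra's algorithm with Kannan's improvement applies in time $2^{\mathcal{O}(s\log s)}$, and a careful accounting of these factors together with the $\log\log(1/\eps)$ term already inherited from the solver in \autoref{thm:newqcmaxeptas} yields the claimed running time $2^{\mathcal{O}(k\cdot 1/\eps\log(1/\eps)\log\log(1/\eps))}$. The additive $\mathcal{O}(N)$ term covers the linear-time pass that compresses the jobs into the high-multiplicity form expected by the MILP.

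The main obstacle I anticipate is the bookkeeping that keeps both the row count and the column norm of the MILP scaling as expected: big-job types must be shared across speed classes to keep the job contribution to~$m$ free of~$k$, and each configuration must describe a single machine so that $A_{\max}$ does not grow with~$k$. Once this is ensured, the $(1+\eps)$-approximation guarantee of \autoref{thm:newqcmaxeptas} transfers verbatim, and only the running-time tally needs to be redone to pick up the $k$-dependence in the exponent.
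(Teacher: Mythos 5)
Your proposal keeps the overall architecture of \autoref{thm:newqcmaxeptas} --- build a configuration MILP, bound the integer support via \autoref{thm:newsupportbound}, enumerate supports and call Kannan's algorithm --- and hopes that having only $k$ real speeds shrinks the exponent. This cannot reach the claimed bound. The cost of the final step is $2^{\mathcal{O}(s\log n)}$ (\autoref{lem:MILP:support-alg}), where $n$ is the number of integer configuration variables; since each machine class has $2^{\Theta(\log^2(1/\eps))}$ configurations, the factor $\log n=\Theta(\log^2(1/\eps))$ is unavoidable in this framework, and even with a support size as small as $s=\mathcal{O}(k\cdot 1/\eps\,\log(1/\eps)\log\log(1/\eps))$ you land at $2^{\mathcal{O}(k\cdot 1/\eps\log^3(1/\eps)\log\log(1/\eps))}$ --- exactly the general bound of \autoref{thm:newqcmaxeptas} with a factor $k$, not the promised $2^{\mathcal{O}(k\cdot 1/\eps\log(1/\eps)\log\log(1/\eps))}$. (Your statement that Kannan runs in $2^{\mathcal{O}(s\log s)}$ after restricting to the support ignores the $\binom{n}{s}$ enumeration needed to \emph{find} that support, which is where the extra $\log^2(1/\eps)$ lives.) A secondary inaccuracy: the general algorithm does not use $\Theta(\log(1/\eps))$ speed classes; it uses $\tau\in\mathcal{O}(1/\eps\log(1/\eps\cdot N))$ rounded speeds of which $L\in\mathcal{O}(1/\eps\log(1/\eps))$ are packed integrally, and the row count $m=2L$ of the integral subproblem is driven by the rounded \emph{job/virtual-machine} sizes, not by the number of genuine machine speeds, so it does not automatically drop to $\mathcal{O}(k+1/\eps\log(1/\eps))$.

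The paper's proof (\autoref{sec:transition}) takes a genuinely different route precisely to escape the $\log n$ penalty: it abandons the MILP solver for the integral part and instead uses the Jansen--Rohwedder \emph{pure ILP} solver (\autoref{ILP:Rohwedder}), whose run time $2^{\mathcal{O}(m\log A_{\max})}$ has no dependence on the number of variables. Making this applicable requires decoupling the mixed program: one finds a block index $i$ whose group contains few machines $m_i$, guesses the assignment vector $\bm{h}_i$ of long jobs to that block and the free space $\free_i$ reserved for tiny jobs (by binary search), and thereby splits the instance into a pure ILP on the machines faster than the block and an LP on the slower ones. With $k$ distinct speeds, either all speeds are close together (so the ILP part has $\mathcal{O}(k\cdot 1/\eps\log(1/\eps))$ constraints) or some block is empty ($m_i=0$) and the guessing is free; balancing the two cases gives the theorem. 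This decomposition-and-guess step, together with the switch of solvers, is the missing idea in your proposal.
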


\medskip\noindent\textbf{Few Different Uniform Machine Types.}
Jansen and Maack~\cite{JansenM2019} considered the generalization of $Q||C_{\max}$ known as $R_KQ||C_{\max}$, where each job can have up to $K$ different processing times.
Any machine belongs to one of $K$ distinct types, which determines the processing times of the jobs.
Machines of the same type can still have  different uniform speeds.
Jansen and Maack~\cite{JansenM2019} gave an EPTAS for $R_KQ||C_{\max}$ with run time
\mbox{$2^{\mathcal{O}(K\log(K)1/\eps^3\log^5(1/\eps))}+\mathcal{O}(K\cdot N)$}.
We apply our novel techniques to reduce the exponent from $\widetilde{\mathcal{O}}(K\cdot 1/\eps^3)$ down to $\widetilde{\mathcal{O}}(K\cdot 1/\eps)$:
\begin{restatable}{theorem}{Marten}
\label{thm:marten}
  There is an algorithm for $R_KQ||C_{\max}$ that, for any $\varepsilon > 0$, computes a $(1+\eps)$-approx\-imate schedule for any set of $N$ jobs in time \mbox{$2^{\mathcal{O}(K\log(K)1/\eps\log^3(1/\eps)\log(\log(1/\eps)))}+\mathcal{O}(K\cdot N)$}.
\end{restatable}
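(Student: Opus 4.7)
The plan is to lift the EPTAS strategy of \autoref{thm:newqcmaxeptas} from the single-type uniform setting to the $K$ machine types of $R_KQ||C_{\max}$, paying only a $K\log K$ blow-up in the exponent of the running time. Recall that in this problem each job carries a vector $(p_j^{(1)},\dots,p_j^{(K)})$ of $K$ processing times, one per machine type, while within each type speeds remain uniformly related. First I would re-run the preprocessing from the $Q||C_{\max}$ algorithm once per type: binary-search the optimal makespan $T$, discard machines too slow to run any job within budget, round the remaining speeds inside each type to powers of $(1+\eps)$, and, for every job $j$ and every type $k$, round $p_j^{(k)}$ geometrically as well. This produces $\mathcal{O}(K\cdot 1/\eps\log(1/\eps))$ distinct rounded machine classes in total and $\mathcal{O}(1/\eps\log(1/\eps))$ rounded job sizes per type.

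Next I would write down the natural MILP analogue of the one in \autoref{thm:newqcmaxeptas}: for every rounded machine class and every rounded job size of the matching type, one integer variable counts how many \emph{huge} jobs of that size land in that class, and one fractional variable allocates the \emph{small} jobs. One linear constraint per machine class enforces the load budget $T$, and one constraint per (type, rounded size) pair enforces full assignment of all jobs of that size. The resulting MILP has $m=\mathcal{O}(K\cdot 1/\eps\log(1/\eps))$ rows, and since huge-job coefficients are bounded ratios of rounded sizes to rounded speeds within a single type, its largest column $1$-norm $A_{\max}$ stays polynomial in $1/\eps$. Applying \autoref{thm:newsupportbound} then bounds the integer support of an optimal MILP solution by
\[
s \;=\; \mathcal{O}\!\bigl(m\cdot\log A_{\max}\bigr) \;=\; \mathcal{O}\!\bigl(K\cdot 1/\eps\log^2(1/\eps)\bigr).
\]
Enumerating the $\binom{n}{s}$ candidate supports (with $n=\mathrm{poly}(K,1/\eps)$ integer variables) and solving each restricted $s$-variable MILP by Kannan's algorithm~\cite{Kannan1987} in time $s^{\mathcal{O}(s)}$ multiplies out to the claimed exponent $K\log K\cdot 1/\eps\log^3(1/\eps)\log\log(1/\eps)$. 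The additive $\mathcal{O}(K\cdot N)$ term absorbs reading the input together with the same greedy small-job packing used in \autoref{thm:newqcmaxeptas}, now performed once per type.

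The obstacle I expect to dominate the technical work is the \textbf{cross-type mobility} of jobs: a job may qualify as huge on one type and as small on another, because the huge/small threshold depends on the rounded speeds, which differ between types. I would resolve this exactly as Jansen and Maack~\cite{JansenM2019} do, by adding integer ``type-assignment'' variables per rounded job size which pick, for each job, a target machine type; this inflates the integer variable count only by a factor of $K$ and is therefore already baked into the $K\log K$ scaling of the exponent. The second point to verify is that the greedy small-job packing remains correct when applied type-by-type: because the MILP enforces a global demand constraint per (type, rounded size), fixing the fractional allocations freezes the small-job contribution inside each type, so the greedy passes run independently across types and can be concatenated without incurring any $(1+\eps)$-loss beyond the one already paid for by rounding.
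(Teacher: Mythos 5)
There are two genuine gaps. First, your MILP is not a valid formulation of the scheduling problem for the huge jobs. You propose integer variables counting how many huge jobs of each rounded size land in each machine \emph{class}, constrained only by one aggregate load budget per class. An aggregate load constraint over a class of identical machines does not certify a feasible per-machine packing of huge jobs (three jobs of size $0.6$ and two unit-capacity machines satisfy the aggregate budget but admit no schedule). This is precisely why the paper's formulation for $R_KQ||C_{\max}$ reuses the recursive \emph{configuration} variables $x_{i,\bm{\gamma},k}$ from \eqref{recursive-MILP}: a configuration is by definition a feasible single-machine packing, and as a bonus its column has $1$-norm $\mathcal{O}(\log(1/\delta))$, which is what feeds \autoref{thm:newsupportbound}. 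Your claim that $A_{\max}$ is merely ``polynomial in $1/\eps$'' happens to still land near the right exponent arithmetically, but the bound is being applied to a program that does not encode the problem.

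Second, your treatment of cross-type mobility breaks the support bound. A job in $R_KQ||C_{\max}$ is described by a $K$-vector of rounded processing times, so after rounding there are up to $\tilde{\tau}^K$ distinct job types, not $\mathcal{O}(K\cdot 1/\eps\log(1/\eps))$ rounded sizes; the demand constraints \eqref{uniform-MILP:eq2} number $\tau^K$. If, as you propose, the type-assignment variables are \emph{integer}, these $\tau^K$ rows enter the constraint matrix of the restricted ILP in \autoref{lem:MILP:support}, so $m$ becomes exponential in $K$ and the support bound is useless; the ``factor of $K$'' inflation you claim is not where the cost lies. The paper's resolution is the opposite of yours: the assignment variables $y_{j,k}$ are kept \emph{fractional}, each appears in only two constraints with coefficients $\pm1$ (so they affect neither $m$ nor $A_{\max}$ for the integer part), and integrality is recovered afterwards because the residual LP \eqref{R-uniform-LP} is totally unimodular, after which the instance decomposes into $K$ independent per-type problems handled as in \autoref{sec:fasterSched}. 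Without the total-unimodularity step (or some substitute), your greedy per-type concatenation has no integral type assignment to start from.
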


\section{Preliminaries}
\label{sec:preliminaries}
We use $\log(2)=1$, i.e., base $2$ logarithms and denote Euler's number by $e \coloneqq  \exp(1)$.
For a vector~$\bm{v}$, let $v_{\min}\coloneqq  \min_\ell v_\ell$ and $v_{\max}\coloneqq  \max_\ell v_\ell$ be its extremal entries and $\supp(\bm{v})\coloneqq  \{\ell\mid\bm{v}_{\ell}\neq 0\}$ its \emph{support}, i.e., the set of indices with non-zero entries.
For an instance $\mathcal I$, its \emph{encoding size} $\langle \mathcal I\rangle$ is given by $\langle \mathcal I\rangle\coloneqq  \sum_{x\in I}(\log(|x|+1)+1)$, the sum over the sizes in binary representations of all quantities.
For example, an instance $\mathcal{I}$ of $Q||C_{\max}$ has size $\langle\mathcal{I}\rangle$ logarithmic in the job processing times and machine speeds, but linear in the number of jobs and machines.
We generally use $i$ as index for machines, and $j$ as index for jobs. 

\medskip\noindent\textbf{Mixed-Integer Linear Programs.}
The set of feasible solutions of any MILP is
\begin{align}
  \mathcal{Q}\coloneqq  \left\{\bm{x}\in\mathbb{Z}_{\geq0}^n,\bm{y}\in\mathbb{R}_{\geq0}^r\mid\begin{pmatrix}A&B\\0&C\end{pmatrix}\cdot\begin{pmatrix}\bm{x}\\\bm{y}\end{pmatrix}=\bm{b}\right\}\tag{MILP}\label{MILP}
\end{align}
for matrices $A\in\mathbb{Z}^{m\times n}$, $B\in\mathbb{Z}^{m\times r}, C\in\mathbb{Z}^{s\times r}$ and a vector $\bm{b}\in \mathbb{Z}^{m+s}$.
The encoding size $\langle \mathit{MILP}\rangle$ of \eqref{MILP} is logarithmic in the largest coefficient $\Delta\coloneqq\max_{i,j}|A_{i,j}|$ and right-hand side, but linear in the number of variables and constraints.

We will make use of the following classical result for finding solutions of \eqref{MILP}, which was proved first by Lenstra~\cite{Lenstra1983} and obtained with improved run time by Kannan~\cite{Kannan1987}.
\begin{proposition}[Kannan~\cite{Kannan1987}]\label{MILP:Kannan}
  For any instance of \eqref{MILP}, in time $2^{\mathcal{O}(n\log(n))}\langle \mathit{MILP}\rangle^{\mathcal{O}(1)}$ one either finds a solution $(\bm{x},\bm{y})\in\mathcal{Q}$ or determines that $\mathcal{Q}=\emptyset$.
\end{proposition}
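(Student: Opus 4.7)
The plan is to reduce the MILP problem to pure integer feasibility in fixed dimension $n$, and then invoke the (by now classical) machinery of basis reduction and Khintchine-style flatness that underlies Lenstra's and Kannan's algorithms. First, I would eliminate the continuous variables: given the block system of \eqref{MILP}, the projection $P\coloneqq \{\bm{x}\in\mathbb{R}^n_{\geq 0}\mid\exists\,\bm{y}\in\mathbb{R}^r_{\geq 0}\colon A\bm{x}+B\bm{y}=\bm{b},\,C\bm{y}=\bm{b}_s\}$ is a rational polyhedron in $\mathbb{R}^n$. Although $P$ may have exponentially many facets in general, membership in $P$ (and a separating hyperplane if $\bm{x}\notin P$) can be decided by a single LP of size polynomial in $\langle\mathit{MILP}\rangle$ using LP duality. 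Thus the problem reduces to finding a point of $P\cap\mathbb{Z}^n_{\geq 0}$ (if any) given an efficient separation oracle for~$P$.

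Next, I would apply the Lenstra--Kannan scheme to the $n$-dimensional polyhedron $P$. Using the oracle, compute in polynomial time an affine transformation $\tau$ such that $\tau(P)$ sits between two concentric balls of radius ratio at most $n^{\mathcal{O}(1)}$ (via the ellipsoid method or shallow separation). Then run Kannan's generalized basis reduction on the lattice $\tau(\mathbb{Z}^n)$ to obtain a basis that is simultaneously short and nearly orthogonal. Khintchine's flatness theorem then gives the dichotomy: either $\tau(P)$ contains a lattice point (which lifts directly to an $\bm{x}\in P\cap\mathbb{Z}^n_{\geq 0}$), or $\tau(P)$ has lattice width $\mathcal{O}(n\log n)$ in some direction $\bm{v}$ identified from the reduced basis. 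In the latter case, the integer points of $P$ lie on $\mathcal{O}(n\log n)$ consecutive hyperplanes $\{\bm{x}\mid\langle\bm{v},\bm{x}\rangle=k\}$, and the problem recurses on each intersection, which is an integer feasibility problem of dimension $n-1$ with encoding size bounded polynomially in the current one.

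A straightforward recursion $T(n)\leq\mathcal{O}(n\log n)\cdot T(n-1)\cdot\langle\mathit{MILP}\rangle^{\mathcal{O}(1)}$ unrolls to $T(n)\leq 2^{\mathcal{O}(n\log n)}\langle\mathit{MILP}\rangle^{\mathcal{O}(1)}$, matching the claimed bound; the base case $n=0$ is an LP feasibility check for $\bm{y}$. Once an integer feasible $\bm{x}^\star$ is found (or infeasibility is certified at every branch), the corresponding $\bm{y}^\star\in\mathbb{Q}^r_{\geq 0}$ is recovered in polynomial time from the residual LP $B\bm{y}=\bm{b}-A\bm{x}^\star$, $C\bm{y}=\bm{b}_s$, $\bm{y}\geq 0$.

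The main obstacle is carrying out the second step to get the sharp $2^{\mathcal{O}(n\log n)}$ exponent rather than Lenstra's original $2^{\mathcal{O}(n^2)}$. This requires Kannan's refined generalized basis reduction (which improves on LLL in the sense that the ratio between successive Gram--Schmidt norms is controlled by $\mathrm{poly}(n)$ rather than a constant) together with the corresponding $\mathcal{O}(n\log n)$ flatness bound for convex bodies; handling the oracle presentation of $P$ cleanly inside this basis-reduction argument (rather than for an explicit facet description) is the delicate technical point, and is what Kannan's paper works out in detail.
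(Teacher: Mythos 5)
The paper offers no proof of this proposition: it is imported as a black-box citation of Kannan (1987), refining Lenstra (1983), so there is no in-paper argument to compare against. Your sketch is a faithful reconstruction of the standard Lenstra--Kannan argument for exactly this mixed-integer form: eliminate the continuous variables by projecting onto the integer coordinates and access the resulting polyhedron through an LP-based separation oracle, round it via the ellipsoid method, and use a reduced lattice basis together with the flatness theorem to either exhibit a lattice point or branch on polynomially many parallel lattice hyperplanes, so that the recursion unrolls to $2^{\mathcal{O}(n\log(n))}\langle\mathit{MILP}\rangle^{\mathcal{O}(1)}$. One historical nit: Kannan did not have an $\mathcal{O}(n\log n)$ flatness constant (his lattice-width bound is a higher-degree polynomial in $n$; bounds of order $n\log n$ are far more recent), but any $n^{\mathcal{O}(1)}$ width bound feeds the same recursion $T(n)\le n^{\mathcal{O}(1)}\cdot T(n-1)\cdot\langle\mathit{MILP}\rangle^{\mathcal{O}(1)}$ and still yields $2^{\mathcal{O}(n\log(n))}$, so the claimed run time is unaffected.
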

In the following, we reproduce two useful lemmata about the structure of \eqref{MILP} solutions, which are inherited from its integral and fractional parts.
\begin{lemmarep}
\label{lem:MILP:basic}
  For any instance of \eqref{MILP} and any $(\hat{\bm{x}},\hat{\bm{y}})\in\mathcal{Q}$, in time $\langle \mathit{MILP}\rangle^{\mathcal{O}(1)}$ we can find $(\hat{\bm{x}},\tilde{\bm{y}})\in\mathcal{Q}$ such that $\tilde{\bm{y}}$ is a vertex solution of the following restricted LP:
  \begin{align*}
    \left\{\bm{y}\in \mathbb{R}_{\geq0}^{r}\mid \begin{pmatrix}B\\C\end{pmatrix}\cdot\bm{y} = \bm{b}-\begin{pmatrix}A\\0\end{pmatrix}\cdot\hat{\bm{x}}\right\}\enspace .\tag{R-LP}\label{R-LP}
  \end{align*}
\end{lemmarep}
\begin{proof}
  By assumption, \eqref{R-LP} is feasible, as $\hat{\bm{y}}$ is a solution.
  With the ellipsoid algorithm~\cite[Remark 6.5.2]{GrotschelLS1988} we find a vertex solution $\tilde{\bm{y}}$ of \eqref{R-LP} in polynomial time.
\end{proof}
\begin{lemmarep}
\label{lem:MILP:support}
  For any instance of \eqref{MILP} and any $(\hat{\bm{x}},\hat{\bm{y}})\in\mathcal{Q}$ there is some $(\tilde{\bm{x}},\hat{\bm{y}})\in\mathcal{Q}$ such that $\tilde{\bm{x}}$ has minimum-cardinality support $|\supp(\bm{x})|$ of all solutions $\bm{x}$ to the restricted ILP
  \begin{align*}
    \left\{\bm{x}\in\mathbb{Z}_{\geq0}^n\mid \begin{pmatrix}A\\0\end{pmatrix}\cdot\bm{x}= \bm{b}-\begin{pmatrix}B\\C\end{pmatrix}\cdot\hat{\bm{y}}\right\}\enspace .\tag{R-ILP}\label{R-ILP}
  \end{align*}
\end{lemmarep}
\begin{proof}
  By assumption, \eqref{R-ILP} is feasible, as $\hat{\bm{x}}$ is a solution.
  Hence, it also has a solution $\tilde{\bm{x}}$ with minimum support size.
  Then $(\tilde{\bm{x}},\hat{\bm{y}})\in\mathcal{Q}$ holds because of $(\hat{\bm{x}},\hat{\bm{y}})\in\mathcal{Q}$ and $A\cdot\hat{\bm{x}}=A\cdot\tilde{\bm{x}}$.
\end{proof}
Importantly, this implies that \emph{any} support size bound for an ILP can be directly applied to the integer variables of an MILP.
One of these applications is an algorithm to solve MILPs with few constraints. 
This was presented explicitly and analyzed in terms of $m$ and $\Delta$ by Rohwedder and Verschae~\cite[p.\@ 30]{Rohwedder2019}.
As we will make extensive use of it, we formulate the underlying idea in the following lemma.
\begin{lemma}
\label{lem:MILP:support-alg}
  For any instance of \eqref{MILP} and any $s\leq n$, in time $2^{\mathcal{O}(s\log(n))}\cdot \langle \mathit{MILP}\rangle^{\mathcal{O}(1)}$ we either find a solution $\bm{x}$ of minimum-cardinality support such that $(\bm{x},\bm{y})\in \mathcal{Q}_{\leq s}\coloneqq \{(\bm{x},\bm{y})\in\mathcal{Q}\colon |\supp(\bm{x})|\leq s\}$, or determine that $\mathcal{Q}_{\leq s}=\emptyset$  %
\end{lemma}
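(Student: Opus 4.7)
The plan is to enumerate all possible supports of size at most $s$ and, for each such candidate, reduce to a smaller \eqref{MILP} whose integer block has at most $s$ variables, which can then be solved by Kannan's algorithm (\autoref{MILP:Kannan}).

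More precisely, I would iterate $s' = 0, 1, \ldots, s$ in increasing order, and for each $s'$ enumerate all subsets $S \subseteq [n]$ with $|S| = s'$. For a fixed $S$, I would construct the restricted instance by fixing $x_i = 0$ for every $i \notin S$; equivalently, delete all columns of $A$ indexed outside $S$. This yields a new \eqref{MILP} in which the integer part has only $s'$ variables and the constraint system and encoding size are otherwise unchanged (up to a polynomial factor). I then apply \autoref{MILP:Kannan} to this restricted instance, which runs in time $2^{\mathcal{O}(s'\log s')}\langle\mathit{MILP}\rangle^{\mathcal{O}(1)}$, and returns some $(\bm{x},\bm{y})$ or reports infeasibility. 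As soon as a feasible solution is found, I return it; if no $S$ of any size $s'\leq s$ yields a feasible instance, I report $\mathcal{Q}_{\leq s}=\emptyset$. Since we proceed in order of increasing $s'$, the first $(\bm{x},\bm{y})$ returned automatically has minimum-cardinality support among all members of $\mathcal{Q}_{\leq s}$, because any solution $(\bm{x}^\star,\bm{y}^\star)\in\mathcal{Q}_{\leq s}$ with $|\supp(\bm{x}^\star)|=s''\leq s$ would have been detected at iteration $s''$ when $S=\supp(\bm{x}^\star)$ was tried.

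For the running time, the number of subsets considered is $\sum_{s'=0}^{s}\binom{n}{s'} \leq (s+1)\cdot n^{s} = 2^{\mathcal{O}(s\log n)}$, and each is processed in time $2^{\mathcal{O}(s\log s)}\cdot \langle\mathit{MILP}\rangle^{\mathcal{O}(1)}$ via Kannan. Multiplying these bounds and using $s\leq n$ absorbs the $s\log s$ term into $s\log n$, giving the claimed total time $2^{\mathcal{O}(s\log n)}\cdot \langle\mathit{MILP}\rangle^{\mathcal{O}(1)}$.

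The argument is essentially routine; the only point that requires a bit of care is the correctness of the minimum-support guarantee, which is exactly why I iterate $s'$ in increasing order rather than simply enumerating all size-$\leq s$ subsets at once. Note also that once a support $S$ is fixed, only the integer variables are restricted, while the continuous block $\bm{y}$ remains unconstrained in dimension, so the resulting instance still fits the form of \eqref{MILP} and \autoref{MILP:Kannan} applies with $s'$ in place of $n$ as the integer dimension.
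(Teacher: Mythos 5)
Your proposal is correct and matches the paper's proof essentially verbatim: both enumerate candidate supports of size $0$ up to $s$ in increasing order, restrict \eqref{MILP} by zeroing the integer variables outside the candidate support, invoke \autoref{MILP:Kannan} on each restricted instance, and combine the $n^{s}$ enumeration factor with Kannan's $2^{\mathcal{O}(s\log s)}$ cost using $s\leq n$. No differences worth noting.
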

\begin{proof}
  We exhaustively try all choices of $\supp(\bm{x})$, which are $\binom{n}{s}\leq n^s$ candidates, from $|\supp(\bm{x})| = 0$ up to $s$.
  For each choice we restrict \eqref{MILP} to $\supp(\bm{x})$, by fixing all other integer variables to $0$.
  With \autoref{MILP:Kannan} we either find a solution with $s$ integral variables, or determine the infeasibility, in time $s^{\mathcal{O}(s)}\langle \mathit{MILP}\rangle^{\mathcal{O}(1)}$.
  The claimed run time follows from that $s\leq n$.
\end{proof}
Note that a support size bound $s$ on any solution of \eqref{R-ILP} allows us to use \autoref{lem:MILP:support-alg} to find a solution with support size $s$, or decide the infeasibility of the entire \eqref{MILP}.
\autoref{lem:MILP:support-alg} directly extends to optimizing a linear objective function; and to finding a non-zero solution of minimum-cardinality support, which might be of interest for augmentation algorithms.

\medskip\noindent\textbf{Approximation Guarantees.}
Ultimately, we want to guarantee that our solution has a value that is within a factor $(1+\eps)$ of the optimum.
As we approximate in multiple places, we distribute this factor through the use of a linear auxiliary variable $\delta=\eps/c$.
The value of $c$ only depends on the algorithm, not the input, and we can use $\mathcal{O}$-techniques for $\delta\searrow0$ to ensure its existence.
I.e., adding constantly many errors of $\delta$ constantly many times still has an approximation guarantee of $(1+\mathcal{O}(\delta))^{\mathcal{O}(1)}=1+\mathcal{O}(\delta)$.

We approximate by rounding input quantities, to reduce the number of distinct machine speeds and job processing times, and by over-allocating some machines by a limited amount.
The direction of rounding is irrelevant here.
Rounding towards a harder instance (i.e., one with larger jobs and slower machines) increases the makespan.
When rounding towards a simpler instance (i.e., one with smaller jobs and faster machines), the makespan needs to be increased to maintain consistency.
Hence, either way, the makespan increases by the same asymptotic amount.

\medskip\noindent\textbf{Scheduling Problems.}
An instance $\mathcal I$ of $Q||C_{\max}$ consists of a set $\mathcal{J}$ of $N$ \emph{jobs} with processing times $p_1\leq\ldots\leq p_N$, and a set $\mathcal{M}$ of $M$ \emph{machines} with speeds $s_1\leq\ldots\leq s_M$.
In the simpler $P||C_{\max}$ setting, all machines are identical, and consequently run at unit speed, i.e., $s_1 = \ldots = s_M = 1$.

A \emph{schedule} $\sigma\colon \mathcal{J}\to \mathcal{M}$ assigns jobs to machines.
For each machine $i\in\mathcal M$ the set of jobs that are assigned to $i$ by $\sigma$ is $\sigma^{-1}(i)$.
The \emph{completion time} of a machine $i\in \mathcal{M}$ is $C_{i}^{(\sigma)} \coloneqq  \sum_{j\in \sigma^{-1}(i)}p_j/s_i$; the completion times of the machines form the vector $\bm{C}^{(\sigma)}$.
A schedule's \emph{makespan} is $C_{\max}^{(\sigma)}$, the largest completion time.
We will generally omit the schedule $\sigma$ in the notation.
The objective of our scheduling problems is to find schedules which minimize the makespan $C_{\max}$; let $\OPT(\mathcal I)=\min_{\sigma}C_{\max}^{(\sigma)}$ be the optimal makespan of instance~$\mathcal I$.
For any $\alpha\geq 1$, an $\alpha$-approximate schedule for $\mathcal I$ is a schedule of makespan at most $\alpha\cdot\OPT(\mathcal I)$.

\medskip\noindent\textbf{Run Time Analysis.}
\label{sec:RunTime}
We use $\mathcal{O}$-identities specific to the analysis of EPTAS run times $2^{\mathcal{O}(f(1/\eps))}+N^{\mathcal{O}(1)}$.
For our algorithms we have $f(1/\eps)\in\mathcal{O}(1/\eps^2)$ and can therefore use $\log(N)\in\mathcal{O}(1/\eps^2)$ when analyzing terms logarithmic in $N$, as otherwise $\mathcal{O}(N)$ dominates the run time.
The inequality $a\cdot b\leq a^2+b^2$ implies that $2^{\mathcal{O}(1/\eps)}\cdot \log^{\mathcal{O}(1)}(N) =2^{\mathcal{O}(1/\eps)}+\log^{\mathcal{O}(1)}(N)$.
Because of $\log^{\mathcal{O}(1)}(N)\subset\mathcal{O}(N)$, this gives us another way to eliminate run time factors logarithmic in $N$.
Further, as $\log(1+x)\geq x$ for $0\leq x\leq 1$ (in base 2), it holds that $\log_{1+\eps}(\cdot)=\log(\cdot)/\log(1+\eps)\in\mathcal{O(\log(\cdot)/\eps})$.

\section{Refined Support Size Bounds for Integer Linear Programs}\label{sec:supp}
In this section, we refine the general support size bounds independent of $n$ for integer programs.
Previous such bounds used as parameters the number of constraints $m$, and the largest absolute value~$\Delta$ of an entry in the constraint matrix $A$.
In our MILP formulation for $Q||C_{\max}$, we bound the support size by the \emph{maximum $1$-norm} of a column vector, denoted by $A_{\max} \coloneqq  \max_{i=1,\ldots,n}\|A_i\|_1$.
Clearly, $\Delta\leq A_{\max}\leq m\cdot\Delta$ holds, but it also means that support size bounds using only $m$ and~$\Delta$ are too coarse for some ranges.
Instead, we will show bounds using only $m$ and $A_{\max}$.

In the following, let $\mathcal{P}=\{\bm{x}\in\mathbb{R}^n_{\geq0}\mid A\bm{x}=\bm{b}\}$ for $A\in\mathbb{Z}^{m\times n}$ be a polytope and $\mathcal{P}_I=\text{Conv}(\mathcal{P}\cap\mathbb{Z}^{n})$ be the convex hull of the integer points of $\mathcal{P}$.
For simplicity, we only consider feasible ILPs $\mathcal{L}\neq\emptyset$ with $\operatorname{rank}(A)=m$ in this section.
Recall that every ILP $\mathcal{L}\coloneqq\max \{\bm{c}^{\intercal} \bm{x}\mid \bm{x}\in\mathcal{P}_I\}$ with an optimal solution also has an optimal vertex solution $\bm{v}\in\mathcal{P}_I$.
Let $S \coloneqq  \supp(\bm{v})$ be the support of the vertex solution $\bm{v}$, and $s \coloneqq  |S|$ the size of the support.

Our results are based on a bound by Aliev et.~al~\cite{AlievDEOW2018}.
We omit the inverse of the $\gcd$ of all submatrix determinants $g^{-1}$ from their result, as it is $1$ for almost all matrices.
\begin{proposition}[{Aliev et. al.~\cite[Thm. 1(2)]{AlievDEOW2018}}]
\label{prop:Supp}
  Any ILP $\mathcal{L}$ with constraint matrix $A\in\mathbb{Z}^{m\times n}$ has an optimal solution $\bm{v}\in\mathcal{L}$ with support size $|\supp(\bm{v})|=s\leq m+\log(\sqrt{\det(A\cdot A^T)})$.
\end{proposition}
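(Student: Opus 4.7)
My plan is to start with an optimal vertex $\bm{v}\in\mathcal{P}_I$ of the integer hull having minimum support among all optimal integer solutions, and to derive a contradiction whenever $s\coloneqq|\supp(\bm{v})|$ exceeds the claimed bound. Set $S=\supp(\bm{v})$ and write $A_S\in\mathbb{Z}^{m\times s}$ for the column-submatrix; by dropping a redundant row if necessary I may assume $A_S$ has full row rank $m$. If $s\leq m$ the bound is immediate, so I focus on $s>m$. Cauchy--Binet gives $\det(A_S A_S^{\intercal})=\sum_{T\subseteq S,\,|T|=m}\det(A_T)^2\leq\det(AA^{\intercal})$, so it suffices to establish $s\leq m+\log_2\sqrt{\det(A_S A_S^{\intercal})}$.

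The conceptual core is to produce a nonzero vector $\bm{z}\in\ker(A_S)\cap\{-1,0,1\}^s$ whenever $s$ exceeds the stated threshold, and then to exploit $\bm{z}$ to shrink the support of $\bm{v}$. Extending such a $\bm{z}$ by zeros outside $S$ to $\tilde{\bm{z}}\in\mathbb{Z}^n$, one has $A\tilde{\bm{z}}=\bm{0}$, and vertex-optimality of $\bm{v}$ forces $\bm{c}^{\intercal}\tilde{\bm{z}}=0$ (otherwise $\bm{v}+\tilde{\bm{z}}$ or $\bm{v}-\tilde{\bm{z}}$ would strictly improve the objective). Then moving $\bm{v}$ along whichever sign of $\tilde{\bm{z}}$ drives some positive coordinate to zero, by the largest integer step that preserves non-negativity, produces another integer optimum of strictly smaller support, contradicting minimality.

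To produce $\bm{z}$, I would pigeonhole the $2^s$ subset-sum vectors $\chi_T\coloneqq A_S\bm{1}_T$ for $T\subseteq S$: every $\chi_T$ lies in the integer points of the zonotope $A_S[0,1]^s\subseteq\mathbb{R}^m$, and whenever $2^s$ strictly exceeds this lattice-point count, two subsets $T_1\neq T_2$ collide, so that $\bm{1}_{T_1}-\bm{1}_{T_2}$ is the desired $\pm 1$-kernel vector. The volume of the zonotope is $\sum_{|T|=m,\,T\subseteq S}|\det(A_T)|$ by the Cauchy--Binet formula, which relates directly to $\sqrt{\det(A_S A_S^{\intercal})}$. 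The main obstacle is converting this volume into the clean count $\sqrt{\det(A_S A_S^{\intercal})}$ without the $\binom{s}{m}^{1/2}$ overhead that a naive Cauchy--Schwarz estimate incurs; I expect this tightening is achieved via a Smith-normal-form reduction $U A_S V=[D\mid 0]$, under which the image of $\{0,1\}^s$ is confined to a box of side-lengths equal to the elementary divisors $d_1,\ldots,d_m$ of $A_S$ whose product is exactly $\sqrt{\det(A_S A_S^{\intercal})}$ up to the $\gcd$-of-maximal-minors factor $g$ that the statement of the proposition suppresses.
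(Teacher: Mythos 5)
There is a genuine gap, and it sits exactly where you flag your ``main obstacle''. First, note that the paper does not prove this proposition at all: it is quoted from Aliev et al., and the paper explicitly remarks that the proof uses Siegel's Lemma; the paper's own Lemma~\ref{lem:supp:newbound:easy} exists precisely because an elementary counting argument of the kind you propose only yields weaker constants. Your skeleton (produce a nonzero $\bm{z}\in\ker(A_S)\cap\{-1,0,1\}^s$, then shrink the support of the optimal vertex) is the right one, and the support-reduction step is fine. But the pigeonhole step does not deliver the stated bound. The number of distinct values $A_S\bm{1}_T$ is governed by the number of points of the image lattice $\Lambda=A_S\mathbb{Z}^s$ inside the zonotope $Z=A_S[0,1]^s$, which is on the order of $\operatorname{vol}(Z)/\det(\Lambda)=\sum_{|T|=m}|\det(A_T)|/g$. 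Since $\sum_T|\det(A_T)|$ can exceed $\sqrt{\sum_T\det(A_T)^2}=\sqrt{\det(A_SA_S^{\intercal})}$ by a factor of order $\binom{s}{m}^{1/2}$ (take many bases $T$ with $|\det(A_T)|=1$), the pigeonhole only gives $s\leq m+\tfrac12\log\binom{s}{m}+\log\sqrt{\det(A_SA_S^{\intercal})}$, i.e., a bound with $s$ still on the right-hand side at a harmful scale --- this is essentially how one lands at Eisenbrand--Shmonin-type bounds $2m\log(\cdot)$ rather than $m+\log\sqrt{\det(\cdot)}$.

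The proposed Smith-normal-form rescue does not close this gap, for two reasons. The product of the elementary divisors of $A_S$ is $d_1\cdots d_m=g$, the gcd of the maximal $m\times m$ minors (the $m$-th determinantal divisor), \emph{not} $\sqrt{\det(A_SA_S^{\intercal})}$ up to a factor of $g$; typically $g=1$ while $\sqrt{\det(A_SA_S^{\intercal})}$ is huge. Moreover, if $UA_SV=[D\mid 0]$ with $V$ unimodular, the image $V^{-1}\{0,1\}^s$ is a sheared parallelepiped whose coordinates are not confined to any box of side $d_i$, so the claimed confinement fails. The step you are missing is Siegel's Lemma in the Bombieri--Vaaler form: via Minkowski's second theorem applied to a suitable symmetric convex body, $\ker(A_S)$ contains $s-m$ linearly independent integer vectors with $\prod_{k}\|\bm{x}_k\|_\infty\leq\sqrt{\det(A_SA_S^{\intercal})}/g$, whence some nonzero kernel vector has $\|\bm{x}\|_\infty\leq\left(\sqrt{\det(A_SA_S^{\intercal})}\right)^{1/(s-m)}<2$ as soon as $s-m>\log\sqrt{\det(A_SA_S^{\intercal})}$, i.e., $\bm{x}\in\{-1,0,1\}^s$. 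That geometry-of-numbers input is not replaceable by the zonotope pigeonhole as written; if you want a self-contained elementary argument, you should instead target the weaker bound of Lemma~\ref{lem:supp:newbound:easy}.
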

In this form, the determinant of the support size bound can depend on $n$.
The strength of \autoref{prop:Supp} is the ability to restrict $A$ to the columns with non-zero variables.
For our vertex solution $\bm{v}$ with support $S=\supp(\bm{v})$, this is exactly $A_S$, the columns of the variables in the support.
Aliev et al.~\cite{AlievDEOW2018} used the inequality $\sqrt{\det(A_S\cdot A_S^T)}\leq(\sqrt{s}\Delta)^m$ to ultimately obtain the support size bound $s\leq 2m\log(2\sqrt{m}\Delta)$~\cite[Theorem 1(ii)]{AlievDOO2017}.
We analyze the term $\det(A_{S}\cdot A_{S}^{T})$ with regard to the more fine-grained parameter $A_{\max}$ to obtain a tighter bound:
\begin{lemma}
\label{lem:PDHada}
  For any matrix $A_S\in\mathbb{Z}^{m\times s}$ it holds that $\sqrt{\det(A_S\cdot A_S^T)}\leq (\sqrt{s/m}\cdot A_{\max})^m$.
\end{lemma}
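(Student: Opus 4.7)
The plan is to apply Hadamard's inequality for positive semi-definite matrices, then convert a row-wise bound into a column-wise one (which is where $A_{\max}$ naturally enters), and finally smooth the product via AM--GM.

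First, I would observe that $B \coloneqq A_S \cdot A_S^T \in \mathbb{Z}^{m\times m}$ is positive semi-definite, so Hadamard's inequality gives
\[
  \det(A_S A_S^T)\;\le\;\prod_{i=1}^{m}(A_S A_S^T)_{ii}\;=\;\prod_{i=1}^{m}\sum_{j=1}^{s} a_{ij}^{2},
\]
where $a_{ij}$ denotes the $(i,j)$-entry of $A_S$. Applying the AM--GM inequality to the $m$ non-negative factors yields
\[
  \prod_{i=1}^{m}\sum_{j=1}^{s}a_{ij}^{2}\;\le\;\left(\frac{1}{m}\sum_{i=1}^{m}\sum_{j=1}^{s}a_{ij}^{2}\right)^{\!m}.
\]

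Next I would swap the order of summation and bound each column separately. For every column $j$, the inequality $\|x\|_2^2\le \|x\|_1^2$ applied to the $j$-th column gives
\[
  \sum_{i=1}^{m} a_{ij}^{2}\;\le\;\Bigl(\sum_{i=1}^{m}|a_{ij}|\Bigr)^{\!2}\;=\;\|A_j\|_1^{2}\;\le\;A_{\max}^{2}.
\]
Summing over the $s$ columns of $A_S$ produces $\sum_{i,j} a_{ij}^{2}\le s\cdot A_{\max}^{2}$.

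Combining the three estimates gives $\det(A_S A_S^T)\le \bigl(\tfrac{s}{m}A_{\max}^{2}\bigr)^{m}$, and taking square roots yields the claimed $\sqrt{\det(A_S A_S^T)}\le (\sqrt{s/m}\cdot A_{\max})^{m}$. There is no genuine obstacle here; the only thing to be careful about is not mixing up rows and columns, since $A_{\max}$ is defined as the maximum column $1$-norm whereas Hadamard's inequality naturally produces row-$2$-norms of $A_S$ as the diagonal of $A_S A_S^T$. The switch between the two is handled cleanly by the elementary inequality $\|x\|_2\le\|x\|_1$ applied column-wise, which is exactly the step that replaces the loose bound $(\sqrt{s}\Delta)^m$ of Aliev et al.\ by the refined $(\sqrt{s/m}\,A_{\max})^m$.
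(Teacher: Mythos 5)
Your proof is correct, and it reaches the bound by a noticeably more elementary route than the paper after the common first step. Both arguments open with Hadamard's inequality for positive semi-definite matrices, $\det(A_SA_S^T)\le\prod_{i=1}^m\sum_{j=1}^s a_{ij}^2$ (the paper splits off the singular case and cites a reference for the positive definite case, but the PSD version you invoke is equally standard). From there the paper treats $\prod_i\sum_j a_{ij}^2$ as a constrained optimization problem: it first maximizes the \emph{sum} $\sum_{i,j}x_{ij}^2$ over the product of column simplices, invoking Bauer's maximum principle to locate the maximum at a vertex with $s$ non-zero entries (yielding $sA_{\max}^2$), and then passes from the sum bound to the product bound via concavity of the logarithm and a rebalancing argument. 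You obtain the same two facts directly: the bound $\sum_{i,j}a_{ij}^2\le sA_{\max}^2$ follows column-wise from $\|x\|_2^2\le\|x\|_1^2$, and the passage from sum to product is exactly AM--GM. These are the same underlying inequalities (AM--GM is the log-concavity rebalancing in disguise), but your presentation avoids the optimization framing and the appeal to Bauer's maximum principle entirely, which makes the proof shorter and self-contained. Nothing is lost: your column-wise $\ell_2\le\ell_1$ step is precisely where $A_{\max}$ enters, matching the paper's use of the column constraints $\sum_i|a_{ij}|\le A_{\max}$, and the final bound and its tightness are identical.
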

\begin{proof}
  The matrix $G\coloneqq A_S\cdot A_S^T$ is symmetric and positive semi-definite. %
  If $G$ has an eigenvalue of $0$ then $\det(G)=0$ and the inequality holds.
  We will thus assume that $G$ is positive definite, i.e., all eigenvalues are positive. 
  It is a classical result that for positive definite matrices, the Hadamard inequality can be strengthened to $\det(G)\leq\prod_{i=1}^m G_{i,i}$, the product of the diagonal entries $G_{i,i}$. 
  We refer to a modern presentation by Browne et al.~\cite[Thm. 2]{BrowneEHC2021} for this fact.
  As $G=A_{S}\cdot A_{S}^{T}$, it is sufficient to bound $\varphi(A_{S})\coloneqq \prod_{i=1}^m\sum_{j=1}^s A_{S;i,j}^2$ subject to $\sum_{i=1}^m |A_{S;i,j}|\leq A_{\max}$ for $j = 1,\ldots,s$ by $(s/m\cdot A_{\max}^2)^m$ to obtain our result.
  We will first characterize a matrix $A_{S}$ such that $\varphi(A_{S})$ is maximal and then relate $\varphi(A_{S})$ to $(s/m\cdot A_{\max}^2)^m$. 
  As all entries $A_{S;i,j}$ of $A_{S}$ occur as squares or absolute values in the optimization, we can assume $A_{S;i,j}\geq 0$ in the following.
  As $\varphi$ is monotone in each variable, the matrix $A_{S}$ that maximizes $\varphi(A_{S})$ under the condition $\sum_{i=1}^m |A_{S;i,j}|\leq A_{\max}$ will fulfill these constraints with equality, i.e., $\sum_{i=1}^m |A_{S;i,j}|= A_{\max}$  for $j = 1,\ldots,s$.
  Renaming $A_{S;i,j}$ to $x_{i,j}$ thus gives us the optimization problem:
  \begin{align*}
    \max \prod_{i=1}^m\sum_{j=1}^s x_{i,j}^2\quad\text{s.t.}\;\sum_{i=1}^m x_{i,j}=A_{\max}\quad x_{i,j}\geq0\quad\for\; i = 1,\ldots,m; j = 1,\ldots,s \enspace .
  \end{align*}
  To bound the optimal solutions to this program, we first consider optimal solutions over the same region with objective function $\sum_{i=1}^m\sum_{j=1}^s x_{i,j}^2$.
  This is a convex objective function over a polyhedral region.
  Bauer's maximum principle implies that the maximum is assumed at a vertex and thus has $s$ non-zero variables~\cite{Bauer1960}.
  Consequently, we have $\sum_{i=1}^m\sum_{j=1}^s x_{i,j}^2\leq s\cdot A_{\max}^2$.
  Now, consider the problem of maximizing $\max\prod_{i=1}^m y_i$ with $\sum_{i=1}^m y_i\leq s\cdot A_{\max}^2$ and $y_i\geq0$.
  The logarithm is monotone, so we can apply it to the objective, giving $\sum_{i=1}^m\log(y_i)$ instead.
  For any solution $x^{*}_{i,j}$ maximizing $\sum_{i=1}^m\sum_{j=1}^s x_{i,j}^2$, we can compute the $y^{*}_{i}$ with $y^{*}_{i}=\sum_{j=1}^{s}{x^{*}_{i,j}}^{2}$ that will maximize $\sum_{i=1}^m\log(y_i)$.
  As the logarithm is concave, we can thus maximize  $\sum_{i=1}^m\log(y_i)$ by $y^{*}_1=\ldots=y^{*}_m=s/m\cdot A_{\max}^2$, as we could otherwise improve a solution by re-balancing it.
  Hence, $\varphi(A_{S}) \leq (s/m\cdot A_{\max}^2)^{m}$, which implies our inequality.
\end{proof}
Importantly, by substituting $A_{\max}$ with $\sqrt{m}\Delta$, our inequality in \autoref{lem:PDHada} becomes the one used by Aliev et. al.~\cite{AlievDEOW2018}.
Therefore, any of the following results can also be obtained for $\sqrt{m}\Delta$ instead of $A_{\max}$.
\autoref{prop:Supp} and \autoref{lem:PDHada} imply the essential intermediate result:
\begin{corollary}
\label{cor:SuppIneq}
  Any ILP $\mathcal{L}$ with constraint matrix $A\in\mathbb{Z}^{m\times n}$ has an optimal solution $\bm{v}\in\mathcal{L}$ with $|\supp(\bm{v})|=s$ such that $ s/m\leq 1+\log(\sqrt{s/m}\cdot A_{\max})=1+\log(A_{\max})+\log(s/m)/2$.
\end{corollary}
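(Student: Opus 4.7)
The proof of \autoref{cor:SuppIneq} should be a short chain of two substitutions followed by dividing by $m$. Here is the plan I would carry out.

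First, I would start from an optimal vertex solution $\bm{v}\in\mathcal{L}$ of support $S\subseteq\{1,\ldots,n\}$ with $s=|S|=|\supp(\bm{v})|$. The key preparatory observation, already highlighted in the paragraph preceding \autoref{lem:PDHada}, is that $\bm{v}$ is also an optimal solution of the ILP whose constraint matrix is the column-restriction $A_S\in\mathbb{Z}^{m\times s}$ (all other variables are $0$ anyway). Applying \autoref{prop:Supp} to this restricted ILP yields
\[
  s \;\leq\; m + \log\bigl(\sqrt{\det(A_S\cdot A_S^{\intercal})}\bigr),
\]
where the determinant is now taken over the support columns only, which is exactly the form needed to feed into \autoref{lem:PDHada}.

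Second, I would plug in \autoref{lem:PDHada}, which bounds $\sqrt{\det(A_S\cdot A_S^{\intercal})}\leq (\sqrt{s/m}\cdot A_{\max})^{m}$. Since $\log$ is monotone, this gives
\[
  s \;\leq\; m + \log\bigl((\sqrt{s/m}\cdot A_{\max})^{m}\bigr) \;=\; m + m\cdot\log\bigl(\sqrt{s/m}\cdot A_{\max}\bigr).
\]
Dividing both sides by $m$ yields the first claimed inequality $s/m\leq 1+\log(\sqrt{s/m}\cdot A_{\max})$, and splitting the logarithm as $\log(\sqrt{s/m}\cdot A_{\max})=\log(A_{\max})+\tfrac{1}{2}\log(s/m)$ gives the second equality asserted in the statement.

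There is no real obstacle here: the only point that needs a brief justification is the transition from \autoref{prop:Supp}, which is stated for a constraint matrix $A$, to its application with $A_S$ in place of $A$. This is legitimate because restricting to support columns yields a feasible ILP with the same optimum $\bm{v}$, and \autoref{prop:Supp} applies to that restricted ILP; the paragraph before \autoref{lem:PDHada} already flags this as the reason for introducing $A_S$ in the first place. Everything else is mechanical rearrangement, so the proof in the paper should amount to essentially one display of the combined inequality.
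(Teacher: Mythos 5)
Your proposal is correct and is exactly the derivation the paper intends: the corollary is stated as an immediate consequence of \autoref{prop:Supp} applied to the support-restricted matrix $A_S$ combined with the determinant bound of \autoref{lem:PDHada}, followed by taking logarithms and dividing by $m$. The justification for restricting to $A_S$ is precisely the point the paper flags in the paragraph before \autoref{lem:PDHada}, so nothing is missing.
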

The proof of \autoref{prop:Supp} uses Siegel's Lemma, a deep result from transcendental number theory.
In contrast, Berndt et. al.~\cite{BerndtJK2021} derive a support size bound of the same form using only elementary methods, but with worse constants.
Building on their approach, we also derived a bound, which is slightly weaker than \autoref{cor:SuppIneq}, but requires only elementary combinatorial arguments.
\begin{lemma}
\label{lem:supp:newbound:easy}
  Any ILP $\mathcal{L}$ with constraint matrix $A\in\mathbb{Z}^{m\times n}$ has an optimal solution $\bm{v}\in\mathcal{L}$ with $|\supp(\bm{v})|=s$ such that $s/m\leq 1+\log(e)+\log(1+(s/m)\cdot A_{\max})$.
\end{lemma}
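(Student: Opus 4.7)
The plan is to adapt the combinatorial subset-sum argument of Berndt, Jansen, and Klein to the parameter $A_{\max}$. I fix an optimal $\bm{v}\in\mathcal{L}$ with minimum support $s := |\supp(\bm{v})|$, and for each subset $S\subseteq\supp(\bm{v})$ consider the subset sum $\bm{\sigma}_S := \sum_{i\in S}A_i \in \mathbb{Z}^m$. This produces a family of $2^s$ candidate vectors, and the strategy is to bound $2^s$ by the size of a suitable lattice box.

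The first and key step is to show that the $2^s$ subset sums $\bm{\sigma}_S$ are pairwise distinct. Suppose for contradiction that $\bm{\sigma}_{S_1}=\bm{\sigma}_{S_2}$ for some $S_1\neq S_2\subseteq\supp(\bm{v})$. After replacing with $S_1\setminus S_2$ and $S_2\setminus S_1$, I may assume $S_1\cap S_2=\emptyset$ and, without loss of generality, $S_1\neq\emptyset$. Let $\bm{z}\in\{-1,0,1\}^n$ be defined by $z_i = 1$ for $i\in S_1$, $z_i=-1$ for $i\in S_2$, and $z_i=0$ otherwise; then $A\bm{z}=\bm{\sigma}_{S_1}-\bm{\sigma}_{S_2}=\bm{0}$. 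Both $\bm{v}\pm\bm{z}\geq\bm{0}$ (using $v_i\geq 1$ on $\supp(\bm{v})$) are integer feasible, so optimality of $\bm{v}$ forces $\bm{c}^T\bm{z}=0$. Taking the integer scaling $k := \min_{i\in S_1}v_i\geq 1$, the vector $\bm{v}-k\bm{z}$ remains integer, nonneg, feasible and optimal, and has a zero coordinate at the minimiser in $S_1$, hence strictly smaller support, contradicting minimality of $|\supp(\bm{v})|$.

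The counting step is then routine. From $\|\bm{\sigma}_S\|_1\leq\sum_{i\in S}\|A_i\|_1\leq sA_{\max}$, each $\bm{\sigma}_S$ lies in the $\ell_1$-ball $\{\bm{y}\in\mathbb{Z}^m:\|\bm{y}\|_1\leq sA_{\max}\}$. Decomposing any such $\bm{y}$ via its entrywise absolute value (a nonneg integer vector of $1$-norm at most $sA_{\max}$) and its sign pattern, the number of lattice points is at most $\binom{sA_{\max}+m}{m}\cdot 2^m$. The elementary estimate $\binom{n}{k}\leq(en/k)^k$ yields $\binom{sA_{\max}+m}{m}\leq(e(1+(s/m)A_{\max}))^m$. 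Combining gives $2^s\leq(2e(1+(s/m)A_{\max}))^m$, and taking base-$2$ logarithms produces $s/m\leq 1+\log e+\log(1+(s/m)A_{\max})$, exactly as claimed.

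The main obstacle is the distinctness argument: one has to combine the minimum-support choice of $\bm{v}$ with the kernel identity $A\bm{z}=\bm{0}$, the cost-orthogonality $\bm{c}^T\bm{z}=0$ forced by optimality, and the right integer scaling $k$ to extract a genuinely support-reducing optimal solution from any putative collision. Once distinctness is in place, the counting step is an off-the-shelf application of the standard binomial bound, and the constant $1+\log e=\log(2e)$ arises exactly from the factor $e^m\cdot 2^m$ in the lattice-point count.
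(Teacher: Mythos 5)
Your proof is correct and follows essentially the same route as the paper's: pigeonhole on the $2^s$ subset sums of the support columns against a stars-and-bars count $2^m\binom{sA_{\max}+m}{m}$ of lattice points in the $\ell_1$-ball of radius $sA_{\max}$, followed by the bound $\binom{n}{k}\leq (en/k)^k$. The only difference is that the paper imports the distinctness of the subset sums as a black-box lemma of Berndt, Jansen and Klein (stated for a vertex solution of $\mathcal{P}_I$), whereas you give a self-contained proof via a minimum-support optimal solution and an explicit support-reducing augmentation $\bm{v}-k\bm{z}$; both justifications are valid.
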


\begin{toappendix}
\begin{proof}[Proof of \autoref{lem:supp:newbound:easy}]
  The following lemmata are instrumental in bounding the support size~$s$:
\begin{lemma}[{\cite[Lemma 2]{BerndtJK2021}}]
 \label{lem:supp:bjk}
 The only vector $\bm{y}\in \{-1,0,1\}^{n}$ with $\supp(\bm{y})\subseteq S$ and $A\bm{y}=\bm{0}$ is the trivial solution $\bm{y}=\bm{0}$.
\end{lemma}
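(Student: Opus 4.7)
The plan is to argue by contradiction using the characterization of $\bm{v}$ as a vertex of $\mathcal{P}_I$. Suppose there exists a non-zero $\bm{y} \in \{-1,0,1\}^n$ with $\supp(\bm{y}) \subseteq S$ and $A\bm{y}=\bm{0}$. I will construct two distinct points in $\mathcal{P}_I$ whose midpoint is $\bm{v}$, contradicting vertexness. The candidates are $\bm{v}^{+} \coloneqq \bm{v}+\bm{y}$ and $\bm{v}^{-} \coloneqq \bm{v}-\bm{y}$.

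First I would verify that $\bm{v}^{+}, \bm{v}^{-} \in \mathcal{P} \cap \mathbb{Z}^{n} \subseteq \mathcal{P}_I$. Integrality is immediate from $\bm{v},\bm{y} \in \mathbb{Z}^n$, and $A\bm{v}^{\pm} = A\bm{v} \pm A\bm{y} = \bm{b}$ because $\bm{y}$ lies in the kernel of $A$. The only genuine check is non-negativity, and this is where $\supp(\bm{y}) \subseteq S$ is used: for $i \notin S$, both $v_i=0$ and $y_i=0$, so the $i$-th coordinates of $\bm{v}^{\pm}$ vanish; for $i \in S$, the integrality of $\bm{v}$ combined with $i \in \supp(\bm{v})$ forces $v_i \geq 1$, while $y_i \in \{-1,0,1\}$, so $v_i \pm y_i \geq 0$.

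Second, I would observe that $\bm{v} = \tfrac{1}{2}\bm{v}^{+} + \tfrac{1}{2}\bm{v}^{-}$ with $\bm{v}^{+} \neq \bm{v}^{-}$ (since $\bm{y} \neq \bm{0}$). Since $\bm{v}$ is a vertex of $\mathcal{P}_I$, it cannot be written as a non-trivial convex combination of two distinct points of $\mathcal{P}_I$. This contradiction forces $\bm{y} = \bm{0}$.

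There is no substantive obstacle beyond spotting the right object to perturb; the only nuance worth noting is the step $v_i \geq 1$ for $i \in S$, which is precisely the combination of integrality of $\bm{v}$ with $i \in \supp(\bm{v})$, and is exactly what makes a $\pm 1$ perturbation along coordinates in $S$ preserve non-negativity.
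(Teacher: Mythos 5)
Your proof is correct: since $\bm{v}$ is a vertex of $\mathcal{P}_I$ with nonnegative integer entries, the check $v_i\geq 1$ for $i\in S$ indeed makes $\bm{v}\pm\bm{y}$ feasible integer points, and the midpoint argument contradicts vertexness. The paper itself only cites this lemma from Berndt, Jansen and Klein without reproving it, and your argument is exactly the standard one behind the cited result, so there is nothing to add.
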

\begin{lemma}
\label{lem:supp:generalbound}
  If $\UB$ is an upper bound on the number of distinct vector values $A\bm{x}$ generated by $\bm{x}\in \{0,1\}^{n}$ with $\supp(\bm{x})\subseteq S$, then we have $s\leq\log(\UB)$. 
\end{lemma}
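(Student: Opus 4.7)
The plan is to show that the linear map $\bm{x}\mapsto A\bm{x}$, when restricted to the $2^{s}$ $\{0,1\}$-vectors supported on $S$, is injective; the claim then follows immediately from the hypothesis on $\UB$.

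More concretely, let $X\coloneqq \{\bm{x}\in\{0,1\}^{n}\mid \supp(\bm{x})\subseteq S\}$, so that $|X|=2^{s}$. I would show that the restriction $A|_{X}\colon X\to \mathbb{Z}^{m}$ is injective. Suppose $\bm{x}_{1},\bm{x}_{2}\in X$ satisfy $A\bm{x}_{1}=A\bm{x}_{2}$, and set $\bm{y}\coloneqq \bm{x}_{1}-\bm{x}_{2}$. Since both $\bm{x}_{1}$ and $\bm{x}_{2}$ lie in $\{0,1\}^{n}$, we have $\bm{y}\in\{-1,0,1\}^{n}$, and clearly $\supp(\bm{y})\subseteq\supp(\bm{x}_{1})\cup\supp(\bm{x}_{2})\subseteq S$. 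Moreover $A\bm{y}=A\bm{x}_{1}-A\bm{x}_{2}=\bm{0}$. By \autoref{lem:supp:bjk}, the only such $\bm{y}$ is the zero vector, so $\bm{x}_{1}=\bm{x}_{2}$, proving injectivity.

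Consequently the image $\{A\bm{x}\mid \bm{x}\in X\}$ contains exactly $2^{s}$ distinct vectors. By the definition of $\UB$, this image has size at most $\UB$, hence $2^{s}\leq \UB$. Taking base-$2$ logarithms yields $s\leq \log(\UB)$, as required.

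There is essentially no obstacle here once \autoref{lem:supp:bjk} is in hand: the lemma is a direct counting/pigeonhole argument built on top of the injectivity provided by the cited result, and all the genuine structural work (ensuring that no non-trivial $\{-1,0,1\}$-combination lies in the kernel of $A$ once restricted to the support of the chosen vertex) has already been done there.
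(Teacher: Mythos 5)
Your proof is correct and follows essentially the same route as the paper: both reduce the claim to the fact that two distinct $\{0,1\}$-vectors supported on $S$ with the same image would yield a nonzero $\{-1,0,1\}$-kernel vector contradicting \autoref{lem:supp:bjk}, the only difference being that you phrase this as injectivity while the paper phrases it as a pigeonhole contradiction.
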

\begin{proof}
  We know that there are exactly $2^{s}$ such vectors $\bm{x}$.
  For $2^{s} > \UB$, the pigeonhole principle therefore implies the existence of distinct $\bm{x},\bm{x}'\in \{0,1\}^{n}$ with $\supp(\bm{x}),\supp(\bm{x}')\subseteq S$ and $A\bm{x} = A\bm{x}'$.
  However, $\bm{y} \coloneqq  \bm{x}-\bm{x}'$, which satisfies $\bm{y}\in \{-1,0,1\}^{n}$, as well as $\supp(\bm{y})\subseteq S$ and $A\bm{y}=\bm{0}$, then contradicts \autoref{lem:supp:bjk}.
  Consequently, it must hold that $2^s\leq\UB$.
\end{proof}
Eisenbrand and Shmonin~\cite{EisenbrandS2006} used the upper bound $\UB\leq(2s\cdot \Delta+1)^{m}$ and obtained the inequality $s\leq m\cdot\log(2s\cdot\Delta+1)$. 
Solving this inequality for $s$ gives a bound of $s \leq 2m\log(4m\Delta)$.
We refine this type of analysis to obtain an alternative bound in terms of $A_{\max}$ instead of $m\cdot\Delta$ in the logarithm.
\begin{lemma}
\label{lem:supp:newbound}
  We have $\UB\leq2^me^m\cdot\left(1+(s/m)\cdot A_{\max}\right)^m$.
\end{lemma}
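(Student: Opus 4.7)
The plan is to count the image of the map $\bm{x}\mapsto A\bm{x}$ (restricted to $\bm{x}\in\{0,1\}^n$ with $\supp(\bm{x})\subseteq S$) row by row, then combine the per-row estimates with AM-GM, exploiting that the per-column $1$-norms are bounded by $A_{\max}$ which controls the total of the row sums.

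First, I would fix a row index $i\in\{1,\dots,m\}$ and observe that for any admissible $\bm{x}$ the coordinate $(A\bm{x})_i=\sum_{j\in S}A_{i,j}x_j$ is an integer lying in the interval
\[
\Bigl[\sum_{j\in S,\;A_{i,j}<0}A_{i,j},\;\sum_{j\in S,\;A_{i,j}>0}A_{i,j}\Bigr],
\]
whose length is exactly $L_i\coloneqq\sum_{j\in S}|A_{i,j}|$. Hence $(A\bm{x})_i$ takes at most $L_i+1$ distinct integer values. Taking the product of these per-row counts yields the crude but sufficient estimate
\[
\UB\leq\prod_{i=1}^{m}(L_i+1).
\]

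Next, I would bound $\sum_i L_i$ globally. Swapping the order of summation,
\[
\sum_{i=1}^{m}L_i=\sum_{j\in S}\sum_{i=1}^{m}|A_{i,j}|=\sum_{j\in S}\|A_j\|_1\leq s\cdot A_{\max},
\]
since $|S|=s$ and every column satisfies $\|A_j\|_1\leq A_{\max}$ by definition. Applying the AM-GM inequality to the $m$ positive numbers $L_i+1$ then gives
\[
\prod_{i=1}^{m}(L_i+1)\leq\Bigl(\tfrac{1}{m}\textstyle\sum_{i=1}^{m}(L_i+1)\Bigr)^{m}=\Bigl(1+\tfrac{1}{m}\textstyle\sum_{i=1}^{m}L_i\Bigr)^{m}\leq\Bigl(1+\tfrac{s}{m}\,A_{\max}\Bigr)^{m}.
\]
Since $1\leq 2^me^m$, this immediately implies the claimed bound $\UB\leq 2^me^m\bigl(1+(s/m)A_{\max}\bigr)^{m}$; in fact it yields the strictly stronger inequality $\UB\leq(1+(s/m)A_{\max})^m$, with the extra factor $2^me^m$ being unnecessary slack.

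I do not expect a genuine obstacle: the row-independent counting is the only conceptual step, and the rest is a one-line AM-GM bound together with the trivial double counting $\sum_i L_i=\sum_j\|A_j\|_1$. The only thing to be careful about is the correct width of the range in step one (it is $L_i$, not $2L_i$, because $\bm{x}\in\{0,1\}^n$ does not take negative values, so the extremal coordinate values are already attained at the zero-one combinations that pick up the positive resp.\ negative entries of $A_{i,\cdot}$ on $S$).
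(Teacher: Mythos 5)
Your proof is correct, and it takes a genuinely different route from the paper's. The paper bounds the whole image globally: it notes $\|A\bm{x}\|_1\leq s\cdot A_{\max}$ by the triangle inequality and then counts all integer vectors in that $\ell_1$-ball via a sign enumeration (contributing the $2^m$) and the stars-and-bars formula, finishing with $\binom{n}{k}\leq(e\cdot n/k)^k$ (contributing the $e^m$). You instead bound each coordinate of $A\bm{x}$ separately — row $i$ ranges over at most $L_i+1$ integers with $L_i=\sum_{j\in S}|A_{i,j}|$ — count the enclosing box, and balance the product $\prod_i(L_i+1)$ against the double-counting identity $\sum_i L_i=\sum_{j\in S}\|A_j\|_1\leq s\cdot A_{\max}$ via AM-GM. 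Both arguments are elementary, but yours is shorter and strictly sharper: it establishes $\UB\leq(1+(s/m)A_{\max})^m$, so the $2^me^m$ factor in the stated lemma is pure slack. Propagated through \autoref{lem:supp:generalbound}, your bound would improve the elementary inequality of \autoref{lem:supp:newbound:easy} from $s/m\leq 1+\log(e)+\log(1+(s/m)A_{\max})$ to $s/m\leq\log(1+(s/m)A_{\max})$, i.e.\ it removes the additive constant $1+\log(e)\approx 2.44$, although it still does not recover the $\log(s/m)/2$ second-order term of \autoref{cor:SuppIneq}. All steps check out, including the only delicate point you flag yourself: the width of the per-row range is $L_i$ rather than $2L_i$ because $\bm{x}$ is a $0$-$1$ vector.
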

\begin{proof}
  As $A_{\max}$ is the largest $1$-norm of any column of $A$, the triangle inequality implies that the length of any vector $A\bm{x}$ is bounded by $\|A\bm{x}\|_1=\|\sum_{i\in S}A_ix_i\|_1\leq A_{\max}\cdot s$.
  We count the number of bounded vectors, again using combinatorial techniques.
  For every bounded vector there is one non-negative vector with the same absolute value in every entry, and there are at most $2^m$ bounded vectors for every non-negative vector, by enumerating the signs of the vector entries.
  With a slack variable, any non-negative bounded vector can also be represented as a distribution of $A_{\max}\cdot s$ indistinguishable items into $m+1$ many distinguishable boxes.
  With the stars-and-bars technique \cite[Thm. 2.12]{Stanley2012} we can then derive an upper bound on the number of bounded vectors
  \begin{align*}
     \UB& \leq2^m\binom{A_{\max}\cdot s+(m+1)-1}{(m+1)-1}=2^m\binom{A_{\max}\cdot s+m}{m}
     \intertext{and by using $\binom{n}{k}\leq(e\cdot n/k)^k$ then}
     &\leq 2^m\left(\frac{e\cdot(A_{\max}\cdot s+m)}{m}\right)^m=2^m e^m\left(1+\frac{s}{m}\cdot A_{\max}\right)^m \enspace . \qedhere
  \end{align*}
\end{proof}
Taking the logarithm and rearranging the terms then proves \autoref{lem:supp:newbound:easy}.
\end{proof}
\end{toappendix}

Unfortunately, both sides of \autoref{cor:SuppIneq} are still dependent on $s$.
We resolve \autoref{cor:SuppIneq} for $s$ in two ways, both parametric in the trade-off between constant and growing terms.
In the first approach, we bound the logarithm by its tangents.
\begin{lemma}
\label{lem:supp:log}
  For any $\alpha>0$ and $x>0$, it holds $\log(x)\leq \alpha\cdot x-\log(e)+\log(\log(e)/\alpha)$.
\end{lemma}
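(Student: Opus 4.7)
The plan is to prove this by exploiting the concavity of the logarithm: a concave function lies below each of its tangent lines, so it suffices to identify the right tangent point and compute the tangent line explicitly.

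First, I would differentiate $f(x) = \log(x)$ (in base $2$) to obtain $f'(x) = \log(e)/x$. The tangent line to $f$ at a point $x_0 > 0$ is then
\begin{equation*}
  T_{x_0}(x) \;=\; \log(x_0) + \frac{\log(e)}{x_0}\bigl(x - x_0\bigr) \;=\; \frac{\log(e)}{x_0}\cdot x \;-\; \log(e) \;+\; \log(x_0).
\end{equation*}
To make the slope equal to $\alpha$, I would pick $x_0 \coloneqq \log(e)/\alpha$, which is positive since $\alpha > 0$. Substituting this into $T_{x_0}$ yields exactly $\alpha x - \log(e) + \log(\log(e)/\alpha)$, the right-hand side of the claim.

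Finally, I would invoke concavity of $\log$ on $(0,\infty)$ (e.g., since the second derivative $-\log(e)/x^2$ is strictly negative) to conclude $\log(x) \leq T_{x_0}(x)$ for all $x > 0$, which is the desired inequality. There is no genuine obstacle here; the only thing to be careful about is the consistent use of base-$2$ logarithms and the derivative identity $(\log_2 x)' = \log_2(e)/x$, together with checking that the chosen tangent point $x_0 = \log(e)/\alpha$ is admissible for every $\alpha > 0$.
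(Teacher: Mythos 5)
Your proof is correct and is essentially the paper's argument: the right-hand side is identified as the tangent line to $\log(x)$ at $x_0=\log(e)/\alpha$ (where both sides agree in value and derivative), and concavity of the logarithm finishes the claim. The computation of the tangent line and the choice of $x_0$ match the paper exactly.
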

\begin{proof}
  At $x=\log(e)/\alpha$, both sides are equal to $\log(\log(e)/\alpha)$, and the derivatives are $\alpha$.
  Hence, the affine function of the right-hand side is an upper bound on the left-hand side, as $\log(x)$ is concave.
\end{proof}
This direct approach allows us to give simple and short formulas for the bounds.
\begin{theorem}
\label{thm:support}
  For any $\alpha\in(0,1)$ there is an optimal solution $\bm{v}$ of ILP with
  \begin{align*}
    s\leq m\cdot\log(\sqrt{2\log(e)/(e\cdot\alpha)}\cdot A_{\max})/(1-\alpha)\enspace .
  \end{align*}
\end{theorem}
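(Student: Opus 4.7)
The plan is to derive the bound directly from \autoref{cor:SuppIneq} and \autoref{lem:supp:log}. Write $t \coloneqq s/m$. From \autoref{cor:SuppIneq} we have
\[
t \;\leq\; 1 + \log(A_{\max}) + \tfrac{1}{2}\log(t).
\]
The issue is that $\log(t)$ still depends on $s$. To eliminate this dependence, I would linearise $\log(t)$ via its tangent line from \autoref{lem:supp:log}, instantiated at the point $2\alpha$ so that after dividing by $2$ the coefficient of $t$ on the right becomes exactly $\alpha$. This yields
\[
\tfrac{1}{2}\log(t) \;\leq\; \alpha\, t - \tfrac{1}{2}\log(e) + \tfrac{1}{2}\log\!\bigl(\log(e)/(2\alpha)\bigr).
\]

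Plugging this into the inequality above and moving the $\alpha t$ term to the left gives
\[
(1-\alpha)\,t \;\leq\; 1 - \tfrac{1}{2}\log(e) + \log(A_{\max}) + \tfrac{1}{2}\log\!\bigl(\log(e)/(2\alpha)\bigr).
\]
Since $1-\alpha > 0$ for $\alpha \in (0,1)$, this can be solved for $t$ and multiplied by $m$. All that remains is to combine the constants inside one logarithm. I would use the identities $1 = \log(2)$ and $\tfrac{1}{2}\log(e) = \log(\sqrt{e})$, so that $1 - \tfrac{1}{2}\log(e) = \log(2/\sqrt{e})$, and merge the three logarithmic terms:
\[
\log\!\bigl(2/\sqrt{e}\bigr) + \log(A_{\max}) + \log\!\bigl(\sqrt{\log(e)/(2\alpha)}\bigr) \;=\; \log\!\Bigl(\sqrt{4/e \cdot \log(e)/(2\alpha)} \cdot A_{\max}\Bigr) \;=\; \log\!\Bigl(\sqrt{2\log(e)/(e\alpha)}\cdot A_{\max}\Bigr).
\]
Dividing by $(1-\alpha)$ and multiplying by $m$ then gives precisely the claimed bound.

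I do not expect a genuine obstacle here, as the argument is essentially an algebraic rearrangement. The only subtlety is picking the correct parameter in \autoref{lem:supp:log} (namely $2\alpha$ rather than $\alpha$) so that the coefficient of $t$ produced by the tangent-line bound matches the statement, and then carefully performing the log-arithmetic to collect the constant term $2/\sqrt{e}$ with the factor $\sqrt{\log(e)/(2\alpha)}$ into the single factor $\sqrt{2\log(e)/(e\alpha)}$. One should also briefly note that $t > 0$ (otherwise the support is empty and the statement is trivial), which validates the application of \autoref{lem:supp:log}.
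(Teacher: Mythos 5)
Your proposal is correct and follows essentially the same route as the paper: apply \autoref{lem:supp:log} with parameter $2\alpha$ to the $\log(s/m)/2$ term in \autoref{cor:SuppIneq}, absorb the resulting $\alpha\cdot s/m$ into the left-hand side, and collect the constants into the single factor $\sqrt{2\log(e)/(e\alpha)}$. The log-arithmetic checks out, so there is nothing to add.
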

\begin{proof}
    Applying \autoref{lem:supp:log} with $2\alpha$ to \autoref{cor:SuppIneq} yields
    \begin{align*}
        s/m&\leq 1+\log(A_{\max})+(2\alpha\cdot s/m-\log(e)+\log(\log(e)/(2\alpha)))/2\\  %
            &\leq\log(\sqrt{2\log(e)/(e\cdot\alpha)}\cdot A_{\max})+\alpha\cdot s/m\enspace .
    \end{align*}
    We subtract $\alpha\cdot s/m$, and multiply by $m/(1-\alpha)$, which proves the claim for $0<\alpha<1$.
\end{proof}
For example, we can set $\alpha=1/2$ or $\alpha=1/11$ to obtain the bounds
$s\leq2m\cdot\log(1.46\cdot A_{\max})$ and $s\leq1.1m\cdot\log(3.42\cdot A_{\max})$.

This approach, however, only gives bounds with coefficient strictly larger than $1$ for the leading term.
To reduce this to $1$, we make use of advanced analytical function techniques, to tighter analyze the inequality of  \autoref{cor:SuppIneq}.
\begin{lemma}
\label{lem:supp:LambertW}
  For $s,m,A_{\max}\geq 1$, the inequality in \autoref{cor:SuppIneq} is equivalent to
  \begin{align*}
    s/m\leq-\log(e)\cdot\mathcal{W}_{-1}(-1/(2\log(e)A_{\max}^2))/2,
  \end{align*}
  where $\mathcal{W}_{-1}$ is the $-1$ branch of the LambertW-function, the inverse function of~$x\mapsto xe^x$.
\end{lemma}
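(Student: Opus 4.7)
The plan is to reformulate the inequality from \autoref{cor:SuppIneq} in the canonical form $u e^u \leq c$ and then invert via the $-1$ branch of Lambert's $\mathcal{W}$ function. Setting $\tau \coloneqq s/m$, I would multiply $\tau \leq 1 + \log(A_{\max}) + \log(\tau)/2$ by $2$ and merge the logarithms to obtain $2\tau \leq \log(4 A_{\max}^2 \tau)$, then exponentiate to base $2$ to get the equivalent $2^{2\tau} \leq 4 A_{\max}^2 \tau$. Introducing $v \coloneqq 2\tau \ln(2) = 2\tau/\log(e)$ converts this base-$2$ expression into base $e$, since $2^{2\tau} = e^v$ and $\tau = v\log(e)/2$; substituting yields $e^v \leq 2\log(e) A_{\max}^2 \cdot v$, equivalently $v e^{-v} \geq 1/(2 \log(e) A_{\max}^2)$. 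Finally, setting $u \coloneqq -v$ produces the clean form $u e^u \leq -1/(2\log(e) A_{\max}^2)$.

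It then remains to invert $g(u) \coloneqq u e^u$ on the appropriate branch. On $(-\infty, -1]$ the derivative $(1+u)e^u$ is non-positive, so $g$ is strictly decreasing and bijects onto $[-1/e, 0)$, with inverse $\mathcal{W}_{-1}$. The target value $-1/(2\log(e) A_{\max}^2)$ lies in this range because $1/(2\log(e)) = \ln(2)/2 < 1/e$ and $A_{\max} \geq 1$. Since $g$ is decreasing, the inequality $g(u) \leq c$ flips under the inverse to $u \geq \mathcal{W}_{-1}(c)$; unwinding via $\tau = -u\log(e)/2$ then gives $\tau \leq -\log(e)\,\mathcal{W}_{-1}\bigl(-1/(2\log(e) A_{\max}^2)\bigr)/2$, which is the claimed bound.

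The main technical care lies in keeping inequality directions straight through the sign flip and the reciprocal manipulation, and in selecting the correct branch: the $-1$ branch rather than the principal $\mathcal{W}_0$ is chosen because it corresponds to the larger, right-hand root of the concave function $\tau \mapsto 1 + \log(A_{\max}) + \log(\tau)/2 - \tau$, which is exactly the upper bound on $\tau$ we want to extract. The parameter restrictions $s, m, A_{\max} \geq 1$ keep us safely in the region where this right-hand root governs the inequality, so no case analysis around the tangent point $\tau = \log(e)/2$ is needed. Beyond this branch bookkeeping, every step above is a reversible algebraic manipulation, so the chain is an equivalence rather than merely an implication, matching the statement of the lemma.
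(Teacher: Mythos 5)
Your proposal is correct and follows essentially the same route as the paper: algebraically rewrite the inequality of \autoref{cor:SuppIneq} into the canonical form $u e^{u}\leq -1/(2\log(e)A_{\max}^{2})$, check that the right-hand side lies in $(-1/e,0)$, and invert on the $-1$ branch. The one step you assert rather than verify---that the parameter restrictions make the $\mathcal{W}_{0}$-side (the left root) vacuous---is correct and immediate since $\tau=s/m\geq 1>\log(e)/2$ gives $u=-2\tau\ln 2\leq -2\ln 2<-1$, placing $u$ in the domain of the decreasing branch; the paper verifies the same fact slightly differently via the bound $\mathcal{W}_{0}(x)\geq e\cdot x$ on $[-1/e,0]$.
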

\begin{proof}
  We substitute $s/m$ by $-\log(e)\cdot y/2$, and rearrange to obtain:
  \begin{align*}
    -\log(e)\cdot y/2&\leq 1+\log(A_{\max})+\log(-\log(e)\cdot y/2)/2\\
    \log(-\log(e)\cdot y)+\log(e)\cdot y&\geq -(2+2\log(A_{\max})-1)=-1-2\log(A_{\max})\\
    -\log(e)\cdot y\cdot 2^{\log(e)\cdot y}&\geq 1/(2\cdot A_{\max}^2)\Leftrightarrow
    y\cdot e^y\leq-1/(2\log(e)\cdot A_{\max}^2).
  \end{align*}
  For the right-hand side $z\coloneqq -1/(2\log(e)\cdot A_{\max}^2)$, we have $-1/e\leq-1/(2\log(e))\leq z\leq 0$.
  Therefore, the inequality is satisfied exactly when $\mathcal{W}_{-1}(z)\leq y\leq \mathcal{W}_0(z)$ holds, where $\mathcal{W}_k$ are the real branches of the aforementioned LambertW-function.
  
  Next, we show $y\leq \mathcal{W}_0(z)$ does not restrict any relevant values. 
  For $x\in[{-1/e},0]$, we have $W_0(x)\geq e\cdot x$.
  Furthermore $s/m\cdot A_{\max}^2\geq 1\geq e/4$ holds for the relevant values of $s/m\geq1$ and $A_{\max}\geq 1$.
  Therefore $\mathcal{W}_0(z)\geq -e/(2\log(e)A_{\max}^2)\geq-2\cdot (s/m)/\log(e)=y$.
  This shows that the positive solutions are only constrained from above, by $\mathcal{W}_{-1}(z)$.
  Applying the resubstitution of $y$ to its lower bound gives the claimed result.
\end{proof}
Now, we apply techniques analogous to Chatzigeorgiou~\cite{Chatzigeorgiou2013} to prove parametric bounds for the $\mathcal{W}_{-1}(z)$ branch, which we optimize for $z\rightarrow0$ instead of $z\rightarrow -1/e$.
\begin{lemma}
\label{bound:LambertW}
  For any $\alpha>0$ and $u\geq0$, it holds $-\mathcal{W}_{-1}(-e^{-u-1})\leq u+\sqrt{2\alpha\cdot u}+\alpha-\ln(\alpha)$.
\end{lemma}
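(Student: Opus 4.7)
The first move is to eliminate the Lambert function by substituting $w := -\mathcal{W}_{-1}(-e^{-u-1})$. For $u \geq 0$, the argument $-e^{-u-1}$ lies in $[-1/e, 0)$, so the $-1$-branch condition forces $w \in [1,\infty)$, and the defining equation $\mathcal{W}_{-1}(z)\,e^{\mathcal{W}_{-1}(z)} = z$ becomes $w e^{-w} = e^{-u-1}$, i.e.\ $w - \ln w = u+1$. Plugging $u = w - 1 - \ln w$ into the target inequality and isolating the square root, it remains to show
\[
L(w) := 1 + \ln w + \ln\alpha - \alpha \;\leq\; \sqrt{2\alpha\,(w - 1 - \ln w)} \qquad \text{for all } w \geq 1.
\]

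If $L(w) \leq 0$, the inequality is trivial. Otherwise, squaring reduces the task to proving $h(w) := 2\alpha(w - 1 - \ln w) - L(w)^2 \geq 0$ on the region $L(w) > 0$. The unique root of $L$ is $w^{*} := e^{\alpha-1}/\alpha$, and the classical inequality $x - 1 - \ln x \geq 0$ for $x > 0$ gives $w^{*} \geq 1$ (with equality only at $\alpha = 1$), so the relevant region is $[w^{*}, \infty)$.

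I then attack $h$ through monotonicity together with a boundary evaluation. Differentiating yields $h'(w) = 2\alpha - \tfrac{2}{w}(1 + \ln(w\alpha))$; the algebraic key is to switch to $y := w\alpha$, after which the $\alpha$-terms cancel and $h'(w) = \tfrac{2\alpha}{y}(y - 1 - \ln y)$, which is non-negative by the very same elementary inequality. For the boundary, $L(w^{*}) = 0$ gives $h(w^{*}) = 2\alpha(w^{*} - 1 - \ln w^{*}) \geq 0$, once more by $x - 1 - \ln x \geq 0$ applied at $w^{*}$. Combined, these two facts yield $h(w) \geq h(w^{*}) \geq 0$ on $[w^{*}, \infty)$, closing the proof.

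The main obstacle I anticipate is spotting the clean factorisation of $h'(w)$: the expression $2\alpha - \tfrac{2}{w}(1 + \ln(w\alpha))$ looks like it should mix $\ln w$ and $\ln\alpha$ inseparably, but the substitution $y = w\alpha$ collapses it to $\tfrac{2\alpha}{y}(y - 1 - \ln y)$. Once this is in hand, the entire argument reduces to a single classical convexity fact applied twice — once for monotonicity, once at the boundary — and no further analysis of the Lambert function is needed.
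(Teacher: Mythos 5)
Your proof is correct, but it takes a genuinely different route from the paper's. Both arguments begin by trading the Lambert function for the transcendental relation it satisfies: you derive $w-\ln w=u+1$ directly from $\mathcal{W}_{-1}(z)e^{\mathcal{W}_{-1}(z)}=z$, whereas the paper invokes Chatzigeorgiou's observation that $g(x)=x-\ln(1+x)=u$ for $x=-\mathcal{W}_{-1}(-e^{-u-1})-1$ (the same identity after the shift $x=w-1$). From there the paths diverge. The paper keeps the square root and bounds $f(x)=-\ln(1+x)+\sqrt{2\alpha\,g(x)}+\beta$ from below by a critical-point analysis: since the critical equation cannot be solved for $x$, it establishes uniqueness of the minimizer via a convexity argument and then substitutes the critical condition back into $f$, finishing with the tangent bound of \autoref{lem:supp:log}. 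You instead isolate the square root, dispose of the case $L(w)\leq0$ by sign, square, and prove $h(w)=2\alpha(w-1-\ln w)-L(w)^2\geq0$ by monotonicity from the \emph{explicit} boundary point $w^{*}=e^{\alpha-1}/\alpha$ where $L$ vanishes; the substitution $y=w\alpha$ collapses $h'$ to $\tfrac{2\alpha}{y}(y-1-\ln y)\geq0$, so the whole argument rests on three applications of $y-1-\ln y\geq0$. Your version is more elementary and self-contained --- no implicit critical equation, no second-derivative computation, and the relevant endpoint is known in closed form --- while the paper's version reuses machinery (\autoref{lem:supp:log} and the Chatzigeorgiou framework) already present elsewhere in the section. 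Both yield exactly the claimed bound $u+\sqrt{2\alpha u}+\alpha-\ln(\alpha)$.
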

\begin{proof}
  Chatzigeorgiou~\cite{Chatzigeorgiou2013} showed for $x=-\mathcal{W}_{-1}(-e^{-u-1})-1$ that $g(x) = u$, where $g(x)\coloneqq x-\ln(1+x)$.
  To show our result, it thus suffices to show that for all $x\in\mathbb{R}>0$ and some additive term $\beta$, only dependent on $\alpha$, it holds that
  \begin{align*}
    -\mathcal{W}_{-1}(-e^{-u-1})=x+1\leq g(x)+\sqrt{2\alpha\cdot g(x)}+\beta+1=u+\sqrt{2\alpha\cdot u}+\beta+1\enspace.
  \end{align*} 
  By definition of $g$, this reduces to showing $f(x)\coloneqq -\ln(1+x)+\sqrt{2\alpha\cdot(x-\ln(1+x))}+\beta\geq0$.
  The function~$f$ has a unique minimum, since we will show that $f'(x)=0$ has only one solution and $\lim_{x\rightarrow\infty}f(x)=\infty$.
  Consider the critical condition:
  \begin{align}\label{LambertW:criteq}
  \frac{\mathrm d}{\mathrm dx}f(x)=\frac{-1}{1+x}+\frac{(1-\frac{1}{1+x})\cdot\alpha}{\sqrt{2\alpha(x-\ln(1+x))}}=0\quad\Leftrightarrow\quad \frac{x\alpha}{\sqrt{2\alpha(x-\ln(1+x))}}=1\enspace.
  \end{align}
  \autoref{LambertW:criteq} can not be solved for $x$ directly.
  Instead, we show that there can be only one solution, a global minimum, because the second derivative of $f(x)$ is always positive, making $f(x)$ monotonously increasing.
  \begin{multline*}
  \frac{\mathrm d}{\mathrm dx^2}f(x)=\frac{\mathrm d}{\mathrm dx}\left(\frac{\alpha x}{\sqrt{2\alpha(x-\ln(1+x))}}-1\right)=\frac{-x(1-\frac{1}{1+x})\alpha^2}{(2\alpha(x-\ln(1+x)))^{3/2}}+\frac{\alpha}{\sqrt{2\alpha(x-\ln(1+x))}}>0\quad\Leftrightarrow\\
  \frac{-\alpha^2x^2/(1+x)}{2\alpha(x-\ln(1+x))}+\alpha>0\quad\Leftrightarrow\quad\frac{x^2}{1+x}<2(x-\ln(1+x))\quad\Leftrightarrow\quad \frac{x(2+x)}{2(1+x)}\geq\ln(1+x)
  \end{multline*}
  We know \autoref{LambertW:criteq} must hold at the global minimum.
  Hence, substituting it in the inequality $f(x)\geq0$ and applying \autoref{lem:supp:log} on $\ln(1+x)$ bounds the minimal value of $f(x)$ by 
  \begin{equation*}
  f(x)=-\ln(1+x)+\alpha x+\beta\geq-\alpha(1+x)-\ln(1/\alpha)+1+\alpha x+\beta\geq0\quad\Leftrightarrow\quad \beta\geq \alpha+\ln(1/\alpha)-1\enspace. 
  \end{equation*}
  We conclude that $-\mathcal{W}_{-1}(-e^{-u-1})\leq u+\sqrt{2\alpha\cdot u}+\alpha+\ln(1/\alpha)$, which is the claim.
\end{proof}
From \autoref{lem:supp:LambertW} and \autoref{bound:LambertW} we derive our asymptotically tight support size bound.
\begin{theorem}
\label{thm:LambertW}
  For any $\alpha'>0$ there is an optimal solution $\bm{v}$ of ILP with
  \begin{align*}
    s\leq m\cdot(\log(A_{\max})+\sqrt{\alpha'(\log(A_{\max})+0.05)}+\alpha'/2+\log(\sqrt{1/\alpha'})+1.03)\enspace.
  \end{align*}
\end{theorem}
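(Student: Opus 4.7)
The plan is to feed the bound of \autoref{lem:supp:LambertW} into the parametric upper bound on $\mathcal{W}_{-1}$ provided by \autoref{bound:LambertW}, and then convert natural logarithms into base-$2$ logarithms to reach the stated form. Concretely, I would set $u := 2\ln(A_{\max}) + \ln(2\log(e)) - 1$ so that the argument of $\mathcal{W}_{-1}$ appearing in \autoref{lem:supp:LambertW} equals exactly $-e^{-u-1}$. The nonnegativity $u \geq 0$ required by \autoref{bound:LambertW} is immediate from $A_{\max} \geq 1$ together with $\ln(2\log(e)) \approx 1.06 > 1$.

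Applying \autoref{bound:LambertW} with a free parameter $\alpha > 0$ and chaining with \autoref{lem:supp:LambertW} gives
\[
  \frac{s}{m} \;\leq\; \frac{\log(e)}{2}\bigl(u + \sqrt{2\alpha\,u} + \alpha - \ln(\alpha)\bigr).
\]
Using the identity $\log(e)\cdot\ln(x) = \log(x)$ to convert logarithms and writing $c := \log(e)\bigl(\ln(2\log(e)) - 1\bigr)/2$, the four terms expand as follows: the $u$-term contributes $\log(A_{\max}) + c$; the square-root term simplifies to $\sqrt{\alpha\log(e)(\log(A_{\max}) + c)}$; the $\alpha$-term becomes $\alpha\log(e)/2$; and $-\ln(\alpha)$ becomes $\log(\sqrt{1/\alpha})$. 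Substituting $\alpha' := \alpha\log(e)$ then turns the $\alpha$-term into $\alpha'/2$, the square-root into $\sqrt{\alpha'(\log(A_{\max}) + c)}$, and $\log(\sqrt{1/\alpha})$ into $\log(\sqrt{1/\alpha'}) + \log(\sqrt{\log(e)})$, producing exactly the shape of the desired bound.

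All that remains is a numerical check of the loose constants. One verifies $c \leq 0.05$, so the radical can be enlarged to $\sqrt{\alpha'(\log(A_{\max}) + 0.05)}$, and the remaining $\alpha$-independent pieces $c + \log(\sqrt{\log(e)})$, together with the small slack introduced by rounding $c$ up to $0.05$, stay comfortably below $1.03$, yielding the claimed inequality. I expect the main obstacle to be precisely this bookkeeping of constants; the decisive trick is the rescaling $\alpha' = \alpha\log(e)$, which absorbs the stray factor $\log(e)$ coming from \autoref{lem:supp:LambertW} into the free parameter and leaves $\log(A_{\max})$ with coefficient exactly $1$ as the dominant term, which is what makes the resulting support bound asymptotically tight.
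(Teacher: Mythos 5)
Your proposal is correct and follows essentially the same route as the paper: the same substitution $u=2\ln(A_{\max})+\ln(2\log(e)/e)$, the same chaining of \autoref{lem:supp:LambertW} with \autoref{bound:LambertW}, and the same rescaling $\alpha'=\alpha\log(e)$, with the constants working out to roughly $0.043$ inside the radical and roughly $0.31$ outside, both within the stated $0.05$ and $1.03$. Your explicit check that $u\geq 0$ is a small point the paper glosses over, but otherwise the arguments coincide.
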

\begin{proof}
  We need to rewrite the argument $z\coloneqq -1/(2\log(e)A_{\max}^2)$ in \autoref{lem:supp:LambertW} to the form $-e^{-u-1}$ used in \autoref{bound:LambertW}.
  Hence, solving $z=-e^{-u-1}$ gives $u=\ln(2\log(e)A_{\max}^2/e)=2\ln(A_{\max})+\ln(2\log(e)/e)$.
  We now substitute $\alpha$ by $\alpha'/\log(e)$ to get $\log$ instead of $\ln$, and through calculation obtain the bound:
  \begin{align*}
      s/m&\leq-\log(e)\mathcal{W}_{-1}(z)/2\leq\log(e)(u+\sqrt{2\alpha\cdot u}+\alpha-\ln(\alpha))/2\\
      &\leq\log(e)(u+\sqrt{2\alpha'/\log(e)\cdot u}+\alpha'/\log(e)-\ln(\alpha'/\log(e)))/2\\
      &\leq\log(e)u/2+\sqrt{\log(e)\alpha'u/2}+\alpha'/2-\log(\alpha'/\log(e))/2\\
      &\leq\log(A_{\max})+\log(2\log(e)/e)/2+\sqrt{\log(e)\alpha'u/2}-\log(\alpha'/\log(e))/2+\alpha'/2\\
      &\leq\log(A_{\max})+\sqrt{\alpha'(\log(A_{\max})+\log(2\log(e)/e)/2)}+\alpha'/2+\log(\log(e)\sqrt{2/(e\alpha')})\enspace.
  \end{align*}
  Inserting numerical values for terms independent of $\alpha'$ and $A_{\max}$ gives the desired result.
\end{proof}
For $\alpha=1$ and $A_{\max}\geq1$ we obtain the particularly simple bounds
\begin{equation*}
  s\leq m\cdot(\log(A_{\max})+\sqrt{\log(A_{\max})+0.05}+1.53)\leq m\cdot(\log(3A_{\max})+\sqrt{\log(A_{\max})}) \enspace .
\end{equation*}
This immediately implies our main support size bound:
\NewSupportBound*
The results of this section directly extend to other $p$-norm bounds on the column vectors of $A$:
\begin{corollaryrep}
  Let $A_{\max}^{(p)}\coloneqq \max_i\|A_i\|_p$ be the largest $p$-norm of columns of $A$ for $p \geq 1$. 
  It holds that
  \begin{align*}
      \log(A_{\max})\leq \log(m^{1-1/p}A_{\max}^{(p)})\quad\text{and}\quad \log(A_{\max})\leq p\cdot\log(A_{\max}^{(p)})\enspace.
  \end{align*}
\end{corollaryrep}
\begin{proof}
  H{\"o}lder's inequality implies that $\|\bm{x}\|_1\leq d^{1-1/p}\|\bm{x}\|_p$ for any vector $x\in\mathbb{Z}^d$.
  The first inequality follows from $d\leq m$.
  As $x$ is integral, we can also bound the number of non-zero entries as $d\leq\|\bm{x}\|_p^p$.
  The second inequality then follows from $(\|\bm{x}\|_p^p)^{1-1/p}\|\bm{x}\|_p=\|\bm{x}\|_p^p$.
\end{proof}
In order to understand how tight our bound is, we adapt a construction by Berndt et. al.~\cite{BerndtJK2021} to obtain a lower bound on the support size and close the asymptotic gap.
\NewLowerBound*
\begin{proof}
  With $d\coloneqq  \lfloor\log(A_{\max})\rfloor$ we construct an ILP as follows:
  \begin{align*}
  \max&\;\,\begin{pmatrix}3^0&\cdots&3^d&3^0&\cdots&3^{d}&\cdots&3^0&\cdots&3^d\end{pmatrix}\\
  \text{s.t.}&\begin{pmatrix}
      2^0 & \cdots & 2^d & 0 & & & \cdots & & & 0 \\
      0 & \cdots & 0 & 2^0 & \cdots & 2^d & \cdots & 0 & \cdots & 0 \\
          & \vdots & & & & & & \ddots &\\
      0 & & & \cdots & & & 0 & 2^0 & \cdots & 2^d
  \end{pmatrix}=\begin{pmatrix}2^{d+1}-1\\2^{d+1}-1\\\vdots\\2^{d+1}-1\end{pmatrix} \enspace .
  \end{align*}
  Because of $\sum_{i=0}^d2^i=2^{d+1}-1$, the $\bm{1}$-vector is a solution.
  All coefficients are positive, hence in any solution the value of the variables with coefficient $2^d$ must be less than $2$.
  For any other variable $x_1$, if it is $x_i\geq2$, we can increase the objective by setting $x_i\coloneqq x_i-2;x_{i+1}\coloneqq x_{i+1}+1$.
  Since all objective coefficients are positive, the $\bm{1}$-vector is the unique optimal solution.
\end{proof}
Hence, \autoref{thm:LambertW} is exact in the dominant term.
We pose the question, whether there is a support size bound of the form $m\cdot\log(c\cdot A_{\max})$ for some constant $c$, as an interesting open problem.

\section{An Efficient Approximation Scheme for Makespan Minimization on Uniformly Related machines}\label{sec:Approx_algorithm}
Our algorithm follows a typical structure of approximation schemes. %
First, we \emph{preprocess} the input, by discarding jobs and machines which are so short or slow that assigning them naively is acceptable.
Next, we perform a \emph{binary search} on the makespan, to reduce the optimization problem to a feasibility problem.
Then, we \emph{round} the remaining processing times and machine speeds, according to our makespan guess, to make the resulting instance more structured.
Now, we \emph{construct} an MILP, whose feasibility is equivalent to the existence of a schedule.
Finally, we solve the MILP and \emph{transform} the solution into a schedule.

\subsection{Preprocessing}
\label{sec:pre}
We reduce the number of parameters bounding the instance to the number of jobs $N$ and a constant fraction $\delta$ of the approximation guarantee $\eps$.
To enforce $N \geq M$ we potentially drop the $M-N$ slowest machines, as there is an optimal solution not assigning them a job.
\paragraph*{Step 1: Removing Negligible Machines and Jobs}
\label{QCmax:para:Step1}
We remove all machines slower than $\delta\cdot s_{\max}/N$ and all jobs shorter than $\delta\cdot p_{\max}/N$.
Compensating for the lost processing times on a longest job and the machine speeds on a fastest machine introduces an error of at most a factor $(1+\delta)$ respectively.
Now we have $p_{\min}>\delta\cdot p_{\max}/N$ and $s_{\min}>\delta\cdot s_{\max}/N$ and the largest ratios of job processing times $p_{\max}/p_{\min}<N/\delta$ and machine speeds $s_{\max}/s_{\min}<N/\delta$ are bounded only by the parameters $N$ and $\delta$.
\paragraph*{Step 2: Preround the Inputs}
\label{QCmax:para:Step2}
To achieve a linear run time, we preround the machine speeds and processing times to fewer distinct values.
These are rerounded again more carefully at every iteration of the binary search, reducing the run time at the cost of a limited accuracy loss.
We round every processing time $p_j$ and machine speed $s_i$ down to the next power of $(1+\delta)$, introducing errors of no more than $(1+\delta)$ by construction.
Let $\tilde{\eta}_j$ be the number of jobs with rounded processing times $\tilde{p}_j$ be and $\tilde{\mu}_i$ the number of machines with rounded speed $\tilde{s}_j$.
Due to step~1, the amount of \emph{distinct} values after rounding is bounded by $\log_{1+\delta}(p_{\max}/(p_{\max}\delta/N))=\log_{1+\delta}(s_{\max}/(s_{\max}\delta/N))\in\mathcal{O}(1/\delta \log(1/\delta \cdot N))$.
Because our inputs are sorted, the rounding above can be performed in time $\mathcal{O}(N+1/\delta \log(1/\delta \cdot N))$.
\paragraph*{Step 3: Binary Search for the Makespan}
\label{QCmax:para:Step3}
We reduce finding the optimal makespan $\OPT(\mathcal I)$ to successively checking whether a schedule with makespan $T$ is realizable.
The processing time of a longest job on a fastest machine is a lower bound: $\OPT(\mathcal I)\geq p_{\max}/s_{\max}$.
The schedule assigning all jobs to a fastest machine is an upper bound $\OPT(\mathcal I)\leq\sum_{j=1}^N p_i/s_{\max}\leq N p_{\max}/s_{\max}$.
Hence, we can use a binary search for $\OPT(\mathcal{I})$ in the interval $[p_{\max}/s_{\max},N\cdot p_{\max}/s_{\max}]$ of ratio $N$.
As accuracy up to a factor $(1+\delta)$ is sufficient, we only need to consider integer powers of $(1+\delta)$.
Our binary search therefore adds a factor of $\mathcal{O}(\log_{1+\delta}(N))=\mathcal{O}(1/\delta \log(N))$ to the run time of the following steps.
We denote the current makespan in the binary search by $T$.
This transforms the problem into either finding a schedule with makespan $(1+\mathcal{O}(\delta))\cdot T$, or deciding that no schedule with makespan~$T$ exists.
\paragraph*{Step 4: Rounding Machine Speeds and Job Processing Times}
\label{QCmax:para:Step4}
A $(1+\eps)$-approximate schedule~$\sigma$ satisfies, for each machine $i$, the equivalent inequalities 
\begin{equation} \label{eq:unitT}
  \sum_{j\in\sigma^{-1}(i)}\frac{p_j}{s_i}\leq (1+\eps)\cdot T\quad \Leftrightarrow \quad \sum_{j\in\sigma^{-1}(i)}\frac{p_j}{T}\leq (1+\eps)\cdot s_i\enspace .
\end{equation}
With the aforementioned bounds on the makespan we have enforced the descending chain
\begin{align*}
  s_{\max}\geq p_{\max}/T \geq p_{\min}/T> \delta\cdot p_{\max}/(N\cdot T)\geq\delta s_{\max}/N^2\enspace .
\end{align*}
Therefore, all quantities are in the interval $I\coloneqq (\delta\cdot s_{\max}/N^2,s_{\max}]$ of ratio $N^2/\delta$,  especially the scaled processing times $p_j/T$.
See also \autoref{fig:range} for an overview on the relations of the parameters.
We now scale each processing time $p_j$ with $1/T$; this yields an instance with scaled processing times
$\tilde p_j = 1/T \cdot p_j$, equivalent to our original instance by \autoref{eq:unitT}.
\begin{figure}
  \centering
  \scalebox{1}{
    \providecommand\smax{s_{\mathrm{max}}}
\providecommand\pmax{p_{\mathrm{max}}}
\begin{tikzpicture}[scale=2]
	\draw[dashed] (0,0) -- (1,0);
	\draw[dashed] (3,0) -- (4,0);
	\draw[solid] (1,0) -- (3,0);
	\foreach \x in {0,1,3,4} 
	\draw[shift={(\x,0)}] (0pt,3pt) -- (0pt,-3pt);
	\draw[solid] (2,-0.5) -- (4,-0.5);
	\foreach \x in {2,4} 
	\draw[shift={(\x,-0.5)}] (0pt,3pt) -- (0pt,-3pt);

	\draw[decorate, decoration={brace, amplitude=5pt}] (3,0.15) -- (4,0.15);
	\node[anchor=north] at (3.5, 0.5) {$T$};

	\node[anchor=south] at (0,-0.5) {$\smax\delta/N^2$};
	\node[anchor=north] at (1,0.5) {$\pmax \delta/(N\cdot T)$};
	\node[anchor=south] at (2,-1) {$\smax \delta/N$};
	\node[anchor=south] at (4,-1) {$\smax$};
	\node[anchor=north] at (3,0.5) {$\pmax/T$};
\end{tikzpicture}
  }
  \caption{Overview on the range of parameters.\label{fig:range}}
\end{figure}
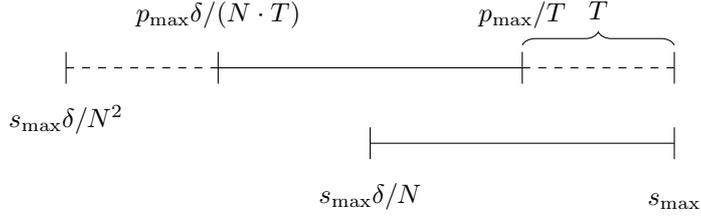
The crucial innovation by Berndt, Deppert, Jansen and Rohwedder~\cite{BerndtDJR2022} was to combine exponential and linear rounding to obtain both sufficient accuracy with few values and useful structural properties.
We adapt their approach to cover our interval $I$.
With $\kappa\coloneqq \lceil\log(1/\delta \cdot N^2)\rceil$, consider the points $b_{k,0}\coloneqq s_{\max}\cdot 2^{-k}$ for $k=0,\ldots,\kappa$.
The intervals $[b_{k+1,0},b_{k,0}]$ become exponentially finer, but have a ratio of $2$, not the necessary $(1+\delta)$.
Therefore, with $\lambda\coloneqq \lceil 1/\delta\rceil$, we add the points $b_{k,\ell}\coloneqq (1-\ell/(2\lambda))\cdot b_{k,0}$ for $k=0,\ldots,\kappa-1$ and $\ell=0,\ldots,\lambda$, which is a linear interpolation between $b_{k,0}$ and $b_{k+1,0}$.
Simple calculations show the relations $b_{k,\ell-1}/b_{k,\ell}=1+1/(2\lambda-\ell)\leq 1+\delta$ and $b_{k+1,0}=b_{k,0}/2=b_{k,\lceil 1/\delta\rceil}$ for all $k=0,\ldots,\kappa-1$ and $\ell=1,\ldots,\lambda$, which means we have achieved the necessary precision.
Crucially, two jobs within a linear interval of the same parity combine exactly to a job in the next larger linear interval.
Formally, this is described by
\begin{align}
  b_{k,\ell}+b_{k,\ell'}=\left(2-\frac{\ell+\ell'}{2\lambda}\right)\cdot b_{k,0}=\left(1-\frac{\frac{\ell+\ell'}{2}}{2\lambda}\right)\cdot b_{k-1,0}=b_{k-1,\frac{\ell+\ell'}{2}}\label{eq:conf:pairs}
\end{align}
for all $0\leq\ell\leq\ell'\leq\lambda$ with $\ell+\ell'$ divisible by $2$, i.e., $\ell$ and $\ell'$ are both odd or both even.
\autoref{eq:conf:pairs} will allow us to significantly reduce the number of configurations and the maximum $1$-norm of a column.
To simplify our notations, we re-index the values of $b_{k,\ell}$ in descending order by setting $b_{r}\coloneqq b_{k(r),\ell(r)}$ with $k(r)=\lfloor (r-1)/\lambda\rfloor$ and $\ell(r)=(r-1)\bmod\lambda$, such that $b_{k\cdot\lambda+\ell+1}=b_{k,\ell}$.
Therefore, the interval~$I$ is covered completely by the $\tau\coloneqq \kappa\cdot\lambda\in\mathcal{O}(1/\delta \log(1/\delta \cdot N))$ intervals $(b_{r+1},b_r]$ for $r=1,\ldots,\tau$.
Finally, we round every machine speed $\tilde{s}_i$ as well as every scaled processing time $\tilde{p}_j$ in $(b_{r+1},b_r]$ up to $b_r$. %
Let $\mu_i$ and~$\eta_j$ be the number of machines with scaled speed $b_i$ and jobs with scaled processing time $b_j$ respectively.
This takes time $\mathcal{O}(1/\delta\log(1/\delta \cdot N))$, which means steps 1 to 4 can be performed in total time 
\begin{equation*}
\mathcal{O}(N+1/\delta \log(N) (1/\delta \log(1/\delta \cdot N)+\tau))=\mathcal{O}(N+1/\delta^2 \log^2(1/\delta \cdot N))\subseteq \mathcal{O}(1/\delta^{2+2\cdot 2})+\mathcal{O}(N)\enspace .
\end{equation*}
\autoref{fig:group} contains a graphical representation of the exponential intervals $[b_{k+1,0},b_{k,0}]$ and the linear sub-intervals $[b_{k,i+1},b_{k,i}]$.
\begin{figure}[h]
  \centering
  \scalebox{.7}{
    \input{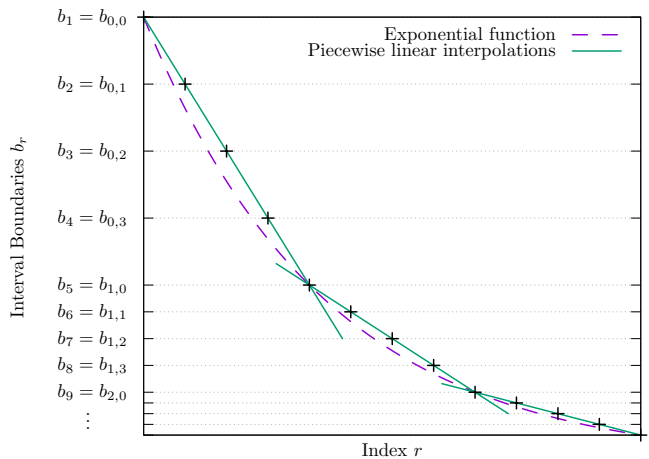}
  }
  \caption{Sketch of the grouping of the interval $I$ into sub-intervals.\label{fig:group}}
\end{figure}

\subsection{Solving an MILP Formulation}
\label{sec:QCmaxMILP}
Consider the resulting instance after all steps from \autoref{sec:pre} have been applied. 
Nearly all previous approaches used a mix of \emph{configuration} variables that determine the complete schedule of a machine and
\emph{assignment} variables that determine the position of a single job.
We now combine these different variables into a unified structure called \emph{recursive} configurations.
The core idea of our formulation is to use an additional \emph{virtual} machine $i$ of speed~$b_i$ by placing a \emph{corresponding} job of same size $b_i$ on a faster machine $i'$ with $b_{i'} <  b_{i}$, in effect simulating the virtual machine $i$ on machine $i'$.
In other words, by placing more jobs than the problem requires, we are also allowed to use more machines of the same size than the problem provides.
By applying this idea recursively, we can cover a large range of job processing times with configurations of limited range only, as the virtual machines allow us to merge several short jobs (with respect to to a certain machine speed) into a long job. 
Furthermore, this approach also allows us to successively build a configuration from other configurations, significantly reducing the number of overall configurations necessary.

For $k=0,\ldots,\kappa$, we define the set $G_k \coloneqq \{r\in \{1,\ldots,\tau\} \mid b_r \in [b_{k,0},b_{k,\lambda}]\}$ of indices of affine slices in our rounding scheme and partition it into the sets $G_k^{\text{even}}\coloneqq \{r\in G_k\mid r=0\bmod 2\}$ and $G_k^{\text{odd}}\coloneqq G_k\backslash G_k^{\text{even}}$.
For $i = 1,\ldots,\tau$, we define the set $H_i\coloneqq \{j\in \{1,\ldots,\tau\}\mid b_j \in (\delta b_i,b_i]\}$ of indices of \emph{long} job speeds, which use more than $\delta b_i$ and less than the entire machine speed $b_i$.

We will now define \emph{configurations}.
All configurations are vectors $\bm{\gamma}$ with $\tau$ entries, each representing multiples of scaled processing times in $\bm{b}=(b_{1},\ldots,b_{\tau})$.
In the following, we describe the configurations for machines with speed $b_{i}$.

The set $\mathcal{C}_{i}^{(1)}$ contains all the exact combinations $b_j+b_{j'}=b_i$ as described in \autoref{eq:conf:pairs}.
With $\bm{e}_j\in\{0,1\}^{\tau}$ being the $j$-th unit vector, let
\begin{align*}
  \mathcal{C}_i^{(1)} \coloneqq  \{\bm{e}_{j}+\bm{e}_{j'} \mid j,j'\in G_{k(i)-1} \text{ and } j+j'=2\cdot(i-\lambda) \}\enspace .
\end{align*}
The second set $\mathcal{C}_{i}^{(2)}$ contains the remaining feasible configurations of long jobs, with at most one job at even and odd positions in an affine slice $G_k$:
\begin{align*}
  \mathcal{C}_{i}^{(2)} \coloneqq  \{\bm{\gamma}\in\{0,1\}^\tau\mid{}
  &\bm{\gamma}\cdot\bm{b}\leq b_i\text{ and }\supp(\bm{\gamma}) \subseteq H_i\text{ and for all}\\
  &k=0,\ldots,\kappa-1:\sum_{r \in G_k^{\text{even}}} \gamma_{r} \leq 1 \text{ and } \sum_{r\in G_k^{\text{odd}}}\gamma_{r} \leq 1\}\enspace .
\end{align*}
Finally, the set of configurations $\mathcal{C}_{i}$ for machines with speed $b_i$  is defined as $\mathcal{C}_{i}\coloneqq \mathcal{C}_{i}^{(1)}\cup\mathcal{C}_{i}^{(2)}$.
The total number of entries in a configuration $\bm{\gamma}$ is at most $\|\bm{\gamma}\|_1\leq 2\log(2 \cdot 1/\delta)$.
Only long jobs from $H_i$ are used, and for each~$k$ there is at most one job at an even or odd position in~$G_k$, respectively. 
We can bound the number of configurations in $\mathcal{C}_{i}^{(1)}$ by $\lambda^2$, and the number of configurations in $\mathcal{C}_{i}^{(2)}$ by $(\lambda^2)^{2\log(2\cdot 1/\delta)}\in 2^{\mathcal{O}(\log^2(1/\delta))}$, which implies $|\mathcal{C}_i|\in 2^{\mathcal{O}(\log^2(1/\delta))}$.

We require integrality in the configuration variables only for the fastest $L\coloneqq \lambda\lceil\log(1/\delta^3\log(1/\delta))\rceil \in\mathcal{O}(1/\delta\log(1/\delta))$ machine speeds.
Intuitively, we just need to assign configurations integrally on these machines, as all remaining configurations are very short relative to the machines with the fastest speed.
Hence, we can assign them fractionally and round them later on.
The overhead from rounding is then scheduled on a fastest machine.

Recall that the number of jobs with processing time $b_j$ in configuration $\bm{\gamma}$ is $\gamma_j$, the number of machines with speed $b_i$ is $\mu_i$, and the number of jobs with scaled processing time~$b_j$ is $\eta_{j}$. 
The resulting MILP is:
\begin{align*}
  \sum_{\bm{\gamma}\in\mathcal{C}_i}x_{i,\bm{\gamma}}-\mu_i&\overset{\oplabel{eq:virtual}}{=}\sum_{i'=1}^{\tau}\sum_{\bm{\gamma}\in\mathcal{C}_{i'}}\gamma_i\cdot x_{i',\bm{\gamma}}-\eta_i\overset{\oplabel{eq:pack}}\geq 0\qquad&\for\;i =1,\ldots,\tau\\
  x_{i,\bm{\gamma}}&\geq0\;&\for\;i = 1,\ldots,\tau;~\bm{\gamma}\in\mathcal{C}_i&\nonumber\\
  x_{i,\bm{\gamma}}& \in\mathbb{Z}_{\geq 0}\;&\for\;i =1,\ldots, L;~\bm{\gamma}\in\mathcal{C}_i .&\tag{recursive-MILP}\label{recursive-MILP}
\end{align*}
The constraints~\eqref{eq:virtual} enforce that the number of additional virtual machines of any speed equals the number of additional corresponding jobs of that same size scheduled somewhere else.
The constraints~\eqref{eq:pack} ensure that at least as many jobs of each size are assigned in configurations, as are required by the problem.
We show \eqref{recursive-MILP} is feasible, up to an approximation factor, for a feasible instance.
\begin{lemmarep}
\label{MILPround1}
  If the original $Q||C_{\max}$-instance $\mathcal{I}$ has a schedule $\sigma$ with makespan $T$, then \eqref{recursive-MILP} is feasible for makespan $(1+17\delta)\cdot T$.
\end{lemmarep}
\begin{proof}
  The schedule $\sigma$ specifies which jobs are scheduled on any machine $i$ and that those have sufficient speed.
  We perform steps 1--4, which uses additional speed of at most a factor $(1+\delta)^7$.
  If a single job of processing time $b_i$ has been assigned to a machine of speed $b_i$ by $\sigma$, then we create a new configuration which assigns this job on that machine, and speed up the machine by a factor $(1+\delta)$.
  Otherwise, the difficulty is finding configurations for jobs which are not long, i.e., less than $\delta\cdot b_i$.
  We repeatedly combine pairs of jobs according to \autoref{eq:conf:pairs} until we have at most one job at an even and odd position in each affine slice $G_k$ on a machine.
  We partition these jobs $j$ by their value of $\ell=\lfloor\log_{\delta}(b_j/b_i)\rfloor$, i.e., into slices with ratio~$\delta$.
  The total processing time of jobs in slice $\ell$ is bounded by $b_i\cdot\delta^{\ell}\cdot2\sum_{j=0}^{\infty}2^{-j}=4\cdot b_{i}\cdot \delta^\ell$.
  We iteratively bundle all jobs from the same slice, starting with the last slice until only the first slice is left, into at most $8$ configurations of size $b_i\cdot\delta^\ell$.
  Each configuration can be packed at least half full by the greedy algorithm.
  The resulting configurations for slice $\ell$ might have to be rounded in size to the next $b_r$.
  Hence, the additional speed introduced for slice~$\ell$ is bounded by $8\cdot b_i\cdot\delta^{\ell}\cdot(1+\delta)$.
  After the creation of such a configuration, if possible, it gets combined again, which uses no additional speed.
  Eventually, only jobs in the slice for $\ell=0$ are left.
  These can be scheduled exactly in a configuration on the machine, by the premise.
  The additional load incurred on a machine can be bounded by $\sum_{\ell=1}^{\infty}8\cdot b_{i}\cdot \delta^{\ell}\cdot (1+\delta)=8\cdot (1+\delta)/(1-\delta)\cdot\delta \cdot b_i$.
  For $\delta\leq1/100$ this is less than $9\cdot \delta\cdot b_i$ and in total a factor $(1+\delta)^7(1+9\delta)\leq (1+17\delta)$ on the makespan, as claimed.
  The values $\bm{x}$ derived from this satisfy constraints~\eqref{eq:virtual} and~\eqref{eq:pack} by construction.
\end{proof}
For \eqref{recursive-MILP}, the integer subproblem has $m=2L\in\mathcal{O}(1/\delta\log(1/\delta))$ constraints and maximal $1$-norm of a column $\max\|A_i\|_1\in\mathcal{O}(\log(1/\delta))$.
The first $L$ machine speeds have $L\cdot 2^{\mathcal{O}(\log^2(1/\delta)} = 2^{\mathcal{O}(\log^2(1/\delta))}$ configurations, i.e., integer variables.
By \autoref{thm:newsupportbound} and \autoref{lem:MILP:support}, if there is a feasible solution of \eqref{recursive-MILP}, then there is also one with $\mathcal{O}(1/\delta\log(1/\delta)\log(\log(1/\delta)))$ positive integer variables.
\autoref{lem:MILP:support-alg} thus implies that we can solve  \eqref{recursive-MILP}($S$) in time $2^{\mathcal{O}(1/\delta\log^3(1/\delta)\log(\log(1/\delta)))}\cdot \log(N)^{\mathcal{O}(1)}$, as the encoding size is $\langle \text{\eqref{recursive-MILP}}(S)\rangle \leq (1/\delta \log(N))^{\mathcal{O}(1)}$.
Note that this is the dominant run time in terms of $1/\delta$.
We thus either find a feasible solution for the current makespan guess $T$, or discover that no such solution exists.
In the latter case, we discard our current makespan guess and increase it in the next step of the binary search.

\subsection{Constructing a Schedule}
\label{sec:round}
We need to construct a schedule from a solution $\bm{x}^\star$ to \eqref{recursive-MILP}.
By constraints~\eqref{eq:pack}, we know that $\bm{x}^{\star}$ schedules each job in some configuration.
By constraints~\eqref{eq:virtual}, the number of virtual machines equals the additional number of corresponding jobs with equal size scheduled in some configuration.
Managing to assign all configurations within an approximate makespan to machines would therefore be a valid schedule.
We use the \autoref{algoround} for this:
\begin{printalgo}
\pcb[head=\parbox{\textwidth-8.8pt}{\textsf{AssignConfsToMachines\\
             Input: A feasible solution $\bm{x}^\star$ to \eqref{recursive-MILP}.\\
             Output: An approximate schedule $\sigma$ to the pre-processed instance.}}]
             {\textsf{AssignConfsToMachines}}{\label{algoround}
    \t \pcfor \text{decreasing machine speeds $b_{i}$}:\\
    \t[2] \pcfor \text{each $\bm{\gamma}\in \mathcal{C}_{i}$}:\\
    \t[3] \text{assign $\lfloor x_{i,\bm{\gamma}}\rfloor$ copies of $\bm{\gamma}$ to machines with speed $b_{i}$}\\
    \t[3] \pcif x_{i,\bm{\gamma}}\not\in \mathbb{Z}_{\geq 0}: \text{assign another copy of $\bm{\gamma}$ to a fastest machine}\\
    \t[3] \pcfor \text{each processing time $b_{j}$ used in $\bm{\gamma}$}:\\
    \t[4] \text{assign as many jobs of processing time $b_{j}$ to machines with configuration $\bm{\gamma}$}\\
    \t[4] \pclinecomment{stop when all jobs are packed or all configurations are filled}\\
    \t[4] \pcfor \text{each job in a configuration $\bm{\gamma}$ not filled}:\\
    \t[5] \text{create a virtual machine with the same speed as the corresponding job}\\
    \t \pcreturn \text{the resulting schedule $\sigma$}
}
\end{printalgo}
\begin{lemmarep}
\label{lem:recMILP:schedule}
  \autoref{algoround} gives a schedule of makespan at most $(1+5\delta)\cdot T$ from a feasible solution $\bm{x}^{\star}$ to \eqref{recursive-MILP} in  time \mbox{$2^{\mathcal{O}(1/\delta)}+\mathcal{O}(N\log^2(N))$}.
\end{lemmarep}
\begin{proof}
  The algorithm assigns $\lfloor x_{i,\bm{\gamma}}\rfloor + 1 \geq x_{i,\bm{\gamma}}$ configurations, at least as many as~$\bm{x}^{\star}$ uses.
  By constraints~\eqref{eq:pack}, all jobs get assigned, as they are assigned before the additional jobs corresponding to virtual machines.
  We always have at least as many virtual machines as $\bm{x}^\star$, because of the extra configuration added to a fastest machine.
  We thus need to guarantee that the additional speed scheduled on a fastest machine is sufficiently bounded.
  The variables of~$\bm{x}^{\star}$ corresponding to the fastest $L$ machines are already integral. 
  Hence, the first $b_i$, for which additional speed is put onto the fastest machine, is at most $\delta^3\cdot s_{\max}/\log(1/\delta)$.
  There are $\lceil\log(1/\delta)\rceil\lambda+1$ constraints on the variables~$x_{i,\bm{\gamma}}$ for $\bm{\gamma}\in\mathcal{C}_i$.
  Consequently, at most that many variables $x_{i,\gamma}$ can be positive in a vertex solution of the projected LP, which we can find in polynomial time by \autoref{lem:MILP:basic}.
  As each of these has a size $b_i$, the total additional speed assigned to a fastest machine is bounded by
  \begin{align*}
  &s_{\max}(\lceil\log(1/\delta)\rceil\lambda+1)\sum_{r=L}^{\infty}b_{r}=s_{\max}(\lceil\log(1/\delta)\rceil\lambda+1)\sum_{k=L/\lambda}^{\infty}\sum_{\ell=0}^{\lambda-1}(2-\ell/\lambda)2^{-k}\\
  &\qquad\qquad\leq\frac{\delta^3\cdot s_{\max}}{\log(1/\delta)}(\lceil\log(1/\delta)\rceil\lambda+1)(1+3\lambda)<5\delta s_{\max}\enspace.
  \end{align*}
  The last inequality holds for $\delta\leq1/10$.
  Adding this much speed to a fastest machine results in a schedule with makespan at most $(1+5\delta)\cdot T$.
  The run time needed to construct the schedule is bounded by the number of machines times the effort per machine, resulting in
  \begin{align*}
    \mathcal{O}(N\cdot\tau)=\mathcal{O}(1/\delta \cdot N\log(1/\delta \cdot N^2))\subseteq 2^{\mathcal{O}(1/\delta)}+\mathcal{O}(N\log^2(N))\enspace.\tag*{\qedhere}
  \end{align*}    
\end{proof}

\begin{toappendix}

\section{Faster Schedule Construction}
\label{sec:fasterSched}
The results of the previous section already give us an EPTAS for $Q||C_{\max}$ with almost linear run time of $2^{\mathcal{O}(1/\eps \log^{3}(1/\eps) \log(\log(1/\eps)))}+\mathcal{O}(N\log^2(N))$.
Interestingly, the bottleneck (with respect to $N$) of this approach is the transformation from a valid MILP solution into a feasible schedule.
In this section, we give a more conventional $Q||C_{\max}$ MILP formulation \eqref{hybrid-MILP} using both configuration and assignment variables to improve the run time in $N$.
First, we give an algorithm to transform a solution of \eqref{recursive-MILP} into a solution of \eqref{hybrid-MILP} in sublinear run time in $N$.
Then, we show how to construct a schedule from a solution to \eqref{hybrid-MILP} in linear time.
This allows us to transform a solution of \eqref{recursive-MILP} into a valid schedule in linear run time in $N$.
We also use the \eqref{hybrid-MILP} formulation in the high-multiplicity setting.%

Note that \autoref{MILPround1} constructs a solution to \eqref{recursive-MILP} from a schedule to $Q||C_{\max}$.
Hence, both formulations are equivalent up to a multiplicative error of $1+\mathcal O(\eps)$.

\subsection{The Hybrid-MILP Formulation}
Let $\mathcal{C}'_i\coloneqq \{\gamma\in\mathbb{N}^\tau\mid\bm{\gamma}\cdot\bm{b}\leq b_i\text{ and }\supp(\bm{\gamma})\subseteq H_i\}$ be the set of configurations of long jobs with range $1/\delta$ for machine $i$.
For any machine speed $b_i$ and corresponding configuration $\gamma\in\mathcal{C}'_i$, let $\free(i,\gamma)\coloneqq b_i-\bm{\gamma}\cdot\bm{b}$ be the speed of the machine that is free after placing the jobs specified by $\gamma$.
Then \eqref{hybrid-MILP} is given by
\begin{align}
  \sum_{\bm{\gamma}\in \mathcal{C}'_{i}} x_{i,\bm{\gamma}}&= \mu_i&\for i = 1,\ldots,\tau &&\label{hybrid-MILP:1}\\
  \sum_{i=1}^{\tau}\sum_{\bm{\gamma}\in \mathcal{C}'_{i}}\gamma_j\cdot x_{i,\bm{\gamma}}+\sum_{i=1}^{\tau} y_{i,j}&=\eta_j&\for j = 1,\ldots,\tau &&\label{hybrid-MILP:2}\\
  \sum_{\bm{\gamma}\in \mathcal{C}'_{i}}\free(i,\bm{\gamma})\cdot x_{i,\bm{\gamma}}-\sum_{j=1}^{\tau}b_j\cdot y_{i,j}&\geq0 &\for i = 1,\ldots,\tau&&\label{hybrid-MILP:3}\\
  x_{i,\bm{\gamma}},\;y_{i,j}&\geq 0\;&\for\; i =1,\ldots,\tau\;;j = 1,\ldots,\tau\;;\gamma\in\mathcal{C}_i\nonumber\\
  y_{i,j}&=0\;&\for\; i = 1,\ldots,\tau\;;j \in \{\min(H_i),\ldots,\tau\}\nonumber\\
  x_{i,\bm{\gamma}}&\in\mathbb{Z}_{\geq0}\;&\for\; i = 1,\ldots,L\;;\gamma\in\mathcal{C}_i\enspace.&&\label{hybrid-MILP}\tag{hybrid-MILP}
\end{align}
In this formulation, there are no recursive configurations.
Instead, we use configuration variables~$x_{i,\gamma}$, indicating how often a configuration $\gamma$ is used on machine~$i$.
Short jobs, taking up speed less than $\delta b_i$ on machine $i$, are handled via assignment variables~$y_{i,j}$ indicating how many jobs of size $b_j$ are assigned to machines of speed $b_i$.
The constraints \eqref{hybrid-MILP:1} enforce that every machine is assigned a configuration, the constraints \eqref{hybrid-MILP:2} guarantee that every job is scheduled somewhere, and the constraints \eqref{hybrid-MILP:3} make sure that the speed used by short jobs is at most the speed left free by configurations.

We now show how to convert a solution $\bm{x}^{\star}$ of \eqref{recursive-MILP} into a solution $(\bm{x},\bm{y})$ of \eqref{hybrid-MILP}.
\begin{printalgo}
\pcb[head=\parbox{\textwidth-8.8pt}{\textsf{ConvertMILPsolution\\
             Input: A feasible solution $\bm{x}^\star$ to \eqref{recursive-MILP}.\\
             Output: A feasible solution $\bm{x},\bm{y}$ to \eqref{hybrid-MILP}}}]{\textsf{ConvertMILPsolution}}{\label{algoconvert}.
    \t \pcfor \text{$i=1,\ldots,\tau$: initialize $\eta'_i\coloneqq \eta_i$ and $x_{i,\gamma}=x^\star_{i,\gamma}$ for $\gamma\in\mathcal{C}_i$, otherwise $x_{i,\gamma}=0$}\\
    \t \pcfor \text{decreasing machine speeds $b_{i}$}:\\
    \t[2] \text{decrease arbitrary $x_{i,\gamma}>0$ until $\sum_{\gamma\in\mathcal{C}'_i}x_{i,\gamma}=\mu_i$}\\
    \t[2] \pcdo \text{ count the number of jobs $\zeta_j$ in configurations $x_{i,\gamma}$ for $j\in H_i$}\\
    \t[3] \pcfor \text{every job size $b_j$ with $\zeta_j\geq\eta'_j$:}\\
    \t[4] \text{substitute $\zeta_j-\eta'_j$ many jobs in appropriate $x_{i,\gamma}$ with $x_{j,\gamma'}$}\\
    \t[4] \text{(by decreasing $x_{j,\gamma'},x_{i,\gamma}$ and increasing $x_{i,\gamma''}$,}\\
    \t[4] \text{where $\gamma''$ is $\gamma$ with jobs $j$ replaced with $\gamma'$}\\
    \t[4] \text{and jobs shorter than $\delta b_i$ in $\gamma'$ dropped)}\\
    \t[2] \pcuntil \text{there are no more $\zeta_j\geq\eta'_j$}\\
    \t[2] \text{decrease every $\eta'_i$ by $\zeta_j$}\\
    \t \pcfor \text{increasing machine speeds $b_{i}$}:\\
    \t[2] \text{calculate the free machine speed $z_i\coloneqq \sum_{\gamma\in\mathcal{C}'_i}\free(i,\gamma)\cdot x_{i,\gamma}$}\\
    \t[2] \text{starting with the shortest jobs $b_j$, increase $y_{i,j}$, decrease $z_i$ and $\eta'_j$}\\
    \t[2] \pcuntil \text{$z_i$ is $0$, $\eta'_j=0$ or $j\in H_i$}\\
    \t \pcreturn \text{$(\bm{x},\bm{y})$}
}
\end{printalgo}
\begin{lemma}\label{lem:MILP:conversion}
  \autoref{algoconvert} converts a solution $\bm{x}^\star$ of \eqref{recursive-MILP} into a solution $(\bm{x},\bm{y})$ of \eqref{hybrid-MILP} in time $2^{\mathcal{O}(1/\delta\log^2(1/\delta))}\log^{\mathcal{O}(1)}(N)$.
\end{lemma}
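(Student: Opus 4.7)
The plan is to verify feasibility of each constraint in \eqref{hybrid-MILP} and then bound the run time. The algorithm naturally splits into two phases: the first (decreasing-speed) loop unwinds the recursive virtual-machine structure of $\bm{x}^{\star}$ and produces configurations in $\mathcal{C}'_i$ that satisfy (hybrid-MILP:1); the second (increasing-speed) loop uses the resulting free capacity to place the remaining short jobs via the $\bm{y}$ variables, establishing (hybrid-MILP:2) and (hybrid-MILP:3).

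The conceptual core of the first loop is the correspondence encoded in \eqref{eq:virtual}: the excess $\sum_{\gamma}x^{\star}_{i,\gamma}-\mu_i$ equals the total number of size-$b_i$ jobs used as ``virtual machine'' placeholders inside configurations at \emph{higher} speed levels minus the number of real jobs of size $b_i$. Processing speeds from fast to slow, I would maintain the invariant that after iteration $i$ the configurations on machines of speed $b_{i'}\geq b_i$ contain only real jobs of long size (no virtual placeholders), that (hybrid-MILP:1) holds at every such level, and that $\eta'_j$ records exactly the jobs of size $b_j$ not yet placed. The initial truncation step $\sum_{\gamma\in\mathcal{C}'_i}x_{i,\gamma}=\mu_i$ is justified because the excess was already consumed in previous iterations by substitutions that expanded a virtual $b_i$-job inside some higher $\gamma$ into an actual sub-configuration $\gamma'\in\mathcal{C}_i$. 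Each substitution inside the \texttt{repeat} block replaces a virtual $b_j$-job in some $\gamma\in\mathcal{C}_i$ by a $\gamma'\in\mathcal{C}_j$, truncating the sub-$\delta b_i$ entries of $\gamma'$ (which are rescheduled as short jobs later); the resulting $\gamma''$ satisfies both the length bound $\gamma''\cdot\bm{b}\leq b_i$ and the support condition $\supp(\gamma'')\subseteq H_i$, so $\gamma''\in\mathcal{C}'_i$. Termination of the \texttt{repeat} block follows from a monovariant: the total count of virtual long jobs in configurations at level $i$ strictly decreases with every substitution.

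For the second loop, after the first loop terminates, every still-unscheduled job of size $b_j$ with remaining count $\eta'_j$ is short relative to any machine on which it could still appear. The greedy assignment in order of increasing machine speed and increasing job size satisfies the $y_{i,j}=0$ requirement for $j\in\{\min(H_i),\ldots,\tau\}$ by construction. The key feasibility claim is that the total free capacity generated by the first loop matches the remaining short-job load, which follows from summing \eqref{eq:pack} across sizes together with the virtual-machine bookkeeping of \eqref{eq:virtual}. The main obstacle I expect is ruling out a local overflow at some individual level $i$: I would handle this by an exchange argument, arguing that any would-be overflow corresponds to a virtual machine that $\bm{x}^{\star}$ placed at a strictly faster level, contradicting feasibility of \eqref{recursive-MILP} together with the invariants of the first loop.

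For the run time, the outer loop runs over $\tau\in\mathcal{O}(1/\delta\log(N/\delta))$ speeds. Within one iteration, enumerating configurations and counting $\zeta_j$ costs $|\mathcal{C}_i|\in 2^{\mathcal{O}(\log^{2}(1/\delta))}$; the number of substitutions per level is bounded by $|\mathcal{C}_i|\cdot |H_i|$, each costing polylogarithmic time in the encoding of \eqref{recursive-MILP}. The second loop is bounded similarly. Multiplying these factors and absorbing the logarithmic dependence on $N$ via the identities in \autoref{sec:RunTime} yields the claimed bound $2^{\mathcal{O}(1/\delta\log^{2}(1/\delta))}\log^{\mathcal{O}(1)}(N)$.
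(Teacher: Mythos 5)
Your overall route matches the paper's: verify the three constraint families of \eqref{hybrid-MILP} one by one, use \eqref{eq:virtual} to justify that truncating to $\sum_{\gamma}x_{i,\gamma}=\mu_i$ loses nothing (the excess configurations correspond exactly to virtual machines already consumed by substitutions at faster levels), and bound the run time by the number of configurations times the number of speed levels. The one place where you diverge, and where your argument is still a sketch rather than a proof, is the feasibility of the short-job phase. You propose a global capacity balance (summing \eqref{eq:pack} and \eqref{eq:virtual}) plus a local exchange argument to rule out overflow at an individual level $i$. The paper argues more locally and more directly: the sub-$\delta b_i$ entries dropped from $\gamma'$ during a substitution at level $i$ are precisely jobs that the recursive solution $\bm{x}^{\star}$ would have scheduled (via nested configurations) on machines of speed $b_i$, so the free capacity counted by $\free(i,\gamma)$ on those machines already reserves room for exactly that load; the only remaining issue is that the algorithm reassigns the short jobs in sorted order rather than in their original positions, which is handled by a domination argument (each job placed is no larger than the job it replaces, so no job ceases to be short and no level is overfilled), giving \eqref{hybrid-MILP:2} and \eqref{hybrid-MILP:3} simultaneously. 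Your exchange argument would have to reprove essentially this, so the paper's version is the cleaner way to close the step. One minor imprecision in your run-time accounting: the configurations $\gamma''$ produced by substitution lie in $\mathcal{C}'_i$ rather than $\mathcal{C}_i$ (after splicing in a sub-configuration they can contain more than two long jobs per affine slice), so the relevant cardinality is $|\mathcal{C}'_i|\in 2^{\mathcal{O}(1/\delta\log^2(1/\delta))}$, not $2^{\mathcal{O}(\log^2(1/\delta))}$; this still lands within the claimed bound, which is exactly why the lemma's exponent is $1/\delta\log^2(1/\delta)$.
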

\begin{proof}
  The algorithm guarantees that the sum of the configuration variables $x_{i,\gamma}$ for machines with speed~$b_i$ is exactly $\mu_{i}$.
  Hence, constraints \eqref{hybrid-MILP:1} are satisfied.
  Jobs are only replaced once all original jobs have been accommodated and, by constraints \eqref{eq:virtual}, there are exactly as many additional virtual machines, as there are additional jobs.
  Therefore, this step never reduces the total number of configurations for any machine speed below $\mu_i$.
  After the loop of lines 2-11 on machines with speed $b_i$, only jobs of size $b_j \in (\delta b_i,b_i]$, or in other words $j\in H_i$, are assigned via configurations. %
  Additional short jobs would have been assigned via the recursive configurations on these machines, which we neglect by stopping at $\delta b_i$.
  Thus, these machines have sufficient speed to handle these short jobs, which would have been assigned to them.
  Instead of their original order, we assign the short jobs by increasing size.
  This does not change the total load assigned, which therefore still remains sufficient.
  As the sorting ensures that any short job assigned is smaller or equal to some short job that would have been assigned by the original order, we do not assign jobs  which are no longer tiny.
  Due to having sufficient machine speed and not dropping jobs anywhere in the algorithm, we assign all real jobs and hence satisfy constraints \eqref{hybrid-MILP:2}.
  Finally, constraints~\eqref{hybrid-MILP:3} are satisfied by construction, as we never overfill the available speed.
  The number of configurations~$|\mathcal{C}'_i|$ is bounded by $2^{\mathcal{O}(1/\delta\log^2(1/\delta))}$ and all other quantities are bounded by $\mathcal{O}(1/\delta \log(1/\delta \cdot N))$.
  That allows us to analyze the run time of the algorithm to be within the claimed complexity.
\end{proof}

\subsection{Constructing a Schedule}
Now, we have an assignment for all small jobs, albeit with fractional variables. %
In this format we can assign the small jobs integrally more quickly then if we had to resolve recursive configurations.
\begin{lemma}
\label{lem:hybMILP:schedule}
  Given a feasible solution of \eqref{hybrid-MILP}, a schedule with makespan at most $(1+9\delta)T$ can be constructed in time $2^{\mathcal{O}(1/\delta\log^2(1/\delta))}+\mathcal{O}(N)$.
\end{lemma}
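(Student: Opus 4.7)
The plan is to construct the schedule in three phases that mirror the three decision classes in \eqref{hybrid-MILP}: integer configuration variables on the fastest $L$ machine speeds, fractional configuration variables on slower speeds, and fractional short-job assignments. I would first apply \autoref{lem:MILP:basic} to fix the integer $x_{i,\bm{\gamma}}$ with $i \leq L$ and obtain a vertex solution of the resulting LP in polynomial time. The integer configurations are placed directly on individual fast machines, and the long jobs listed in each configuration are assigned to those machines.

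Next, I would round the fractional configuration variables $x_{i,\bm{\gamma}}$ with $i > L$ downward and ship the excess to a fastest machine as one extra configuration per fractional value. For any fixed $i > L$ the number of positive $x_{i,\bm{\gamma}}$ at a vertex is bounded by the number of constraints of \eqref{hybrid-MILP} touching those variables, namely one instance of \eqref{hybrid-MILP:1}, one of \eqref{hybrid-MILP:3}, and the $|H_i| \in \mathcal{O}(\lambda \log(1/\delta))$ relevant instances of \eqref{hybrid-MILP:2}. Since $b_i \leq b_L \leq \delta^3 s_{\max}/\log(1/\delta)$ and the $b_i$ decay geometrically, summing as in \autoref{lem:recMILP:schedule} yields $\mathcal{O}(\delta\, s_{\max})$ additional load on the fastest machine, i.e., $\mathcal{O}(\delta) \cdot T$ extra makespan.

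Third, I would place the short jobs by greedy list scheduling within each speed class. Once the configurations are fixed, each machine has a known free speed $\free(i,\bm{\gamma})$. For each machine speed $b_i$ I iterate through the $\mu_i$ machines in turn, greedily filling each with short jobs allocated to that speed class via $y_{i,\cdot}$ until the free slot overflows or the quota is depleted; by \eqref{hybrid-MILP:3} the aggregate quota fits into the aggregate free space, so every short job is placed. Because each short job has size at most $\delta b_i$, the last job on each machine overflows by at most $\delta b_i$, contributing an extra $\delta T$ per machine. Combined with the configuration-rounding overhead this delivers makespan $(1+\mathcal{O}(\delta))T$, which absorbs into $(1+9\delta)T$ with an appropriate constant.

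The main obstacle is meeting the $\mathcal{O}(N)$ run time bound, since the number of short jobs can be as large as $N$. All LP-related computation depends only on $\tau$, $|H_i|$, and $|\mathcal{C}'_i| = 2^{\mathcal{O}(1/\delta \log^2(1/\delta))}$, so it is carried out in $2^{\mathcal{O}(1/\delta \log^2(1/\delta))} \log^{\mathcal{O}(1)}(N)$ time, which collapses to $2^{\mathcal{O}(1/\delta \log^2(1/\delta))} + \mathcal{O}(N)$ via the EPTAS identities in the preliminaries. The short-job phase touches each job and each machine at most once, advancing pointers through sorted lists, and thus runs in $\mathcal{O}(N)$ total, producing the claimed run time.
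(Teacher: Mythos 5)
Your proposal matches the paper's own proof in all essentials: round the fractional configuration variables of a vertex solution down and charge one extra configuration per fractional value to a fastest machine (bounded exactly as in \autoref{lem:recMILP:schedule}), then greedily pack the short jobs speed class by speed class, overpacking each machine by at most one job of size at most $\delta b_i$. The only step the paper makes explicit that you leave implicit is how the fractional assignment variables $y_{i,j}$ are turned into an integral split of short jobs across speed classes --- at a (sorted) vertex solution each speed class carries at most two fractional short jobs, absorbed by a $(1+2\delta)$ speed increase --- but this is the same vertex-plus-greedy argument you already invoke, so the two proofs coincide.
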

\begin{proof}
  For the configuration variables, we pursue the same strategy as in \autoref{lem:recMILP:schedule}, that is, rounding them down and assigning one configuration to a fastest machine for every rounded variable.
  Through the use of basic solutions, we construct a schedule introducing a multiplicative error of at most$(1+5\delta)$ in comparison to the optimal makespan.
  The process takes time $2^{\mathcal{O}(1/\delta\log^2(1/\delta))}\log^{\mathcal{O}(1)}(N)$ with our increased number of configurations.
  
  For the assignment variables of the short jobs, we first note that any machine speed $b_i$ is assigned at most $2$ fractional short jobs, as a variable only becomes fractional when the preceding or current group runs out of speed.
  As in \autoref{lem:MILP:conversion}, we first sort the assignment variables, in time $\mathcal{O}((1/\delta \log(1/\delta \cdot N))^4)$.
  By increasing every machine speed by a factor of $(1+2\delta)$, those two fractional jobs can be placed on an arbitrary machine of speed $b_i$, without exceeding the speed.%
  We get another overhead of a factor of $(1+\delta)$ by greedily packing the assigned jobs to machines, overpacking each machine just slightly.
  This greedy packing takes time $\mathcal{O}(N+(1/\delta\log(1/\delta \cdot N))^2)$.
  In total the approximation error is bounded by $(1+5\delta)(1+2\delta)(1+\delta)\leq (1+9\delta)$ for $\delta<1/100$, as claimed.
\end{proof}

\subsection{The Complete Algorithm}
Finally, we have all the components of our EPTAS, and can therefore show our original claim of having only linear dependence on $N$.
\NewQCmaxEPTAS*
\begin{proof}
  In total, we build \eqref{recursive-MILP}, solve it, convert the solution into one of \eqref{hybrid-MILP} and finally construct a schedule.
  Throughout \autoref{sec:pre}, we proved that the preprocessing takes time bounded by $\mathcal{O}(1/\delta^6)+\mathcal{O}(N)$.
  Finding a solution to \eqref{recursive-MILP} in \autoref{sec:QCmaxMILP} contributes the dominating run time term $2^{\mathcal{O}(1/\delta\log^3(1/\delta)\log(\log(1/\delta)))}$ in terms of $1/\delta$.
  With the techniques of \autoref{sec:RunTime}, all $\log^{\mathcal{O}(1)}(N)$ run time factors are absorbed.
  Using \autoref{MILPround1}, we find a feasible solution with approximate makespan to \eqref{recursive-MILP}.
  
  The following steps are only performed for the feasible \eqref{recursive-MILP} solution $\bm{x^\star}$ with the smallest makespan, and therefore do not incur the $\mathcal{O}(1/\delta \log(N))$ run time factor of the binary search.
  We convert $\bm{x^\star}$ into a feasible solution of \eqref{hybrid-MILP}, in time $2^{\mathcal{O}(1/\delta\log^2(1/\delta)\log(\log(1/\delta)))}+\mathcal{O}(N)$, through \autoref{lem:MILP:conversion}.
  Finally, \autoref{lem:hybMILP:schedule} gives a schedule to the original instance, with approximately optimal makespan in linear run time $\mathcal{O}(N)$.
  After substituting $\delta$ with $\eps/c$, for the appropriate algorithm-dependent constant $c$, the claimed result follows.
\end{proof}

\section{High-Multiplicity Scheduling}
So far, we only considered scheduling problems, where each job and each machine are given as a separate datum in the input, even though many of them might be identical.
More concretely, we have not distinguished between the total number of jobs $N$ and the number of \emph{distinct} jobs $n$, as well as the total number of machines $M$ and the number of \emph{distinct} machines $m$.
Clearly, representing the input with regard to $N$ and $M$ gives a very inefficient input encoding.
For example, consider an instance of $P||C_{\max}$, where \emph{all} $N$ jobs have processing time $p$ for some fixed $p \in \mathbb N$.
The previous encoding would represent the jobs of this instance as a list of $N$ times the value $p$, and the encoding size would be $\mathcal{O}(N\cdot \log(p))$.
In contrast, we can also use a more efficient encoding, called \emph{high-multiplicity scheduling}, which represents this instance as the tuple $(p,N)$ leading to a much smaller encoding size of $\mathcal{O}(\log(N)+\log(p))$.
Such efficient encodings were first studied for scheduling problems on a single machine~\cite{BraunerCGK2005,GranotS1993,Hamada2010,HochbaumS1991,HochbaumSS1992,HochbaumS1990,Psaraftis1980}; later on, these encodings were also considered for scheduling on multiple machines~\cite{GoemansR2020,McCormickSSF2001}.
We note here, that efficient algorithms for high-multiplicity encodings with multiple machines need to use carefully constructed outputs, as a naive assignment of jobs to machines would need space (and thus time) $\Omega(M)$, which might be exponential. 
Exact algorithms for high-multiplicity scheduling have recently attracted interest from a parameterized complexity point of view~\cite{BrinkopJ2022,KnopKLMO2019, KouteckyZ2020}. 

The previous, inefficient, encoding gave us two lists of $N$ and $M$ elements.
In designing efficient algorithms, we thus could treat both $N$ and $M$ as if given in unary.
In the high-multiplicity scenario, we are given two lists of $n$ and $m$ elements.
Each list element contains two values: the processing time $p_{j}$ (resp.~the speed $s_{i}$) and the number of jobs (resp.~machines) having this processing time (resp.~speed).
In this setting, $n$ and $m$ still act like they are unary, but $N$ and $M$ are effectively given in binary.
Hence, we need to consider bounds with regard to $n,m$ and $\log(N),\log(M)$ for efficient algorithms.

This more efficient encoding allows for two natural generalizations of our original problem.
\begin{definition}
  The problem \emph{$Q|\mathit{HM}_J|C_{\max}$} is the problem $Q||C_{\max}$ with the jobs given in the format $\mathcal{J}\coloneqq \{(p_j,\eta_j)\in\mathbb{N}^2\mid j = 1,\ldots,n\}$, where $\eta_j$ copies of job $j$ have processing time~$p_j$.
\end{definition}
\begin{definition}
  The problem \emph{$Q|\mathit{HM}|C_{\max}$} is the problem $Q|\mathit{HM}_J|C_{\max}$, where  machines are also given in the format $\mathcal{M}\coloneqq \{(s_i,\mu_i)\in\mathbb{N}^2\mid i = 1,\ldots,m\}$, where  $\mu_i$ copies of machine $i$ have speed $s_i$.
\end{definition}
Hence, an efficient algorithm for instances of $Q|\mathit{HM}|C_{\max}$ (which we call the \emph{general high-multiplicity setting}) needs to run in time polynomial in $n$ and $m$.
The third combination, $Q|\mathit{HM}_M|C_{\max}$, where only machines are given in high-multiplicity, immediately reduces to the normal representation, because of $M\leq N$, and is therefore irrelevant.

As usual for very compact input encodings, we need to pay special attention to the representation of a solution.
For example, if we use the intuitive representation of a schedule $\sigma\colon J\to M$ via its function table, our solution has a size of at least $N\log(M)$, which is \emph{exponential} in the input size of a high-multiplicity encoding.
Hence, we cannot require an assignment for every individual job.
In the $\mathit{HM}_J$ setting, it is natural to also consider a high-multiplicity output instead, i.e., one where we can represent a schedule by each machines number of jobs of each type.
This is equivalent to returning a configuration for each machine, which then represent a schedule for all jobs together.
Such an output has a size of at least $Mn\log_2(N)$, which is still polynomial in this setting.
Filippi and Romanin-Jacur~\cite{FilippiRJ2009} presented an approximation algorithm with additive error of $p_{\max}$ for $P|\mathit{HM}_J|C_{\max}$, using this output format.

We first show how to represent a solution efficiently in the general high-multiplicity setting by allowing the configurations to have multiplicities as well.
Using this solution representation, the output has size at most $m\log(M)n\log(N)$, which is polynomial in the input.
Using this encoding, we show that the algorithm by Filippi and Romanin-Jacur~\cite{FilippiRJ2009} can be generalized to the $P|\mathit{HM}|C_{\max}$ setting and yields an additive error of at most $p_{\max}$.
Clearly, this is a $2$-approximation for $P|\mathit{HM}|C_{\max}$.
We show that it can also be turned into a $(2+\eps)$-approximation algorithm for $Q|\mathit{HM}|C_{\max}$.
To our knowledge, this is the first result of this type in the general high-multiplicity setting.
Finally, we show that our techniques also allow for an EPTAS for $Q|\mathit{HM}|C_{\max}$ with the current state of the art run time, in terms of $\eps$.
This suggests that a high-multiplicity encoding is no significant obstacle to efficient approximation algorithms for $Q||C_{\max}$.

\subsection{Approximation Algorithms for High-Multiplicity Scheduling}
Filippi and Romanin-Jacur~\cite{FilippiRJ2009} show how to find a schedule on uniformly related machines with an additive error of $p_{\max}/s_{m-1}$ on the optimal makespan:
\begin{proposition}[{Filippi, Romanin-Jacur~\cite[Thm. 9]{FilippiRJ2009}}] There is an algorithm for $Q|\mathit{HM}_J|C_{\max}$ which gives a solution with makespan $\OPT+p_{\max}/s_{m-1}$, where $s_{m-1}$ is the second-slowest machine speed, in time $\mathcal{O}(n\cdot M)$.
\end{proposition}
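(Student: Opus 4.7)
The plan is to solve the preemptive relaxation of $Q|\mathit{HM}_J|C_{\max}$ exactly in closed form, and then round the resulting fractional schedule to a non-preemptive one with controlled additive error. Since preemption is a relaxation, the optimal preemptive makespan $T^{\text{pmtn}}$ lower bounds $\OPT$.

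First I would sort the machines by decreasing speed and the $n$ job types by decreasing processing time, and then compute $T^{\text{pmtn}}$ from the classical Horvath--Lam--Sethi closed-form expression in time $\mathcal{O}(n+M)$. I would then realise this makespan by a level-style algorithm adapted to the high-multiplicity setting, handling an entire block of identical copies of a job type in a single step rather than materialising the $N = \sum_j \eta_j$ individual copies. This produces an $M \times n$ matrix $(x_{i,j})$ of fractional assignments in total time $\mathcal{O}(n \cdot M)$ which, by a careful implementation of the level rule, exhibits the following staircase property: across each of the $M-1$ boundaries between consecutively speed-ordered machines there is at most one \emph{split copy}, that is, at most one copy of a single job type whose work is divided between the two adjacent machines.

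Second I would round to a non-preemptive schedule by moving every split copy entirely onto the faster of its two machines. Under this rule the slowest machine absorbs no extra copy, and every other machine absorbs at most one extra copy of size at most $p_{\max}$. The overload on a non-slowest machine $i$ is therefore at most $p_{\max}/s_i$, and since every non-slowest machine has speed at least that of the second-slowest machine, of speed $s_{m-1}$, the resulting makespan is bounded by $T^{\text{pmtn}} + p_{\max}/s_{m-1} \leq \OPT + p_{\max}/s_{m-1}$.

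The main obstacle I foresee is establishing the staircase invariant in the high-multiplicity setting together with an $\mathcal{O}(n \cdot M)$-time implementation: the algorithm must never enumerate individual copies, and the transition from filling one machine to the next must leave only a single in-progress copy of a single type in mid-execution. Once this structural guarantee is in place, the rounding step and the additive overload bound follow routinely.
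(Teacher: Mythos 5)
The paper does not prove this proposition---it is imported verbatim from Filippi and Romanin-Jacur---but it does prove the generalization \autoref{thm:hmQp} via \autoref{hm:algo}, and your argument is essentially that same greedy idea: fill speed-ordered machines block-wise up to level $T$, overpack each machine by at most one job, and charge the overflow against a lower bound on $\OPT$. Your proof is correct, and the one obstacle you flag is not actually an obstacle: you do not need the Horvath--Lam--Sethi level algorithm or a genuine preemptive schedule at all. A plain wrap-around fractional fill (machines by decreasing speed, job types by decreasing processing time, each machine loaded to exactly $T\cdot s_i$ work) trivially has the staircase property with at most one split copy per boundary, and it accommodates all jobs whenever $T$ is at least the average load $\sum_j \eta_j p_j/\sum_i s_i\leq\OPT$; the $\max_k$ terms of the preemptive bound are never used. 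Each iteration finishes either a job type or a machine, giving $\mathcal{O}(n+M)$ steps, and your rounding rule (push each split copy to the faster neighbour) correctly leaves the slowest machine untouched and adds at most $p_{\max}/s_i$ to every other machine, yielding the claimed $\OPT+p_{\max}/s_{m-1}$.
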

We generalize their result by \autoref{hm:algo}, with which we can now show \autoref{thm:hmQp}.
It uses the optimal load of the fractional relaxation as a lower bound and assigns longest jobs to fastest machines, overpacking each machine by at most one job.
The crux is that the assignment is performed in a block-wise manner, handling as many jobs or machines as possible at once, by utilizing the high-multiplicity representation of both the input and the output.
\begin{printalgo}
\pcb[%
         head={\parbox{\textwidth-8.8pt}{$\mathit{HM}$-\textsf{GreedyApproximation}\\
             Input: An instance $\mathcal{I}\coloneqq (\{(p_1,\eta_1),\ldots,(p_n,\eta_n)\},\{(s_1,\mu_1),\ldots,(s_m,\mu_m)\})$ and\\ a preemptively feasible makespan $T$.\\
             Output: A set of assigned configurations $\sigma\coloneqq \{\ldots,((s_i,a_{i,j}),\{\ldots,(p_j,b_{i,j}),\ldots\}),\ldots\}$.}}]{$\mathit{HM}$-\textsf{GreedyApprox}}{\label{hm:algo}
        \t \text{set $i\coloneqq 1,j\coloneqq 1,R\coloneqq \{\}$} \pccomment{machine- and job-index, result set of configurations}\\
        \t \pcwhile \text{$j\leq n$}: \pccomment{as long as there are jobs left to schedule}:\\
        \t[2] \text{set $x\coloneqq \lceil s_i T/p_j\rceil$} \pccomment{number of jobs fitting on current machine}\\
        \t[2] \pcif \text{$\mu_i x<\eta_j$}: \pccomment{Are we machine limited?}\\
        \t[3] \text{add $((s_i,\mu_i),\{(p_j,x)\})$ to $R$} \pccomment{Yes, then fill up all machines}\\
        \t[3] \text{set $\eta_j\coloneqq \eta_j-\mu_ix$ and $\mu_i\coloneqq 0$} \pccomment{and remove the assigned jobs and machines}\\
        \t[2] \pcelseif \text{$x<\eta_j$}: \pccomment{Enough jobs for multiple machines?}\\
        \t[3] \text{add $((s_i,\lfloor \eta_j/x\rfloor),\{(p_j,x)\})$ to $R$} \pccomment{Yes, then fill up machines}\\
        \t[3] \text{set $\eta_j\coloneqq \eta_j-\lfloor \eta_j/x\rfloor x$ and $\mu_i\coloneqq \mu_i-\lfloor \eta_j/x\rfloor$} \pccomment{and remove assigned jobs and machines}\\
        \t[2] \pcelse \pccomment{Otherwise we have to fill a single machine}\\
        \t[3] \text{set $\gamma\coloneqq ((s_i,1),\{\})$ and $C\coloneqq 0$} \pccomment{Configuration and load for machine}\\
        \t[3] \pcwhile \text{$C<s_iT$ and $j\leq n$}: \pccomment{Repeat until full or out of jobs}\\
        \t[4] \text{set $y\coloneqq \min(\lceil(s_iT-C)/p_j\rceil,\eta_j)$} \pccomment{Max addable amount of $p_j$}\\
        \t[4] \text{add $(p_j,y)$ to $\gamma$}\\
        \t[4] \text{set $\eta_j\coloneqq \eta_j-y$ and $C\coloneqq C+p_jy$} \pccomment{remove and count assigned}\\
        \t[4] \pcif \text{$\eta_j=0$: set $j\coloneqq j+1$} \pccomment{Go to next job class when current is used up}\\
        \t[3] \pcendwhile\\
        \t[3] \text{add $\gamma$ to $R$} \pccomment{Add our single machine to the result}\\
        \t[3] \text{set $\mu_i\coloneqq \mu_i-1$} \pccomment{and remove it from the available ones}\\
        \t[2] \pcendif\\
        \t[2] \text{$\textbf{if}\;\mu_i=0$: set $i\coloneqq i+1$} \pccomment{Go to next machine class when current is used up}\\
        \t[2] \text{$\textbf{if}\;\eta_j=0$: set $j\coloneqq j+1$} \pccomment{Go to next job class when current is used up}\\
        \t \pcendwhile\\
        \t \pcreturn \text{R} \pccomment{Returns the set of configurations}
}
\end{printalgo}
\GreedyHMQCmax*
\begin{proof}
  The crucial observation in the analysis of \autoref{hm:algo} is that every creation of a block of single job class configurations either finishes a job or machine class outright, or must be followed by the creation of a single machine configuration.
  There are two cases to consider:
  First, the remaining volume of the current job class $j$ fills up all space on the machines of the current machine class~$i$.
  In this case, we use up this machine class.
  Second, the remaining volume of the current job class $j$ does not fill up all space on the machines of the current machine class $i$.
  In this case, we used up this job class and continue to fill the current machine with the next job class. 
  Therefore, the outer while loops runs at most twice for every machine class and job class, or more precisely at most $2n+2m-1$ many times.
  The inner while loop can have more than one iteration, but in doings so either fills the current machine or uses up a complete job class. 
  Therefore, the inner while loop can at most steal iterations from the outer loop, but not increase the total number.
  Hence, the number of configurations produced is also bounded by $\mathcal{O}(n+m)$, as each iteration of the outer while loop produces at most one configuration.
  
  The algorithm never runs out of machines before it has assigned all jobs, as it assigns to each machine at least the average load, before moving to the next one.
  On the other hand, it overloads each machine with at most one job, and clearly the average completion time is a lower bound on the optimal makespan.
  Hence, the algorithm gives an approximate schedule with an additive error of at most $p_{\max}$.
\end{proof}
In the $P|\mathit{HM}|C_{\max}$ case (where all machines have the same speed), this schedule can easily be seen to be $2$-approximate, as $p_{\max} \leq \OPT$ holds by setting $T\coloneqq \sum_{j=1}^n p_j\eta_j/\sum_{i=1}^m s_i\mu_i$.
Unfortunately, for $Q|\mathit{HM}|C_{\max}$, using $T$ as a lower bound can have an arbitrarily bad approximation ratio, as we can only guarantee $p_{\max}/s_{\max} \leq \OPT$.
However, this bad approximation ratio can only happen due to the scheduling of a single big job on a machine. 
Otherwise, the overhanging jobs would have to be smaller than the makespan to even be assigned.
We will thus make use of an observation by Woeginger~\cite{Woeginger1999}, that we have $\OPT\leq 2\OPT_{\text{pmtn}}$, where $\OPT_{\text{pmtn}}$ is the optimal makespan of the preemptive relaxation (i.e., where we allow interruption and resumption of jobs on different machines).
This observation is equivalent to the fact that $T_{p}\leq 2\OPT$, where $T_p=\max\left(\sum_{j=1}^N p_j/\sum_{i=1}^M s_i,\max_{k=1,\ldots,M}\sum_{j=1}^k p_j/\sum_{i=1}^k s_i\right)$.
Note that the quantity $T_p$ is the maximum of the average load and the average loads of the $k$ longest jobs on the~$k$ fastest machines.
Computing~$T_{p}$ naively leads to the evaluation of $M+1$ many sums.
However, we can also compute $T_{p}$ efficiently in the high-multiplicity setting by realizing that $(a+x\cdot p)/(b+x\cdot s)$ is monotone in~$x$ (as its derivative is $(bp-as)/(b+sx)^2$).
Thus, the maximum $\max_{k=1,\ldots,N}\sum_{j=1}^k p_j/\sum_{i=1}^k s_i$ can only occur after having added none, or all of the available jobs or machines of a size or speed respectively.
Hence, we only need to consider values of $k$, where all jobs of a certain size were added or all machines of a certain speed.
As there are only $n$ different job sizes and $m$ different machine speeds, we only need to consider $\mathcal{O}(n+m)$ different values for $k$.
The corresponding partial sums can then be calculated in time linear in the high-multiplicity encoding.

Now, using $T_{p}$ directly as our makespan guess does not work yet, as our algorithm can still place a big job badly, creating a makespan of $T_{p}+p_{\max}$.
Using a makespan guess of $2T_p$ however is sufficient to compute a $4$-approximate schedule: A single large job put onto the fastest remaining machine and overloading this machine would lead to a contradiction, as we would now have allocated more longer jobs on faster machines then the optimal makespan.
This implies that our algorithm constructs a $2$-approximate schedule, if we set $T$ to the (unknown) optimal makespan $\OPT$.
By performing a binary search with a precision of $\eps$ in the interval $[T_p,2T_p]$, we only make  $\mathcal{O}(\log_2(\log_{1+\eps}(2)))=\mathcal{O}(\log(1/\eps))$ calls to \textsf{$Q|\mathit{HM}|C_{\max}$GreedyApproximation} and will obtain a guaranteed $(2+\eps)$-approximate schedule.
Hence, we have proved \autoref{cor:hmQ2}.

As summarized in the proof of \autoref{thm:newqcmaxeptas}, all preconditioning steps of our algorithm are already efficient for high-multiplicity encodings.
Afterwards, the construction of the quantities $\bm{\eta},\bm{\mu}$ directly gives a high-multiplicity encoding of the rounded instance.
Solving \eqref{recursive-MILP} thus also had a run time depending on $\log(N)$ and not on $N$.
The only problem arises in the construction of the schedule, as
the classical greedy algorithm has time linear in the number of jobs.
By replacing this part with \autoref{hm:algo}, we achieve the same additive $p_{\max}$ approximation.
We can therefore generate an approximate schedule of configurations with multiplicities through this adaption, proving \autoref{cor:hmQe}.

\section{The Transition From Few to Many Machine Speeds}
\label{sec:transition}
In the construction of the fastest EPTAS for $P||C_{\max}$, Berndt et. al.~\cite{BerndtDJR2022} applied an ILP solver by Jansen and Rohwedder~\cite{JansenR2018,JansenR2022}, optimized for programs with few constraints, to obtain a run time of  $2^{\mathcal{O}(1/\eps\log(1/\eps)\log(\log(1/\eps)))}+\mathcal{O}(N)$.
For $Q||C_{\max}$, we resorted to using an MILP algorithm, instead of the optimized ILP algorithm.
Ultimately, this gave us a factor of $\log^3(1/\eps)$, instead of the $\log(1/\eps)$ for identical machines, in the exponential run time term.
Specifically, the ILP algorithm is:

\begin{proposition}[Jansen, Rohwedder~\cite{JansenR2018,JansenR2022}]
\label{ILP:Rohwedder}
  For any instance of $ILP$, we can either find a solution $\bm{x}\in\mathcal{Q}$ or determine that $\mathcal{Q}=\emptyset$ in time $2^{\mathcal{O}(m\log(A_{\max}))}\langle \mathit{ILP}\rangle^{\mathcal{O}(1)}$.
\end{proposition}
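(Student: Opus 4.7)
The plan is to reduce an ILP with $m$ constraints and column $\ell_1$-norm bounded by $A_{\max}$ to a dynamic-programming problem over a state space of size $2^{O(m \log A_{\max})}$. I would proceed in four steps.

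First, I would reduce optimization to feasibility. The value of $\bm{c}^\intercal \bm{x}$ ranges over an interval with endpoints polynomial in the input encoding; binary searching over it and solving the resulting feasibility problems adds only a polylogarithmic overhead in $\langle\mathit{ILP}\rangle$.

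Second, I would apply a proximity theorem in the spirit of Eisenbrand--Weismantel to localise integer solutions near an LP optimum. Let $\bm{x}^\star$ be a vertex solution of the LP relaxation and set $\hat{\bm{x}} \coloneqq \lfloor \bm{x}^\star\rfloor$. The proximity bound then says that if any integer solution exists, then there is one with $\|\bm{x} - \hat{\bm{x}}\|_1 \leq (mA_{\max})^{O(m)}$. Writing $\bm{x} = \hat{\bm{x}} + \bm{z}$, it remains to decide whether the residual system $A\bm{z} = \bm{b} - A\hat{\bm{x}}$ admits a nonnegative integer solution $\bm{z}$ of bounded $\ell_1$-norm.

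Third, I would interpret $\bm{z}$ as a multiset of columns of $A$ summing to the residual $\bm{r} \coloneqq \bm{b} - A\hat{\bm{x}}$ and invoke the Steinitz lemma. By a classical reordering, we may assume that every prefix sum $\sum_{i\leq k} A_{j_i}$ along the chosen column sequence deviates from the straight-line interpolation $(k/\|\bm{z}\|_1)\cdot\bm{r}$ by at most $O(m A_{\max})$ in $\ell_\infty$. Hence the reachable \emph{states} during the construction of $\bm{z}$ all lie in an axis-aligned box of side length $O(m A_{\max})$ in each coordinate.

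Fourth, I would run a reachability search on the directed graph whose vertices are these reachable partial sums and whose arcs correspond to appending a column of $A$. By the Steinitz bound, this graph has at most $(O(mA_{\max}))^m = 2^{O(m \log A_{\max})}$ vertices, each with $n$ out-arcs. Detecting a $\bm{0}\to\bm{r}$ path therefore takes time $2^{O(m \log A_{\max})}\cdot\langle\mathit{ILP}\rangle^{O(1)}$, as claimed.

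The main obstacle will be controlling the combined growth of the proximity bound and the Steinitz radius: a naive layered DP indexed by position in the column sequence incurs an extra $\log\|\bm{z}\|_1$ factor, which can be as large as $\Theta(m\log(mA_{\max}))$. Following Jansen--Rohwedder, one sidesteps this by running an \emph{interval} DP that glues short column blocks along the interpolation of $\bm{r}$, so that only the box of side $O(A_{\max})$ per layer need be enumerated, and the logarithmic factor from the sequence length collapses into the polynomial $\langle\mathit{ILP}\rangle^{O(1)}$ term. Making this block-wise DP fit within $2^{O(m \log A_{\max})}$ without an extra $m$ factor inside the logarithm is where the bulk of the technical work resides.
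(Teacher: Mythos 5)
This proposition is imported from Jansen and Rohwedder; the paper gives no proof of it, so your sketch can only be measured against the known argument, which it does follow in outline (LP proximity, Steinitz reordering of the residual columns, dynamic programming over nearby partial sums). However, there is a genuine gap, and you have in fact named it yourself without closing it: your own accounting in steps three and four yields $\bigl(O(mA_{\max})\bigr)^m = 2^{O(m\log(mA_{\max}))}$ states, which is $2^{O(m\log m + m\log A_{\max})}$ and \emph{not} the claimed $2^{O(m\log A_{\max})}$ unless $m\leq A_{\max}^{O(1)}$. Deferring this to ``the bulk of the technical work'' leaves the proof of the stated bound incomplete, because the extra $m$ inside the logarithm is precisely the difference between the Eisenbrand--Weismantel bound and the one being cited.

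The missing idea is to carry out the Steinitz argument in the $\ell_1$ norm rather than enclosing the reachable states in an $\ell_\infty$ box. Each column of $A$ has $\ell_1$-norm at most $A_{\max}$, so the Steinitz lemma keeps all prefix sums within $\ell_1$-distance $m\cdot A_{\max}$ of the interpolating segment; the point is that an $\ell_1$-ball of radius $mA_{\max}$ in $\mathbb{Z}^m$ contains only $2^m\binom{mA_{\max}+m}{m}\leq \bigl(O(A_{\max})\bigr)^m = 2^{O(m\log A_{\max})}$ lattice points, in contrast to the $(mA_{\max})^m$ points of the corresponding box. This is exactly why the result is naturally parameterized by the largest column $\ell_1$-norm $A_{\max}$, and it is the $\ell_1$ counting (together with the dyadic merging you allude to, which confines the search to balls around the points $(j/2^i)\cdot\bm{r}$ and contributes only a $\log\|\bm{z}\|_1\leq\langle\mathit{ILP}\rangle^{O(1)}$ factor) that delivers the exponent $O(m\log A_{\max})$. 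Jansen and Rohwedder's further refinement of the radius to $O(\sqrt{m}\Delta)$ via a balancing argument is what sharpens the constants, but already the $\ell_1$-ball count suffices for the bound as stated here. Your step four also needs this layered formulation for a second reason: the column sequence has length $\|\bm{z}\|_1$, which after the proximity step can still be of order $(mA_{\max})^{O(m)}$, so a plain reachability search must be organized so that its state space, not its path length, governs the running time.
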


In the following, we present an algorithm whose run time transitions continuously between the state-of-the-art run times for approximation schemes for $P||C_{\max}$ (all machines have identical speeds) and $Q||C_{\max}$ (there are no restrictions at all on the number of distinct machine speeds), depending on the structure of the given machine speeds.
More concretely, we can improve our general run time if there is a block $B$ of relatively few machines with speeds in a range~$1/\delta$ among the fast, but not too fast machines.
This allows us to separate the problem into two separated instances, split along this block~$B$.
Machines faster than those in $B$ are handled by an ILP formulation of \eqref{recursive-MILP}, which is solved with \autoref{ILP:Rohwedder}.
Machines slower than those in $B$ are handled via an LP algorithm, and the solution of the ILP and the LP are combined into an approximately optimal schedule of our original input.

Recall that after rounding, we index the machine speeds by $\{1,\ldots,\tau\}$ and the $L\in \mathcal{O}(1/\delta \log(1/\delta))$ fastest machines, indexed by $\{1,\ldots,L\}$, are those that are packed integrally in \eqref{hybrid-MILP}.
Additionally, $H_{i}=\{j = 1,\ldots,\tau \mid b_{j}\in (\delta b_{i},b_{i}]\}$ are the set of indices of long job speeds, i.e., those that use at least a fraction of $\delta$ of the complete machine speed.
Note that all the corresponding variables $x_{i,\gamma}$ for $i\in H_i$ are fractional in \eqref{hybrid-MILP}, if $i > L$.
The set of corresponding speeds is given by $B_{i} \coloneqq \{b_j\mid j\in H_i\}$, which we call a \emph{block}.
Simultaneously, $B_{i}$ is also the set of job processing times which can be assigned to machines with speed $b_i$ without exceeding the makespan guess.
Such a block contains $m_i \coloneqq \sum_{j\in H_i}\mu_j$ machines in total.

If there is an $i\in \{L+1,\ldots,\tau\}$ such that the number of machines $m_{i}$ is small, we can split up the instance into two separate parts.
As first step, we guess the vector $\bm{h}_{i}$ describing the assignment of long jobs to machines with speed $b_{i}$.
If $m_{i}$ is sufficiently small, this can be done reasonably fast.
We thus assume that we have guessed $\bm{h}_{i}$ correctly.
Now, we can modify \eqref{hybrid-MILP} by hardcoding the assignment given by $\bm{h}_{i}$ into it.
Note that this also implies hardcoding a specific amount of space on machines in group $i$ available for tiny jobs.
As a consequence, for solutions of this modified \eqref{hybrid-MILP}, the assignments to machines with speeds faster than $B_{i}$ and the assignments to machines with speeds slower than $B_{i}$ barely interact anymore.
In fact, their only interaction is given by the free space needed on the faster machines that will be used for tiny jobs.
Since this free space might be large, we can not simply guess it.
But, due to our rounding, this free space is a multiple of $\rho_i^{\text{\tiny}}$.
Hence, we know that a space of $N \cdot \rho_{i}^\text{tiny}$ suffices to place all of these tiny jobs.
The appropriate multiple $\free_{i}$ of $\rho_{i}^\text{tiny}$ can thus be found via binary search by ensuring that the free space is at least as large as the usable space.
Implementing this approach requires a small modification to \eqref{hybrid-MILP}.
Originally, we checked globally that the tiny jobs only use the existing free space via the inequality
\begin{align*}
  \sum_{\bm{\gamma}\in \mathcal{C}'_{i}}\free(i,\bm{\gamma})\cdot x_{i,\bm{\gamma}}-\sum_{j=1}^{\tau}b_j\cdot y_{i,j}&\geq0 &\for i = 1,\ldots,\tau\enspace.
\end{align*}
Instead, we now do this cumulatively and check the space in each group in  $\{1,\ldots,k\}$ for $k = 1,\ldots,\tau$.

To do this, we make extensive use of the following observation:
A job type is long in up to two groups, and these corresponding groups need to be consecutive.

Now, given $\bm{h}_{i}$ and $\free_{i}$, our problem decomposes into two independent problems. 
All variables corresponding to faster machines with speeds $b_{1},\ldots,b_{i}$ will be solved integrally by applying \autoref{ILP:Rohwedder}.
For the remaining machines, all variables will be solved fractionally by an appropriate LP solver.
A combination of these solutions corresponds to a feasible solution of \eqref{hybrid-MILP} (after converting it into a cumulative version) where we simply require integrality for more variables, as $i \geq L$.
Hence, we can compute a corresponding schedule as described before.

We still need to analyze the run time of our approach.
To compute $\bm{h}_{i}$, we have two options.
We can guess each entry of the vector, which is a number between $0$ and~$m_{i}/\eps$, which results in a run time of
\begin{equation*}
  \left(\frac{m_{i}}{\eps} + 1\right)^{\mathcal{O}(1/\eps \log(1/\eps))} \subseteq 2^{\mathcal O(1/\eps \log(1/\eps) (\log (1/\eps) + \log(m_i))} \enspace .
\end{equation*}
Alternatively, we guess the type of each job, which results in a run time of
\begin{equation*}
  \mathcal{O}\left(\frac{1}{\eps} \log\left(\frac{1}{\eps}\right)\right)^{m_{i}/\eps} \subseteq 2^{\mathcal{O} (m_{i}\cdot 1/\eps \log(1/\eps) \log(\log(1/\eps)))} \enspace .
\end{equation*}
Now, we choose $i\geq L$ such that $m_{i}$ is minimal and can compare the run time of both approaches, as all parameters are known at run time.
We can thus guess $\bm{h}_{i}$ in time
\begin{equation*}
  2^{\mathcal{O} \left(1/\eps \log(1/\eps) \min \{ m_i \log(\log(1/\eps)),\,  \log(m_i) + \log(1/\eps) \}\right)} \enspace .
\end{equation*}

As described above, the free space $\free_{i}$ can be guessed in time $\mathcal{O}(\log(N))$ using binary search.
Solving the LP for the slower machines is possible in time $2^{\mathcal{O}(\log^2(1/\eps))} \cdot \log^{\mathcal{O}(1)}(N)$, which is negligible compared to the rest of the algorithm.
Solving the ILP for the faster machines via the algorithm of Jansen and Rohwedder~\cite{JansenR2018,JansenR2022} takes time $2^{\mathcal{O}(i \log(i) \log(\log(1/\eps)))}\cdot \log^{\mathcal{O}(1)}(N)$.
This yields a total run time of
\begin{equation*}
  \min_{i > L}2^{\mathcal{O} \left(1/\eps \log(1/\eps) \min \{ m_i \log(\log(1/\eps)),\,  \log(m_i) + \log(1/\eps) \} + i \log(i) \log(\log(1/\eps))\right)}\cdot \log^{\mathcal{O}(1)}(N) \enspace .
\end{equation*}

In case of identical machines $P||C_{\max}$, we can simply choose $i \coloneqq L+1$, as $m_{i} = 0$ for all $i > 1$.
Doing so, we achieve the state-of-the-art run time~\cite{BerndtDJR2022} of $2^{\mathcal{O}(1/\eps\log(1/\eps)\log(\log(1/\eps)))}+\mathcal{O}(N)$.
This also covers the generalization of a bounded number of distinct machine speeds $k$, as the speeds are either close enough to each other to be solved efficiently or there is some gap and thus a small $m_{i}$.
Balancing these two options results in a run time of $2^{\mathcal{O}(k \cdot 1/\eps \log(1/\eps) \log(\log(1/\eps)))} + \mathcal O(N)$, thereby proving the following theorem.

\SpecialCase*

As a consequence, for $k \leq \log^2(1/\eps)$, either there is some $i \leq k$ with $m_i = 0$ or, by the pigeonhole principle, that for $i \coloneqq k + 1$ we have $m_i = 0$.
Hence, if the number of different machine speeds is at most $\log^2(1/\eps)$, we obtain a run time that, depending on the exact $k$, is somewhere between $2^{\mathcal{O}(1/\eps\log^{3}(1/\eps)\log(\log(1/\eps)))} + \mathcal O(N)$ and $2^{\mathcal{O}(1/\eps\log(1/\eps)\log(\log(1/\eps)))} + \mathcal O(N)$.
This shows that for relatively few different machine types, we can even improve the run time of our algorithm presented in~\autoref{thm:newqcmaxeptas}. 

Furthermore, if there is a machine type that has only very few machines in terms of $N$, we can also improve the run time.
More formally, if there is some $i \leq \log(\log(N))$ where $m_i \leq \mathcal{O}(\log^{\mathcal{O}(1)}(N))$, we can solve this in time with better dependency on $1/\eps$ than the general case of \autoref{thm:newqcmaxeptas}.
If $\log(N) \geq 1/\eps^{2}$, then our generic algorithm has run time $\mathcal{O}(N)$.
Otherwise, we have $2\log (1/\eps) \geq \log(\log N)$ and thus $i \leq 2\log(1/\eps)$.
By taking the smaller term, we obtain a run time of $2^{1/\eps \log^2 (1/\eps) \log (\log (1/\eps))} + \mathcal O(N)$.

\end{toappendix}

\section{Uniform Machine Types}
It is well-known that for the very general $R||C_{\max}$ problem one cannot obtain a $(3/2-\varepsilon)$-approximation in polynomial time for any $\varepsilon > 0$, unless $\mathsf{P}=\mathsf{NP}$~\cite{LenstraST1990}.
In this problem, if job $j$ is processed on machine $i$, it takes time $p_{i,j}$.
As these values do not need to be related (in contrast to the uniform case, where $p_{i,j}=p_{j}/s_{i}$), this problem is often called scheduling on \emph{unrelated machines}.
To understand the approximability of this problem in more depth, Jansen and Maack~\cite{JansenM2019} studied the problem $R_K||C_{\max}$, where there are $K$ \emph{machine types}.
They presented an EPTAS with run time $2^{\mathcal{O}(K\log(K)1/\eps\log^4(1/\eps))}+\langle \mathcal{I}\rangle ^{\mathcal{O}(1)}$.
They also consider the generalization $R_KQ||C_{\max}$, where we have $K$ machine types and for each type, the machines can have different uniform speeds.
For $R_KQ||C_{\max}$, they provided an EPTAS with run time $2^{\mathcal{O}(K\log(K)1/\eps^3\log^5(1/\eps))}+\langle \mathcal{I}\rangle^{\mathcal{O}(1)}$.
To show the generality of our approach, we construct an EPTAS with improved run time of $2^{\mathcal{O}(K\log(K)1/\eps\log^3(1/\eps)\log(\log(1/\eps)))}+\mathcal{O}(K\cdot N)$.
This suggests that adding uniform machines to each machine type does not increase the complexity in approximating the generalized problem.
The exponential improvement in $\eps$ is enabled by our recursive configurations formulation.
It allows us to apply our support size bound globally to the complete MILP.
In contrast, the original approach needs to distinguish simple and complex configurations for each machine type locally.
The improvement of the polynomial dependency in $K\cdot N$ to $\mathcal{O}(K\cdot N)$, which is linear in the input size $\langle\mathcal{I}\rangle$, is the result of analyzing job types instead of individual jobs.

\smallskip\noindent\textbf{Problem Instance Format.}
The input for $R_KQ||C_{\max}$ consists of $N$ jobs and $M$ machines.
Each job has $K$ processing times $p_{j,k}$, given as a matrix $\mathbb{N}^{K\times N}$.
Each machine has one of the $K$ types and a speed~$s_i$, given as $(\{1,\ldots,K\}\times \mathbb{N})^M$.
We generalize the assumption of sorted inputs from $Q||C_{\max}$ and assume that for each machine type, we are also given lists of the jobs and machines, sorted by their processing time and speed respectively.
Such orders can be computed in time $\mathcal{O}(K\cdot N\log(N))$, same as $Q||C_{\max}$ for $K=1$.

\subsection{Preprocessing}
We extended our preprocessing for $Q||C_{\max}$, resulting in job counts $\eta_{j}$, machine counts $\mu_{i,k}$ and a makespan guess $T$.
Again, if $M>N$ for any machine type, we simply discard the slowest machines, hence $M\leq K\cdot N$.
A $N$-approximation to the optimal makespan is found by assigning each job to the machine where it has the shortest completion time.
This approach gives the upper bound $\UB\coloneqq N\cdot\max_{j}\min_{k}p_{j,k}/s_{k,\max}$ (a direct generalization of $N\cdot p_{\max}/s_{\max}$ in $Q||C_{\max}$).
Any processing time $p_{j,k}$ larger than $\UB\cdot s_{k,\max}$ can then be set to $\infty$.
For every machine class, a variant of \hyperref[QCmax:para:Step1]{\textbf{Step 1}} in $Q||C_{\max}$ can be performed, that removes negligible machines and negligible jobs.
More concretely, for each machine class~$k$, jobs are assigned to a fastest machine $s_{k,\max}$, if they are smaller than $\delta\UB s_{k,\max}/N^2$, and machines are dropped if they are slower than $\delta s_{k,\max}/N$.
Hence, we have a ratio of $p_{k,\max}/p_{k,\min}\leq N^2/\delta$.
All of these steps can be performed in time $\mathcal{O}(K\cdot N)$.

By repeating \hyperref[QCmax:para:Step2]{\textbf{Step 2}} for all machine types, we round all inputs in total time $\mathcal{O}(K\cdot(N+1/\delta\log(1/\delta \cdot N)))$.
Afterwards, there are $\tilde{\tau}\coloneqq \log_{1+\delta}(1/\delta \cdot N^2)\in\mathcal{O}(1/\delta \log(1/\delta \cdot N))$ different job speeds left for any machine type.
Hence, there can be at most $\tilde{\tau}^K$ different job types in total.
We count the number of jobs of type $j$ for $j = 1,\ldots,\tau^K$ as $\tilde{\eta}_j$, in time $\mathcal{O}(K\cdot N+\tilde{\tau}^K)$.
For all machine types $k$, the machine speed counts $\tilde{\mu}_{i,k}$ behave as before.
We perform binary search for the makespan as in \hyperref[QCmax:para:Step3]{\textbf{Step 3}} to reduce the optimization into a feasibility problem.
For any machine type, the rounded values $b_r$ can be set up just as before in \hyperref[QCmax:para:Step4]{\textbf{Step 4}}.
From this, we obtain the numbers of machines $\mu_{i,k}$ and jobs $\eta_j$ of each type.
By a similar argument, all preconditioning takes time $\mathcal{O}(1/\delta^{6\cdot K})+\mathcal{O}(K\cdot N)$.

\subsection{Solving an MILP Formulation}
After our preconditioning, we modify the MILP used by Jansen and Maack to accommodate our rounding scheme.
Let $\mathcal{J}_{i,k}\coloneqq \{j \in \{1,\ldots,{\tau^K}\}\mid b_{j_k}=b_i\}$ be the set of indexes of jobs with rounded processing time $b_i$ on machines of type $k$.
Our MILP-formulation for $R_KQ||C_{\max}$ is:
\begin{align}
  \sum_{i'=1}^{\tau}\sum_{\bm{\gamma}\in\mathcal{C}_{i'}}\gamma_i\cdot x_{i',\bm{\gamma},k}-\sum_{j\in \mathcal{J}_{i,k}}y_{j,k}&\geq\sum_{\bm{\gamma}\in\mathcal{C}_i}x_{i,\bm{\gamma},k}-\mu_{i,k}\geq 0&&\for\;i = 1,\ldots,\tau;\;k = 1,\ldots,K\label{uniform-MILP:eq1}\\
  \sum_{k=1}^K y_{j,k}&\geq\eta_j&&\for\;j = 1,\ldots,{\tau^K}\label{uniform-MILP:eq2}\\
  x_{i,\bm{\gamma},k}&\geq0&&\for\;i = 1,\ldots,\tau;\;\bm{\gamma}\in\mathcal{C}_i;\;k = 1,\ldots, K\nonumber\\
  x_{i,\bm{\gamma},k}&\in\mathbb{Z}_{\geq 0}&&\for\;i = 1,\ldots,L;\;\bm{\gamma}\in\mathcal{C}_i;\;k = 1,\ldots, K\nonumber\\
  y_{j,k}&\geq0&&\for\;j = 1,\ldots,\tau;\;k = 1,\ldots, K\enspace.\nonumber\tag{uniform-MILP}\label{uniform-MILP}
\end{align}

We maintain $x$ for the configuration variables, adding an index for the machine class.
The assignment variables $y$ determine the machine class where a job is scheduled.
For the uniform part, described by the constraints \eqref{uniform-MILP:eq1}, we no longer directly give job amounts $\eta_j$, but instead determine the \emph{number of jobs} assigned to a slot size.
Constraints \eqref{uniform-MILP:eq2} ensure that all jobs are assigned to some machine type.

Each variable $y_{j,k}$ only occurs in two constraints, with coefficients $\pm1$, and does therefore not affect the maximal column norm $A_{\max}$.

By \autoref{lem:MILP:support}, if there is a feasible solution for the integral $x_{i,\gamma,k}$ configuration variables, then there is also one with support size $\mathcal{O}(K\cdot1/\delta\log(1/\delta)\log(\log(1/\delta)))$.
There are $K\cdot2^{\mathcal{O}(\log^2(1/\delta))}$ integral configuration variables in total.
Using \autoref{lem:MILP:support-alg}, we find a feasible solution to \eqref{uniform-MILP} in time
\begin{align*}
2^{\mathcal{O}(K\log(K)\cdot 1/\delta\log^3(1/\delta)\log(\log(1/\delta)))}+\mathcal{O}(K\cdot N) \enspace .
\end{align*}

\subsection{Constructing a Schedule}
Now, we have a feasible solution $(\bm{\hat{x}},\bm{\hat{y}})$ to \eqref{uniform-MILP}, from which we need to construct a schedule.
Fixing some solution $\bm{\hat{x}}$ leaves only the assignment variables in the following linear program with constraint matrix $A_{x}$:
\begin{align}
  -\sum_{j\in\mathcal{J}_{i,k}}y_{j,k}\geq c_{i,k}(\bm{\hat{x}})\quad\text{and}\quad\sum_{k=1}^K y_{j,k}\geq \eta_j\qquad y_{j,k}\geq 0\enspace.\label{R-uniform-LP}\tag{R-uniform-LP($\bm{\hat{x}}$)}
\end{align}
As shown by Hoffman and Gale~\cite{HellerT1956}, any matrix, where each column vector has at most the $2$ non-zero opposite sign entries $1$ and $-1$, is totally unimodular.
This is the case for the constraint matrix $A_x$ of \eqref{R-uniform-LP}, as each variable occurs at most once with coefficient $-1$ in constraints \eqref{uniform-MILP:eq1} and once with coefficient $1$ in constraints \eqref{uniform-MILP:eq2}.
Hence, $A_x$ is totally unimodular and any vertex solution is already integral.
For any machine type $k$, we therefore have an integral number of jobs of each size assigned to it and a compatible feasible solution $\bm{\hat{x}_k}$.
Using the techniques of \autoref{sec:fasterSched}, we can construct a schedule $\sigma_k$ for each machine type in time $2^{\mathcal{O}(1/\delta\log^2(1/\delta))}+\mathcal{O}(N)$.
Together, these give a complete schedule $\sigma \coloneqq \bigcup_k\sigma_k$ in time $2^{K\cdot1/\delta\log^2(1/\delta))}+\mathcal{O}(K\cdot N)$, thus proving the following theorem.
\Marten*

\bibliography{References}

\end{document}